\documentclass[11pt,a4paper]{article}
\usepackage{amsmath,amssymb,amsthm,mathtools}
\usepackage[margin=1in]{geometry}
\usepackage{enumitem}
\usepackage{hyperref}

\theoremstyle{plain}
\newtheorem{theorem}{Theorem}[section]
\newtheorem{proposition}[theorem]{Proposition}
\newtheorem{lemma}[theorem]{Lemma}
\newtheorem{corollary}[theorem]{Corollary}
\theoremstyle{definition}
\newtheorem{definition}[theorem]{Definition}
\newtheorem{remark}[theorem]{Remark}

\newtheorem{hypothesis}{Hypothesis}

\DeclareMathOperator{\Res}{Res}
\DeclareMathOperator{\Log}{Log}
\newcommand{\bc}{\mathbf{c}}
\newcommand{\bm}{\mathbf{m}}
\newcommand{\bz}{\mathbf{z}}
\newcommand{\bu}{\mathbf{u}}
\newcommand{\bB}{\mathbf{B}}
\newcommand{\bb}{\mathbf{b}}
\newcommand{\R}{\mathbb{R}}
\newcommand{\C}{\mathbb{C}}
\newcommand{\Z}{\mathbb{Z}}
\newcommand{\N}{\mathbb{N}}
\newcommand{\hB}{\widehat{B}}
\newcommand{\Sm}{\mathcal{S}}
\newcommand{\D}{\mathcal{D}}
\newcommand{\Poi}{\mathrm{Poi}}
\newcommand{\eps}{\varepsilon}
\newcommand{\vth}{\vartheta}

\title{\bf A summation rule for the four--coupling beta function 
of four--dimensional nonplanar Euclidean scalar $\varphi^{4}$ theory
under factorial coefficient bounds, and a domain on which the summed beta function is well defined}
\author{
Puyu Zeng \\
\small College of Cryptology and Cyber Science,\\
\small Nankai University, China
}
\date{}

\begin{document}
\maketitle

\begin{abstract}
Let $\bc=(\lambda,\alpha,\mu,\nu)\in\R^{4}$ denote the four running couplings of four--dimensional
nonplanar Euclidean scalar $\varphi^{4}$ theory and let
\[
B(\bc)=\sum_{p\ge 2}\ \sum_{\substack{\bm\in\Z_{+}^{4}\\ |\bm|=p}}\beta(\bm)\,\bc^{\bm}
\]
be the formal beta--function series. Assuming, as the problem prescribes, the factorial bounds
$|\beta(\bm)|\le (p-1)!\,C^{p-1}$ for $|\bm|=p$ --- a hypothesis on the coefficients, which we do not
attempt to establish here for any particular model --- we construct an explicit summation rule
$\Sm$ --- a cut Borel--Laplace transform whose cut is placed at a scale--covariant, real--analytic
gauge --- and we prove that it assigns to $B$ a well--defined real number at every point of an
explicit nonempty open domain $\D\subset\R^{4}$ (indeed at every point of $\R^{4}$). We prove that
$\Sm B$ is linear and real, is real--analytic on $\R^{4}\setminus\{0\}$, is $C^{\infty}$ on $\R^{4}$
with Taylor series at the origin equal to $B$, and is Gevrey--$1$ asymptotic to $B$ to all orders
with an explicit two--term remainder bound; at optimal truncation the remainder is
$O\!\big(e^{-\eta/(C\|\bc\|_{\infty})}\big)$ with $\eta$ arbitrarily close to $1$. We show that
$\Sm B$ reproduces the classical Borel sum, and the ordinary sum of a convergent series, up to
errors of the same exponentially small size. It does \emph{not} reproduce them exactly, and we
prove that this is unavoidable: no linear rule that is well defined on the whole admissible class
can be regular in the sense of classical summability theory (Theorem \ref{thm:dichotomy}), because
regularity forces all weights to equal $1$ and the resulting series diverges. Finally we prove that all of this is optimal: under
the stated hypothesis the series diverges at every $\bc\ne 0$ for admissible coefficients, its Borel
transform may have the circle $|\tau|=1/(C\|\bc\|_{\infty})$ as a natural boundary (so that Borel,
lateral, median and accelero--summation are undefined on the class), and there exist two functions
that are indistinguishable at the level of the hypothesis yet differ by exactly
$(2\pi/C)e^{-1/(C\|\bc\|_{\infty})}$. All constants are explicit and all arguments are
self--contained.
\end{abstract}

\tableofcontents

\section{Introduction, hypotheses, and statement of the results}

\subsection{The setting}

In the perturbative renormalisation group of four--dimensional Euclidean scalar
$\varphi^{4}$ theory one is led, after renormalisation, to a finite set of running couplings.
For background on renormalisation in noncommutative Euclidean $\varphi^{4}$ theory, see
\cite{GrosseWulkenhaar2005}.
For the nonplanar theory the relevant set consists of four real parameters, which we assemble into
a single vector
\[
\bc=(c_{1},c_{2},c_{3},c_{4})=(\lambda,\alpha,\mu,\nu)\in\R^{4},
\]
and the beta function of any one of them is produced by the loop expansion as a
\emph{formal power series in the four couplings jointly}, with no constant and no linear term
(the flow has the free theory as a fixed point, and the leading contribution is one loop, i.e.
quadratic):
\begin{equation}\label{eq:Bdef}
B(\bc)\;=\;\sum_{p= 2}^{\infty}\;\sum_{\substack{\bm\in\Z_{+}^{4}\\ |\bm|=p}}\beta(\bm)\,\bc^{\bm},
\qquad
\bc^{\bm}:=\prod_{i=1}^{4}c_{i}^{m_{i}},\qquad |\bm|:=\sum_{i=1}^{4}m_{i}.
\end{equation}
Here $\Z_{+}=\{0,1,2,\dots\}$; if one prefers the convention $\Z_{+}=\{1,2,\dots\}$ then all
statements below hold a fortiori, since that case is recovered by setting $\beta(\bm)=0$ whenever
some $m_{i}=0$. The series \eqref{eq:Bdef} is \emph{formal}: it is a rule assigning a real number
$\beta(\bm)$ to each multi--index, nothing more. The analytic input we \emph{assume} about the size of these
numbers --- the shape of bound that a constructive multiscale or loop--vertex expansion is designed
to produce --- is a factorial bound, and it is the sole analytic input we shall use.

\begin{hypothesis}\label{hyp:H}
There is a constant $C>0$ such that
\[
|\beta(\bm)|\;\le\;(p-1)!\;C^{\,p-1}\qquad\text{whenever } |\bm|=p\ge 2 .
\]
\end{hypothesis}

A coefficient family $\beta$ satisfying Hypothesis \ref{hyp:H} will be called \emph{admissible},
and the set of admissible families is denoted $\mathcal{A}_{C}$. (The word ``admissible'' is used in
one other, clearly distinguished, sense --- for the parameter pairs $(k,\vth)$ of
Definition \ref{def:admissible}.)

Throughout, Hypothesis \ref{hyp:H} is a standing assumption and \emph{nothing else is assumed}.
We emphasise this because it is exactly the situation in which the problem is posed: one is handed a
formal series together with factorial bounds on its coefficients, and one must decide what number,
if any, the series denotes. We stress also that Hypothesis \ref{hyp:H} is a hypothesis about the
coefficients; we do not claim here to derive it for any particular field theory, and no step below
uses any property of the model beyond \eqref{eq:Bdef} and Hypothesis \ref{hyp:H}.

Two features of \eqref{eq:Bdef} matter and will be used constantly. First, the series is
\emph{graded}: its homogeneous parts
\begin{equation}\label{eq:bp}
b_{p}(\bc):=\sum_{|\bm|=p}\beta(\bm)\,\bc^{\bm}
\end{equation}
are honest polynomials, homogeneous of degree $p$, so the divergence of \eqref{eq:Bdef} lives
entirely in the grading direction $\bc\mapsto s\bc$; there is nothing to resum \emph{inside} a
homogeneous piece. Second, the factorial in Hypothesis \ref{hyp:H} is attached to the
\emph{total} degree $p=|\bm|$, not to the individual $m_{i}$; this is the only shape of hypothesis
compatible with the grading, and it is what makes a one--parameter (radial) Borel transform the
natural object even though there are four couplings.

\subsection{What has to be proved, and what cannot be true}

Hypothesis \ref{hyp:H} is an \emph{upper} bound, so the worst case is realised inside the
admissible class, and the worst case is bad: we prove in Theorem \ref{thm:divergence} that there is
an admissible coefficient family for which the terms $b_{p}(\bc)$ are unbounded \emph{for every}
$\bc\ne 0$, so the series \eqref{eq:Bdef} converges nowhere except at the origin. A summation rule
is therefore not a device for evaluating a convergent object; it is a definite procedure which must
manufacture a number out of divergent data, and the only meaningful requirements one may impose are:

\begin{enumerate}[label=(R\arabic*),leftmargin=2.2\parindent]
\item\label{R1} \emph{Well--definedness}: the rule produces a finite real number at each point of an
      explicit nonempty domain $\D\subseteq\R^{4}$.
\item\label{R2} \emph{Faithfulness}: the number produced is asymptotic to the formal series to all
      orders as $\bc\to 0$, with quantitative (Gevrey--$1$) remainder bounds~\cite{Balser2000,Costin2009}; in particular the
      one--loop term is reproduced exactly.
\item\label{R3} \emph{Structure}: the rule is linear in the coefficient family, real, and the summed
      function is as regular as such a function can be.
\item\label{R4} \emph{Consistency}: whenever the classical Borel sum, or the ordinary sum, exists,
      the rule reproduces it up to an error beyond all orders.
\end{enumerate}

One requirement that is \emph{deliberately absent} from this list is \emph{regularity}, i.e.
exactness on convergent series --- the property that makes classical Borel summation a summation
method in the strict sense~\cite{Hardy1949}. Its absence is forced: Theorem \ref{thm:dichotomy} shows that a linear
rule $\sum_{p}w_{p}b_{p}(\bc)$ is well defined on the whole class \emph{only if} its weights tend
to $0$ factorially fast, and is regular \emph{only if} all its weights equal $1$; the two are
mutually exclusive. Classical Borel summation is the case $w_{p}\equiv1$: regular, but --- by the
natural--boundary construction of Theorem \ref{thm:natural} --- undefined on this class. Any rule
that exists here must therefore pay a price on convergent series, and the rule constructed below
pays the smallest price that is visible: an error beyond all orders, exponentially small in
$1/(C\|\bc\|_{\infty})$. This trade--off, and not any particular estimate, is the conceptual heart
of the matter; it is discussed again in Remark \ref{rem:tradeoff}.

Beyond that, one cannot ask for more, and we prove that one cannot. Section \ref{sec:optimality} shows that
under Hypothesis \ref{hyp:H} the Borel transform can have a natural boundary on the circle of
convergence, so that \emph{no} rule of Borel, lateral, median, accelero--summation or
Nevanlinna--Sokal type~\cite{Watson1912,Sokal1980} is even defined on the admissible class (Theorem \ref{thm:natural} and
Corollary \ref{cor:noborel});
that no exact, coefficient--continuous linear rule exists (Theorem \ref{thm:nocanonical}); and that
two functions satisfying \ref{R2} with the explicit constants of \S\ref{sec:optimality} can differ by exactly
$(2\pi/C)\,e^{-1/(C\|\bc\|_{\infty})}$ (Theorem \ref{thm:ambiguity}); that difference is the exact
nonperturbative ambiguity of the problem. The family of rules constructed here is optimal relative
to that threshold in the following precise sense: each individual member has a guaranteed accuracy
$O\big(e^{-\vth_{k}/(C\|\bc\|_{\infty})}\big)$ with $\vth_{k}<1$, every exponent $\eta<1$ is realised
by some member, and no rule whatsoever --- however constructed --- can achieve accuracy
$o\big(e^{-1/(C\|\bc\|_{\infty})}\big)$, i.e. the exponent $1$ cannot be passed.

\subsection{The rule}

Write $\|\bc\|_{\infty}=\max_{i}|c_{i}|$, $\|\bc\|_{1}=\sum_{i}|c_{i}|$. The construction has three
ingredients.

\smallskip
\noindent\emph{(a) The Borel transform.} Because the factorial in Hypothesis \ref{hyp:H} is
$(p-1)!$ and is attached to the total degree, the correct transform is
\begin{equation}\label{eq:borel-intro}
\hB(\bz):=\sum_{p\ge 2}\frac{b_{p}(\bz)}{(p-1)!},\qquad \bz\in\C^{4},
\end{equation}
which we prove (Proposition \ref{prop:borel}) to be holomorphic on the \emph{open polydisc}
$\Pi_{C}=\{\|\bz\|_{\infty}<1/C\}$, with an explicit majorant, and we prove that the polydisc
radius $1/C$ is sharp. Formally, since $\int_{0}^{\infty}e^{-\tau}\tau^{p-1}\,d\tau=(p-1)!$ and
$b_{p}(\tau\bc)=\tau^{p}b_{p}(\bc)$,
\[
\int_{0}^{\infty}e^{-\tau}\,\hB(\tau\bc)\,\frac{d\tau}{\tau}
=\sum_{p\ge2}\frac{b_{p}(\bc)}{(p-1)!}\int_{0}^{\infty}e^{-\tau}\tau^{p-1}d\tau=\sum_{p\ge 2}b_{p}(\bc)=B(\bc),
\]
which is the classical Borel--Laplace representation~\cite{Borel1899,Costin2009} of \eqref{eq:Bdef}. It is only formal: the ray
$\{\tau\bc:\tau>0\}$ leaves $\Pi_{C}$ at $\tau=1/(C\|\bc\|_{\infty})$, and \S\ref{sec:optimality}
shows that beyond that point the integrand need not exist in any sense.

\smallskip
\noindent\emph{(b) The cut.} We therefore integrate only as far as the guaranteed domain of
$\hB$ permits. The cut must be placed at a height proportional to $1/\|\bc\|$, since a fixed cut
destroys faithfulness at fixed order while an $N$--dependent one destroys linearity; and it must be
placed using a \emph{real--analytic}, positively homogeneous gauge, since $\|\cdot\|_{\infty}$ and
$\|\cdot\|_{1}$ are not differentiable and would spoil the regularity of the answer. Both
requirements are met by the $\ell^{2k}$ gauges
\[
\nu_{k}(\bc):=\Big(\sum_{i=1}^{4}c_{i}^{2k}\Big)^{1/(2k)},\qquad k\in\N,\ k\ge 1 ,
\]
which are real--analytic on $\R^{4}\setminus\{0\}$, positively homogeneous of degree $1$, and
satisfy $\|\bc\|_{\infty}\le\nu_{k}(\bc)\le 4^{1/(2k)}\|\bc\|_{\infty}$; letting $k\to\infty$ they
approach the sharp gauge $\|\cdot\|_{\infty}$, which is what allows the summation to approach the
optimal exponential rate.

\smallskip
\noindent\emph{(c) The rule.} With $\vth\in(0,1)$ a cut fraction and
$T(\bc)=\vth/(C\nu_{k}(\bc))$,
\[
\boxed{\;\Sm B(\bc):=\int_{0}^{T(\bc)}e^{-\tau}\,\hB(\tau\bc)\,\frac{d\tau}{\tau}\quad(\bc\ne 0),
\qquad \Sm B(0):=0 . \;}
\]
Evaluating term by term (Proposition \ref{prop:rule}) exhibits the rule as a \emph{Poisson--averaged
truncation} of the divergent series,
\[
\Sm B(\bc)=\sum_{p\ge2}\Pr\big[\Poi(T(\bc))\ge p\big]\;b_{p}(\bc)
=\mathbb{E}\Big[\sum_{p=2}^{K}b_{p}(\bc)\Big],\qquad K\sim\Poi(T(\bc)),
\]
an absolutely convergent series. This is optimal truncation made smooth, linear and analytic.

\subsection{Statement of the main theorem}

\begin{definition}[Admissible parameters]\label{def:admissible}
A pair $(k,\vth)$ with $k\in\N$, $k\ge1$, and $\vth\in(0,1)$ is called \emph{admissible} if
\[
\vth^{*}:=\Big(1+\tfrac{1}{32k}\Big)\,2^{1/(2k)}\,\vth\;<\;1 .
\]
We also write $\vth_{k}:=\vth\,4^{-1/(2k)}$ and
\[
\Lambda(t):=\sum_{q\ge0}(q+1)^{3}t^{q}=\frac{1+4t+t^{2}}{(1-t)^{4}},
\qquad
G(t):=\sum_{p\ge2}p\binom{p+3}{3}t^{p-1}=4\big[(1-t)^{-5}-1\big] .
\]
\end{definition}

Admissible pairs exist for every $k$ (take $\vth$ small), and, as $k\to\infty$, admissible pairs
exist with $\vth_{k}=\vth\,4^{-1/(2k)}$ arbitrarily close to $1$; see Remark \ref{rem:ratelimit}.

\begin{theorem}[Main theorem]\label{thm:main}
Let $C>0$ and let $\beta:\Z_{+}^{4}\to\R$ satisfy Hypothesis \ref{hyp:H}. Let $(k,\vth)$ be an
admissible pair, let $\nu_{k}$, $T(\bc)=\vth/(C\nu_{k}(\bc))$ and $\Sm=\Sm_{k,\vth}$ be as above,
and set
\[
x:=C\|\bc\|_{\infty},\qquad
\D:=\big\{\bc\in\R^{4}:\ C\,\nu_{k}(\bc)<\vth\big\} .
\]
Then $\D$ is a nonempty, open, bounded, star--shaped neighbourhood of the origin containing the
$\ell^{1}$--ball $\{\|\bc\|_{1}<\vth/C\}$, and the following hold.
\begin{enumerate}[label=\textup{(\roman*)},leftmargin=2.2\parindent]
\item \textup{(Well--definedness.)} For every $\bc\in\R^{4}$ the defining integral converges
absolutely and equals the absolutely convergent series
$\Sm B(\bc)=\sum_{p\ge2}P\big(p,T(\bc)\big)b_{p}(\bc)$, where
$P(p,T)=\Pr[\Poi(T)\ge p]=\gamma(p,T)/\Gamma(p)$. The map $\beta\mapsto\Sm B$ is $\R$--linear,
$\Sm B$ is real--valued, and $|\Sm B(\bc)|\le 10\,\Lambda(\vth)\,C\|\bc\|_{\infty}^{2}$
on all of $\R^{4}$.
\item \textup{(Faithfulness.)} For every $\bc\ne0$ and every integer $N$ with $1\le N\le T(\bc)+1$,
\[
\Big|\Sm B(\bc)-\sum_{p=2}^{N}b_{p}(\bc)\Big|
\;\le\;\frac{\Lambda(\vth)}{C}\binom{N+4}{3}N!\,x^{N+1}
\;+\;\frac{G(\vth)}{C}\,x\,e^{-T(\bc)} ,
\]
the second term being absent when $N=1$. In particular $\Sm B$ is Gevrey--$1$ asymptotic to $B$ to
all orders as $\bc\to0$, uniformly in the direction $\bc/\|\bc\|$, and the one--loop coefficient is
reproduced exactly, in the asymptotic sense $\Sm B(\bc)=b_{2}(\bc)+O(\|\bc\|_{\infty}^{3})$ and in
the exact sense of \textup{(iv)} \textup{(}the degree--$2$ Taylor part of $\Sm B$ at the origin
\emph{is} $b_{2}$\textup{)}; at a fixed $\bc\ne0$, however, $\Sm B(\bc)\ne b_{2}(\bc)$ in general.
\item \textup{(Optimal truncation.)} With $N^{*}(\bc)=\lfloor T(\bc)\rfloor+1$ one has, for all
$\bc$ with $0<x\le\min\{\vth_{k},1-\vth\}$,
\[
\Big|\Sm B(\bc)-\sum_{p=2}^{N^{*}(\bc)}b_{p}(\bc)\Big|
\;\le\;\frac{G(\vth)+139\,\Lambda(\vth)\,x^{-5/2}}{C}\;e^{-\vth_{k}/x},
\]
hence for every $\eta<\vth_{k}$ there is $K=K(k,\vth,\eta)$ with left--hand side $\le (K/C)e^{-\eta/x}$.
\item \textup{(Regularity.)} $\Sm B$ is real--analytic on $\R^{4}\setminus\{0\}$ and $C^{\infty}$ on
$\R^{4}$, and its Taylor expansion at the origin is exactly $B$:
$\partial^{\bm}(\Sm B)(0)=\bm!\,\beta(\bm)$ for $|\bm|\ge2$, and $=0$ for $|\bm|\le1$. If $B$
diverges, $\Sm B$ is not real--analytic at the origin, and no function with these asymptotics can be.
\item \textup{(Consistency.)} If $\tau\mapsto\hB(\tau\bc)$ extends holomorphically to a neighbourhood
of $[0,\infty)$ with $|\hB(\tau\bc)|\le K_{0}\sigma\tau e^{\sigma\tau}$ there for some $\sigma<1$,
then the classical Borel sum $\mathcal{L}B(\bc)=\int_{0}^{\infty}e^{-\tau}\hB(\tau\bc)\,d\tau/\tau$
exists and $|\Sm B(\bc)-\mathcal{L}B(\bc)|\le \frac{K_{0}\sigma}{1-\sigma}e^{-(1-\sigma)T(\bc)}$.
If $|b_{p}(\bc)|\le K_{0}\sigma^{p}$ for all $p$ with $\sigma<1$ --- in particular if
$|\beta(\bm)|\le K_{0}\rho^{-|\bm|}$ and $\|\bc\|_{1}<\rho$ --- then $B(\bc)$ converges and
$|\Sm B(\bc)-B(\bc)|\le\frac{K_{0}\sigma}{1-\sigma}e^{-(1-\sigma)T(\bc)}$. Equality does not hold in
general --- not even for polynomials --- and by Theorem \ref{thm:dichotomy} no rule well defined on
the whole class can achieve it.
\item \textup{(Optimality of the exponent.)} There are an admissible coefficient family and a
direction for which the set $\mathcal{F}$ of real functions obeying the Gevrey--$1$ bounds
$\big|f(\bc)-\sum_{p=2}^{N}b_{p}(\bc)\big|\le\frac{e+2\pi}{C}(N+1)!\,x^{N+1}$ $(N\ge1)$ contains two
elements differing, at every point of that direction, by exactly $(2\pi/C)e^{-1/x}$. Consequently no
summation rule whatsoever --- however constructed --- can lie within $o(e^{-1/x})$ of every member
of $\mathcal{F}$: the exponent $\eta=1$ in the scale $C\|\bc\|_{\infty}$ is a barrier that no rule
can pass. By \textup{(iii)} and Remark \ref{rem:ratelimit} the family $\{\Sm_{k,\vth}\}$ realises
every exponent $\eta<1$; it is therefore rate--optimal, although no single member attains $\eta=1$.
\end{enumerate}
\end{theorem}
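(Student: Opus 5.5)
The plan is to reduce everything to one-variable estimates along the ray $\tau\mapsto\tau\bc$, using only the grading and Hypothesis \ref{hyp:H}.

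\emph{Geometry of $\D$.} Since $\nu_k$ is positively homogeneous and continuous, $\D$ is open, star-shaped about $0$, and bounded (because $\nu_k\ge\|\cdot\|_\infty$). The inclusion of the $\ell^1$-ball follows from $\nu_k\le\|\cdot\|_1$.

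\emph{Basic majorant.} There are at most $\binom{p+3}{3}$ multi-indices with $|\bm|=p$, so $|b_p(\bc)|\le\binom{p+3}{3}(p-1)!\,C^{p-1}\|\bc\|_\infty^p$. Hence $|b_p(\tau\bc)|/(p-1)!\le C^{-1}\binom{p+3}{3}(C\tau\|\bc\|_\infty)^p$.

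\emph{Step (i).} For $\tau\le T(\bc)$ we have $C\tau\|\bc\|_\infty\le\vth<1$, so the series for $\hB(\tau\bc)$ converges uniformly on the cut segment; this is Proposition \ref{prop:borel}. The integrand is $O(\tau)$ near $0$. Integration term by term is therefore justified and gives $\int_0^T e^{-\tau}\tau^{p-1}d\tau/(p-1)!=P(p,T)$, which is Proposition \ref{prop:rule}. Linearity and reality are immediate. For the global bound I would use $P(p,T)\le T^p/p!$ together with $\binom{p+3}{3}\le(p-1)^3\cdot$const, summing against $\Lambda(\vth)$ after factoring out $b_2$-size $C\|\bc\|_\infty^2$. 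Note that $T\|\bc\|_\infty\le\vth/C$ keeps everything uniform on all of $\R^4$.

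\emph{Step (ii).} Write $1-P(p,T)=\Pr[\Poi(T)<p]$. The difference $\Sm B-\sum_{p\le N}b_p$ then splits into
\[
-\sum_{p\le N}\Pr[\Poi(T)<p]\,b_p \;+\; \sum_{p>N}P(p,T)\,b_p .
\]
For the tail, $P(p,T)\le T^p/p!$ is the wrong size, so instead I bound $P(p,T)\le 1$ and use $(p-1)!\,x^p$ with the ratio controlled by $p\,x\le\vth$ near $p\approx N\le T+1$. This gives a geometric series starting at $N!\,x^{N+1}$ with factor $\Lambda(\vth)$. For the head, $\Pr[\Poi(T)<p]\le e^{-T}T^{p-1}\cdot$(sum) and $|b_p|\le\binom{p+3}{3}(p-1)!\,C^{p-1}\|\bc\|^p_\infty$. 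Together with $T x\le\vth$ this produces $G(\vth)\,x\,e^{-T}/C$.

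\emph{Step (iii).} Choose $N=N^*$. Apply Stirling to $N!\,x^{N+1}$ with $N\approx\vth/(C\nu_k x')$, and use $\nu_k\le4^{1/(2k)}\|\bc\|_\infty$ to get $T\ge\vth_k/x$. The factor $x^{-5/2}$ comes from the binomial times $\sqrt N$. The admissibility condition $\vth^*<1$ and the range $x\le 1-\vth$ should absorb the correction from $\lfloor\cdot\rfloor$; the explicit constant $139$ is bookkeeping.

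\emph{Step (iv).} Real-analyticity off $0$ follows from the series in (i): $P(p,T(\bc))$ is analytic in $\bc$ since $\nu_k$ is. Locally uniform convergence in a complex neighbourhood needs $|T|$ and $\Re$-bounds on a small complexification, which is where the factor $2^{1/(2k)}(1+1/(32k))$ in $\vth^*$ should enter. The $C^\infty$ property at $0$ and the Taylor coefficients come from (ii) applied to derivatives. Rather than differentiating, I would show that $\Sm B-\sum_{p\le N}b_p$ is smooth off $0$ with derivatives $O(\|\bc\|^{N+1-|\alpha|})$, by Cauchy estimates on the complexified bound, and then apply a standard lemma. Non-analyticity when $B$ diverges follows because an analytic function's Taylor series converges.

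\emph{Step (v).} Write $\Sm B-\mathcal L B=-\int_T^\infty e^{-\tau}\hB(\tau\bc)\,d\tau/\tau$ and bound it by $K_0\sigma\int_T^\infty e^{-(1-\sigma)\tau}d\tau$. The convergent case reduces to this via $\hB(\tau\bc)=\sum b_p\tau^p/(p-1)!\le K_0\sigma\tau e^{\sigma\tau}$.

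\emph{Step (vi).} Take $\beta$ supported on a single direction with $b_p(\bc)=(p-1)!\,x^p/C$ essentially. Set $f_\pm$ to be the lateral Laplace transforms of $\tau\mapsto\tau x/(C(1-\tau x))$. Their difference is $2\pi i\Res=(2\pi/C)e^{-1/x}$ up to normalisation; take the real part and the principal value plus or minus half of this. Gevrey bounds follow from integrating along rays at angle $\pm\pi/4$, giving the constants $e+2\pi$. The impossibility statement then follows from the triangle inequality.

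The main obstacle will be (iv): controlling the complexified $T(\bc)$ so that the series stays locally uniformly convergent and derivatives inherit Gevrey bounds.
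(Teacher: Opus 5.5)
Your architecture matches the paper's: Poisson weights by termwise integration, a head/tail split of $\Sm B-\sum_{p\le N}b_p$, a complexified cut with Cauchy estimates and a flatness lemma for (iv), and lateral Laplace integrals for (vi). However, the key estimate, the tail in Step (ii), fails as you describe it.

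With $P(p,T)\le1$ the tail is majorised by $\sum_{p>N}\binom{p+3}{3}(p-1)!\,C^{p-1}\|\bc\|_\infty^{p}$, which is the divergent series itself. The ratio of consecutive terms is about $px$: it is at most $\vth$ only for $p\le\vth/x$, and it exceeds $1$ once $p>1/x$. The other bound, $P(p,T)\le T^{p}/p!$, gives terms $\frac{1}{pC}\binom{p+3}{3}(C\|\bc\|_\infty T)^{p}$ with $C\|\bc\|_\infty T\in[\vth_k,\vth]$. That bound does not even tend to $0$ as $\bc\to0$, so the asymptotics is lost.

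What is missing is the interpolating bound from the integral form of $P$. For $p>N$ and $0\le\tau\le T$ one has $\tau^{p-1}\le\tau^{N}T^{p-1-N}$, hence $P(p,T)\le N!\,T^{p-1-N}/(p-1)!$ and
\[
P(p,T)\,|b_{p}(\bc)|\ \le\ N!\binom{p+3}{3}\|\bc\|_\infty\,(C\|\bc\|_\infty)^{N}\,\big(C\|\bc\|_\infty T\big)^{p-1-N}.
\]
Here $C\|\bc\|_\infty T\le\vth$ makes the $p$-sum geometric, and Lemma~\ref{lem:tail}(c) gives exactly $\frac{\Lambda(\vth)}{C}\binom{N+4}{3}N!\,x^{N+1}$. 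This is the paper's argument in another form: sum the tail of $\hB(\tau\bc)$ in the Borel plane on the cut segment, then extend the Laplace integral of $\tau^{N}e^{-\tau}$ to $[0,\infty)$.

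The same defect affects your global bound in (i). There $P\le T^{p}/p!$ leaves a factor $T^{2}$ that cancels $\|\bc\|_\infty^{2}$, giving only the constant $\frac1C\sum_{p}\binom{p+3}{3}\vth^{p}/p$. The case $N=1$, namely $P(p,T)\le T^{p-2}/(p-1)!$, is what gives $10\Lambda(\vth)C\|\bc\|_\infty^{2}$. Since (iii) is (ii) at $N=N^{*}$, and your route to (iv) needs a complex version of (ii), the gap propagates to both. In that complex version the head additionally requires rotating the Laplace ray, using $\Re T(\bz)\ge\kappa_{0}|T(\bz)|$.

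In (vi), integrating along fixed rays at angle $\pm\pi/4$ does not deliver the stated class $\mathcal F$. On such a ray $|e^{-\tau}|=e^{-s/\sqrt2}$ and $|1-x\tau|\ge1/\sqrt2$, so you only get $|R_{N}|\le\frac{\sqrt2}{C}N!\,(\sqrt2\,x)^{N+1}$. That is Gevrey constant $\sqrt2\,C$ instead of $C$, and the bound exceeds $\frac{e+2\pi}{C}(N+1)!\,x^{N+1}$ for all large $N$. The paper lets the angle depend on $N$, taking $\eps_{N}=\arccos(e^{-1/(N+1)})$. Then $(\cos\eps_{N})^{-(N+1)}=e$, while $1/\sin\eps_{N}\le\sqrt{N+1}$ is absorbed into $(N+1)!$. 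This keeps the Gevrey constant exactly $C$ and the barrier at exponent $1$ in that scale.

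You also need $e^{-1/x}\le(N+1)!\,x^{N+1}$ to keep the shifted functions in $\mathcal F$; this follows from Lemma~\ref{lem:leastterm} when $x\le1$ and is trivial otherwise.

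Smaller points:
- The Laplace integrand is $x^{2}\tau/(C(1-x\tau))$, not $x\tau/(C(1-x\tau))$.
- Admissibility ($\vth^{*}<1$) plays no role in (iii), which uses only $N^{*}x\le\vth+x\le1$ and $N^{*}\ge T\ge\vth_{k}/x$.
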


\begin{remark}\label{rem:ratelimit}
For $k\ge1$ admissibility holds whenever $\vth<\big[(1+\frac{1}{32k})2^{1/(2k)}\big]^{-1}$, and then
$\vth_{k}=\vth 4^{-1/(2k)}$ may be taken arbitrarily close to
$2^{-3/(2k)}\big/(1+\frac{1}{32k})$, which tends to $1$ as $k\to\infty$. Numerically this bound
equals $0.3428$ for $k=1$, $0.5855$ for $k=2$, $0.7651$ for $k=4$, $0.8747$ for $k=8$ and $0.9834$
for $k=64$ (to four decimals); these are suprema, approached but not attained, and they are not optimised
(Remark \ref{rem:notoptimal}). Thus every rate $\eta<1$ is realised by some admissible pair, while
by Theorem \ref{thm:main}(vi) no rule can achieve accuracy $o(e^{-1/x})$, so the exponent $1$
cannot be passed.
\end{remark}

\begin{remark}
Only part (iv) --- the complex--analytic statements --- uses the admissibility inequality of
Definition \ref{def:admissible}; parts (i), (ii), (iii) hold for every $k\ge1$ and every
$\vth\in(0,1)$.
\end{remark}

The beta function of a four--coupling theory is of course a vector field rather than a scalar. Since
$\Sm$ acts on one series at a time, one applies it to each component; the resulting statements about
the summed renormalisation--group flow are collected in \S\ref{sec:flow}: the flow
$\dot\bc=\Sm\bB(\bc)$ is well posed on $\D$, its trajectories cannot leave the perturbative region
faster than the one--loop rate, the summed beta function has no zero in the region where the
one--loop term dominates, and the exponentially small ambiguity of the rule does not amplify over
perturbative renormalisation--group times.

\subsection{Organisation}

\S\ref{sec:prelim} collects the multi--index combinatorics. \S\ref{sec:borel} constructs the Borel
transform and proves the sharp polydisc bound. \S\ref{sec:rule} defines the rule and proves
Theorem \ref{thm:main}(i). \S\ref{sec:master} proves the master estimate, Theorem \ref{thm:main}(ii).
\S\ref{sec:opt} proves Theorem \ref{thm:main}(iii) and describes $\D$. \S\ref{sec:reg} proves the
regularity statement Theorem \ref{thm:main}(iv). \S\ref{sec:consistency} proves
Theorem \ref{thm:main}(v). \S\ref{sec:optimality} contains the divergence, natural--boundary,
no--canonical--rule, regularity--dichotomy and sharp--ambiguity theorems, and hence
Theorem \ref{thm:main}(vi).
\S\ref{sec:flow} gives the renormalisation--group corollaries. Appendix \ref{app:elementary}
proves the elementary analytic facts used, so that the paper depends on no external result except
where explicitly flagged in Remark \ref{rem:hartogs}.

\section{Multi--index preliminaries}\label{sec:prelim}

Throughout, $\bm=(m_{1},\dots,m_{4})$ runs over $\Z_{+}^{4}=\{0,1,2,\dots\}^{4}$,
$|\bm|=m_{1}+\dots+m_{4}$, $\bm!=m_{1}!\cdots m_{4}!$, and for $\bz\in\C^{4}$,
$\bz^{\bm}=\prod_{i}z_{i}^{m_{i}}$ and $|\bz|:=(|z_{1}|,\dots,|z_{4}|)$. We write
$\|\bz\|_{\infty}=\max_{i}|z_{i}|$, $\|\bz\|_{1}=\sum_{i}|z_{i}|$, and for $y\in[0,\infty)^{4}$
\[
h_{p}(y):=\sum_{|\bm|=p}y^{\bm}
\]
for the complete homogeneous symmetric polynomial of degree $p$ in four variables. The homogeneous
parts $b_{p}$ of $B$ are defined by \eqref{eq:bp}.

\begin{lemma}\label{lem:count}
For every integer $p\ge0$:
\begin{enumerate}[label=\textup{(\alph*)},leftmargin=2.2\parindent]
\item $\#\{\bm\in\Z_{+}^{4}:|\bm|=p\}=\binom{p+3}{3}$;
\item for $y\in[0,\infty)^{4}$ and $0\le s<1/\|y\|_{\infty}$ one has
      $\sum_{p\ge0}h_{p}(y)s^{p}=\prod_{i=1}^{4}(1-y_{i}s)^{-1}$, all terms being nonnegative;
      in particular $\sum_{p\ge0}\binom{p+3}{3}s^{p}=(1-s)^{-4}$ for $0\le s<1$;
\item $\displaystyle \|y\|_{\infty}^{\,p}\;\le\;h_{p}(y)\;\le\;
      \min\Big\{\binom{p+3}{3}\|y\|_{\infty}^{\,p},\ \|y\|_{1}^{\,p}\Big\}$.
\end{enumerate}
\end{lemma}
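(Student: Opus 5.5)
The proof is elementary combinatorics, and I would treat the three parts in order.

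For (a), I would use stars and bars. Multi-indices $\bm\in\Z_{+}^{4}$ with $|\bm|=p$ correspond bijectively to arrangements of $p$ stars and $3$ bars. Hence their number is $\binom{p+3}{3}$.

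For (b), fix $y\in[0,\infty)^{4}$ and $0\le s<1/\|y\|_{\infty}$, so that $0\le y_i s<1$ for each $i$. Each factor then expands as the convergent geometric series $(1-y_i s)^{-1}=\sum_{m_i\ge0}(y_i s)^{m_i}$ with nonnegative terms. I would multiply the four series; since all terms are nonnegative, Tonelli (or the Cauchy product for absolutely convergent series) justifies the multiplication. I would then regroup the product by total degree $p=|\bm|$, which gives $\sum_{p\ge0}s^{p}\sum_{|\bm|=p}y^{\bm}=\sum_p h_p(y)s^p$. Specialising to $y=(1,1,1,1)$ gives $h_p(y)=\binom{p+3}{3}$ by (a), and hence the identity $\sum_{p\ge0}\binom{p+3}{3}s^p=(1-s)^{-4}$. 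When $\|y\|_\infty=0$ the condition on $s$ reads $s<\infty$, and the statement is trivial since $h_p=0$ for $p\ge1$.

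For (c), the lower bound comes from a single monomial. Choose an index $j$ with $y_j=\|y\|_\infty$ and take $\bm=p\,e_j$. This term of $h_p(y)$ equals $\|y\|_\infty^{p}$, and all other terms are nonnegative.

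There are two upper bounds in (c):
\begin{itemize}
\item Each monomial satisfies $y^{\bm}\le\|y\|_\infty^{|\bm|}$, and there are $\binom{p+3}{3}$ of them by (a). This gives $h_p(y)\le\binom{p+3}{3}\|y\|_\infty^{p}$.
\item The multinomial theorem gives $\|y\|_1^{p}=(y_1+\dots+y_4)^p=\sum_{|\bm|=p}\frac{p!}{\bm!}y^{\bm}$. Every multinomial coefficient satisfies $p!/\bm!\ge1$ and every term is nonnegative, so $h_p(y)\le\|y\|_1^p$.
\end{itemize}
Together these give the minimum.

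There is no real obstacle here. The only point needing care is the rearrangement in (b), which nonnegativity of all terms handles.
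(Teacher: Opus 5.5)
Your proof is correct and matches the paper's argument essentially step for step. The paper likewise uses stars and bars for (a), writing the bijection $\bm\mapsto\{m_{1}+1,\,m_{1}+m_{2}+2,\,m_{1}+m_{2}+m_{3}+3\}$ explicitly; for (b) it expands the product of nonnegative geometric series and regroups by total degree. For (c) it uses the single monomial $p\,e_{i_0}$, the count of monomials, and the fact that all multinomial coefficients are at least $1$, exactly as you do.
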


\begin{proof}
(a) Stars and bars: the map $\bm\mapsto\{m_{1}+1,m_{1}+m_{2}+2,m_{1}+m_{2}+m_{3}+3\}$ is a bijection
onto the $3$--element subsets of $\{1,\dots,p+3\}$.

(b) For $0\le s<1/\|y\|_{\infty}$ each factor is the sum of the nonnegative geometric series
$\sum_{m_{i}\ge0}(y_{i}s)^{m_{i}}$. The product of four absolutely convergent series may be expanded
and rearranged arbitrarily, and collecting the terms with $m_{1}+\dots+m_{4}=p$ gives $h_{p}(y)s^{p}$.
Taking $y=(1,1,1,1)$ and using (a) gives the last identity.

(c) The lower bound is the single term $\bm=p\,e_{i_{0}}$ with $y_{i_{0}}=\|y\|_{\infty}$; the first
upper bound follows since each of the $\binom{p+3}{3}$ monomials is at most $\|y\|_{\infty}^{p}$; the
second follows from the multinomial theorem,
$\|y\|_{1}^{p}=\sum_{|\bm|=p}\binom{p}{\bm}y^{\bm}\ge\sum_{|\bm|=p}y^{\bm}=h_{p}(y)$, because every
multinomial coefficient is at least $1$.
\end{proof}

\begin{lemma}\label{lem:bp}
Assume Hypothesis \ref{hyp:H}. Then for all $\bz\in\C^{4}$ and $p\ge2$,
\[
|b_{p}(\bz)|\;\le\;(p-1)!\,C^{p-1}h_{p}(|\bz|)\;\le\;(p-1)!\,C^{p-1}\binom{p+3}{3}\|\bz\|_{\infty}^{p},
\]
and also $|b_{p}(\bz)|\le (p-1)!C^{p-1}\|\bz\|_{1}^{p}$. All three bounds are attained: for
$\beta(\bm)\equiv(p-1)!C^{p-1}$ and $\bz\in[0,\infty)^{4}$ the first is an equality; the second is an
equality when in addition $|z_{1}|=\dots=|z_{4}|$; the third when at most one $z_{i}$ is nonzero.
\end{lemma}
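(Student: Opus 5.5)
The plan is a direct triangle inequality over the monomials of $b_{p}$, followed by Lemma \ref{lem:count}(c), with the equality cases checked one at a time. Since $b_{p}(\bz)=\sum_{|\bm|=p}\beta(\bm)\bz^{\bm}$ and $|\bz^{\bm}|=|\bz|^{\bm}$, Hypothesis \ref{hyp:H} gives
\[
|b_{p}(\bz)|\le\sum_{|\bm|=p}|\beta(\bm)|\,|\bz|^{\bm}\le(p-1)!\,C^{p-1}\sum_{|\bm|=p}|\bz|^{\bm}=(p-1)!\,C^{p-1}h_{p}(|\bz|).
\]
Applying the two upper bounds of Lemma \ref{lem:count}(c) to $y=|\bz|$ gives $h_{p}(|\bz|)\le\binom{p+3}{3}\|\bz\|_{\infty}^{p}$ and $h_{p}(|\bz|)\le\|\bz\|_{1}^{p}$. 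This yields the second and third bounds, since $\|\,|\bz|\,\|_{\infty}=\|\bz\|_{\infty}$ and $\|\,|\bz|\,\|_{1}=\|\bz\|_{1}$.

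For attainment, take $\beta(\bm)=(p-1)!C^{p-1}$ for every $|\bm|=p$; this family is admissible with equality in Hypothesis \ref{hyp:H}. Then $b_{p}(\bz)=(p-1)!C^{p-1}h_{p}(\bz)$.
\begin{enumerate}[label=(\alph*)]
\item If $\bz\in[0,\infty)^{4}$, then $\bz=|\bz|$ and all terms are nonnegative, so the first inequality is an equality.
\item If moreover $z_{1}=\dots=z_{4}=r\ge0$, every monomial equals $r^{p}$. Lemma \ref{lem:count}(a) then gives $h_{p}=\binom{p+3}{3}r^{p}$, so the second bound is also attained. Read literally, the statement also allows complex $\bz$ with only $|z_{1}|=\dots=|z_{4}|$, where phases could cancel. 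I will interpret the claim within the stated setting $\bz\in[0,\infty)^{4}$, or else choose $\beta(\bm)=(p-1)!C^{p-1}\overline{\bz^{\bm}}/|\bz^{\bm}|$ to align the phases.
\item If only $z_{i_{0}}\ne0$, say $z_{i_{0}}=r\ge0$, then $h_{p}(\bz)=r^{p}=\|\bz\|_{1}^{p}$. Every other monomial vanishes, so the third bound is attained. For complex $z_{i_{0}}$ the same phase-alignment choice of $\beta$ works.
\end{enumerate}

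There is no real obstacle. The only point needing care is the equality clauses for complex $\bz$, which are settled either by restricting to $[0,\infty)^{4}$ or by the phase-aligned choice of coefficients.
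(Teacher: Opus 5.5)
Your proof is correct and is the paper's argument (triangle inequality, Hypothesis~\ref{hyp:H}, then Lemma~\ref{lem:count}(c) with $y=|\bz|$) with the equality cases written out, and reading ``in addition'' as still requiring $\bz\in[0,\infty)^{4}$ is the intended interpretation. Two small caveats on your asides. First, the phase-aligned $\beta$ is complex-valued, so it lies outside the paper's real coefficient families, and it is not the family $\beta\equiv(p-1)!C^{p-1}$ that the statement names. Second, the third equality needs no alignment at all: with only one nonzero coordinate the single surviving monomial $\beta(p e_{i_0})z_{i_0}^{p}$ already has modulus $(p-1)!C^{p-1}\|\bz\|_{1}^{p}$, even for complex $z_{i_0}$.
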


\begin{proof}
Immediate from the triangle inequality, Hypothesis \ref{hyp:H} and Lemma \ref{lem:count}(c).
\end{proof}

\begin{lemma}[Tail lemma]\label{lem:tail}
Let $\Lambda(t)=\sum_{q\ge0}(q+1)^{3}t^{q}$ and $G(t)=\sum_{p\ge2}p\binom{p+3}{3}t^{p-1}$ for
$0\le t<1$. Then:
\begin{enumerate}[label=\textup{(\alph*)},leftmargin=2.2\parindent]
\item $\Lambda(t)=\dfrac{1+4t+t^{2}}{(1-t)^{4}}$ and $G(t)=4\big[(1-t)^{-5}-1\big]$;
\item for all integers $N,q\ge0$: $\dbinom{N+q+4}{3}\le\dbinom{N+4}{3}(1+q)^{3}$;
\item for all integers $N\ge0$ and all $0\le w\le t<1$:
\[
\sum_{p>N}\binom{p+3}{3}w^{\,p-1}\;\le\;\binom{N+4}{3}\Lambda(t)\,w^{N} .
\]
In particular (case $N=1$) $\sum_{p\ge2}\binom{p+3}{3}w^{p-1}\le 10\,\Lambda(t)\,w$.
\end{enumerate}
\end{lemma}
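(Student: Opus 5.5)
The three parts are elementary and I will prove them in order, (b) feeding into (c).

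For (a), I differentiate the geometric series. We have $\sum_{q\ge0}\binom{q+3}{3}t^{q}=(1-t)^{-4}$ (Lemma \ref{lem:count}(b)). The identity $(q+1)^{3}=6\binom{q+3}{3}-6\binom{q+2}{2}+\binom{q+1}{1}$ holds because the right side is a cubic polynomial in $q$ agreeing with the left at four values. Summing it against $t^{q}$ gives
\[
\Lambda(t)=6(1-t)^{-4}-6(1-t)^{-3}+(1-t)^{-2},
\]
and clearing denominators yields $\frac{6-6(1-t)+(1-t)^{2}}{(1-t)^{4}}=\frac{1+4t+t^{2}}{(1-t)^{4}}$. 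For $G$, I note $p\binom{p+3}{3}=4\binom{p+3}{4}$. Then
\[
G(t)=4\sum_{p\ge2}\binom{p+3}{4}t^{p-1}=4\sum_{r\ge1}\binom{r+4}{4}t^{r}=4\big[(1-t)^{-5}-1\big],
\]
using the generating function $\sum_{r\ge0}\binom{r+4}{4}t^{r}=(1-t)^{-5}$ (the five-variable analogue of Lemma \ref{lem:count}(b)) and removing the $r=0$ term.

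For (b), I write the ratio as a product:
\[
\frac{\binom{N+q+4}{3}}{\binom{N+4}{3}}=\prod_{j=2}^{4}\frac{N+q+j}{N+j}.
\]
Each factor satisfies $\frac{N+q+j}{N+j}=1+\frac{q}{N+j}\le 1+q$, because $N+j\ge2\ge1$. Multiplying the three factors gives the bound $(1+q)^{3}$.

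For (c), I put $p=N+1+q$ with $q\ge0$. Then
\[
\sum_{p>N}\binom{p+3}{3}w^{p-1}=w^{N}\sum_{q\ge0}\binom{N+q+4}{3}w^{q}\le w^{N}\binom{N+4}{3}\sum_{q\ge0}(q+1)^{3}w^{q},
\]
by (b). Since $\Lambda$ has nonnegative coefficients and $w\le t$, the last sum is $\Lambda(w)\le\Lambda(t)$. The case $N=1$ follows from $\binom{5}{3}=10$.

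The only step that is not immediate is choosing the binomial-basis expansion of $(q+1)^{3}$ in (a); everything else is routine.
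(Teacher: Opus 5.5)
Your proof is correct and essentially the paper's own: (b) and (c) are identical to it (the ratio as $\prod_{j=2}^{4}\bigl(1+\tfrac{q}{N+j}\bigr)\le(1+q)^{3}$, then the substitution $p=N+1+q$ and $\Lambda(w)\le\Lambda(t)$). The only difference is in (a), where you expand $(q+1)^{3}$ in the basis $\binom{q+j}{j}$ instead of the paper's $n^{3}=\binom n1+6\binom n2+6\binom n3$, and obtain $G$ from $p\binom{p+3}{3}=4\binom{p+3}{4}$ and $\sum_{r}\binom{r+4}{4}t^{r}=(1-t)^{-5}$ instead of differentiating $(1-t)^{-4}$ termwise; these are equivalent bookkeeping choices, and all your identities check out.
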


\begin{proof}
(a) From $n^{3}=\binom{n}{1}+6\binom{n}{2}+6\binom{n}{3}$ (check: $\binom n1+6\binom n2+6\binom n3
=n+3n(n-1)+n(n-1)(n-2)=n^{3}$) and $\sum_{n\ge0}\binom{n}{j}t^{n}=t^{j}(1-t)^{-j-1}$ we get
\[
\sum_{n\ge1}n^{3}t^{n}=\frac{t}{(1-t)^{2}}+\frac{6t^{2}}{(1-t)^{3}}+\frac{6t^{3}}{(1-t)^{4}}
=\frac{t(1-t)^{2}+6t^{2}(1-t)+6t^{3}}{(1-t)^{4}}=\frac{t+4t^{2}+t^{3}}{(1-t)^{4}},
\]
and dividing by $t$ gives $\Lambda$. For $G$, differentiate the identity
$\sum_{p\ge0}\binom{p+3}{3}t^{p}=(1-t)^{-4}$ of Lemma \ref{lem:count}(b) term by term (legitimate
inside the disc of convergence of a power series), obtaining
$\sum_{p\ge1}p\binom{p+3}{3}t^{p-1}=4(1-t)^{-5}$, and subtract the $p=1$ term $\binom{4}{3}=4$.

(b) $\displaystyle\frac{\binom{N+q+4}{3}}{\binom{N+4}{3}}
=\prod_{j=2}^{4}\frac{N+j+q}{N+j}=\prod_{j=2}^{4}\Big(1+\frac{q}{N+j}\Big)\le(1+q)^{3}$,
since $N+j\ge 2\ge1$ for $j\ge2$, $N\ge0$.

(c) Substituting $p=N+1+q$ and using (b) and $w\le t$,
\[
\sum_{p>N}\binom{p+3}{3}w^{p-1}=\sum_{q\ge0}\binom{N+q+4}{3}w^{N+q}
\le\binom{N+4}{3}w^{N}\sum_{q\ge0}(1+q)^{3}w^{q}\le\binom{N+4}{3}\Lambda(t)\,w^{N}. \qedhere
\]
\end{proof}

\section{The Borel transform: a sharp polydisc of holomorphy}\label{sec:borel}

\begin{definition}\label{def:borel}
The \emph{Borel transform} of the formal series \eqref{eq:Bdef} is the formal series
\[
\hB(\bz):=\sum_{p\ge2}\frac{b_{p}(\bz)}{(p-1)!}=\sum_{|\bm|\ge2}\frac{\beta(\bm)}{(|\bm|-1)!}\,\bz^{\bm},
\qquad \bz\in\C^{4}.
\]
\end{definition}

\begin{proposition}\label{prop:borel}
Assume Hypothesis \ref{hyp:H}, and let $\Pi_{C}:=\{\bz\in\C^{4}:\|\bz\|_{\infty}<1/C\}$. Then:
\begin{enumerate}[label=\textup{(\roman*)},leftmargin=2.2\parindent]
\item the series defining $\hB$ converges absolutely and uniformly on every compact subset of
$\Pi_{C}$, and $\hB$ is holomorphic on $\Pi_{C}$ and vanishes to order $2$ at the origin;
\item for $\bz\in\Pi_{C}$,
\[
|\hB(\bz)|\;\le\;\frac{1}{C}\Big[\prod_{i=1}^{4}\frac{1}{1-C|z_{i}|}-1-C\|\bz\|_{1}\Big];
\]
\item \textup{(radial form)} for $\bc\in\R^{4}$, $\tau\in\C$ and $0\le t<1$ with
$C\|\bc\|_{\infty}|\tau|\le t$,
\[
|\hB(\tau\bc)|\;\le\;10\,\Lambda(t)\,C\,\|\bc\|_{\infty}^{2}\,|\tau|^{2};
\]
\item the polydisc radius $1/C$ is sharp: for the admissible family $\beta(\bm)=(p-1)!C^{p-1}$ when
$\bm=p\,e_{1}$ and $\beta(\bm)=0$ otherwise one has $\hB(\bz)=Cz_{1}^{2}/(1-Cz_{1})$, which has a
pole at $z_{1}=1/C$.
\end{enumerate}
\end{proposition}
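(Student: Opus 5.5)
The plan is to obtain all four parts from one majorant series, using Lemma \ref{lem:bp} and Lemma \ref{lem:count}(b). For each multi-index, Hypothesis \ref{hyp:H} gives
\[
\Big|\frac{\beta(\bm)}{(|\bm|-1)!}\bz^{\bm}\Big|\le C^{p-1}|\bz|^{\bm},\qquad p=|\bm|.
\]
Summing over $|\bm|=p\ge2$, the series of absolute values is dominated by $\sum_{p\ge2}C^{p-1}h_p(|\bz|)$. By Lemma \ref{lem:count}(b) with $y=C|\bz|$ and $s=1$, we have $\sum_{p\ge0}h_p(C|\bz|)=\prod_i(1-C|z_i|)^{-1}$ whenever $C\|\bz\|_\infty<1$. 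Removing the $p=0$ term $1$ and the $p=1$ term $C\|\bz\|_1$, then dividing by $C$, gives exactly the bound in (ii).

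For (i), take a compact $K\subset\Pi_C$. It lies in a closed polydisc $\|\bz\|_\infty\le r<1/C$, and there the absolute terms are bounded termwise by those at $|\bz|=(r,r,r,r)$. That series converges by the above, so the Weierstrass M-test gives absolute and uniform convergence on $K$. Each partial sum is a polynomial, so $\hB$ is holomorphic on $\Pi_C$ as a locally uniform limit of holomorphic functions. (Holomorphy in each variable separately plus continuity suffices, or one argues directly with power series in several variables.) Vanishing to order $2$ holds because only monomials with $|\bm|\ge2$ occur.

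For (iii), rather than using (ii), I would apply Lemma \ref{lem:bp} radially. Writing $w:=C\|\bc\|_\infty|\tau|\le t<1$, we get
\[
\frac{|b_p(\tau\bc)|}{(p-1)!}\le C^{p-1}\binom{p+3}{3}\|\bc\|_\infty^p|\tau|^p=\|\bc\|_\infty|\tau|\binom{p+3}{3}w^{p-1}.
\]
Summing over $p\ge2$ and applying Lemma \ref{lem:tail}(c) with $N=1$ gives the bound $\|\bc\|_\infty|\tau|\cdot10\Lambda(t)w$. This equals $10\Lambda(t)C\|\bc\|_\infty^2|\tau|^2$. The condition $w\le t<1$ also places $\tau\bc$ in $\Pi_C$, so the series is indeed $\hB(\tau\bc)$.

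For (iv), I would first check admissibility: $|\beta(pe_1)|=(p-1)!C^{p-1}$ meets the bound with equality, and all other coefficients vanish. Then
\[
\hB(\bz)=\sum_{p\ge2}C^{p-1}z_1^p=Cz_1^2\sum_{q\ge0}(Cz_1)^q=\frac{Cz_1^2}{1-Cz_1}
\]
for $|Cz_1|<1$. This function has a pole at $z_1=1/C$, where the numerator $1/C\ne0$. So $\hB$ admits no holomorphic extension to any polydisc of radius greater than $1/C$, and the radius cannot be enlarged.

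None of the steps is a real obstacle. The only points needing care are the several-variable holomorphy in (i), which is standard via uniform convergence of polynomials, and making sure the factor $|\tau|^2$ in (iii) comes out correctly.
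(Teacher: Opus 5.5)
Your proposal is correct and follows the paper's proof essentially step for step. You use the same majorant $\sum_{p\ge2}C^{p-1}h_p(|\bz|)$, summed via Lemma \ref{lem:count}(b); you take $y=C|\bz|$, $s=1$ where the paper takes $y=|\bz|$, $s=C$, which is equivalent by homogeneity of $h_p$. Holomorphy comes from locally uniform convergence of polynomials, the radial bound (iii) from Lemma \ref{lem:bp} plus Lemma \ref{lem:tail}(c) with $N=1$, and (iv) from the explicit geometric series.

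Your termwise M-test over individual multi-indices gives absolute convergence of the full multiple series, slightly more than the paper records. Your observation that $w\le t<1$ places $\tau\bc$ in $\Pi_C$ also makes explicit a point the paper leaves implicit.
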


\begin{proof}
(i) and (ii). By Lemma \ref{lem:bp}, $|b_{p}(\bz)|/(p-1)!\le C^{p-1}h_{p}(|\bz|)$, so
\[
\sum_{p\ge2}\frac{|b_{p}(\bz)|}{(p-1)!}\le\sum_{p\ge2}C^{p-1}h_{p}(|\bz|)
=\frac{1}{C}\Big[\sum_{p\ge0}h_{p}(|\bz|)C^{p}-h_{0}-Ch_{1}(|\bz|)\Big]
=\frac{1}{C}\Big[\prod_{i}\frac{1}{1-C|z_{i}|}-1-C\|\bz\|_{1}\Big],
\]
by Lemma \ref{lem:count}(b) with $s=C$, which is legitimate precisely when $C\|\bz\|_{\infty}<1$.
This proves (ii) and also absolute convergence; the majorant is a continuous increasing function of
each $|z_{i}|$, so convergence is uniform on the compact subsets of $\Pi_{C}$ (each of which is
contained in a closed polydisc $\{\|\bz\|_{\infty}\le r\}$ with $r<1/C$). The partial sums are
polynomials, hence holomorphic; a locally uniform limit of holomorphic functions of several complex
variables is holomorphic (Lemma \ref{lem:weier}), so $\hB\in\mathcal{O}(\Pi_{C})$. Since the series
starts at $p=2$ and each $b_{p}$ is homogeneous of degree $p$, $\hB$ vanishes to order $2$ at $0$.

(iii) By Lemma \ref{lem:bp} and homogeneity, with $\mu:=\|\bc\|_{\infty}$ and $w:=C\mu|\tau|\le t$,
\[
|\hB(\tau\bc)|\le\sum_{p\ge2}C^{p-1}\binom{p+3}{3}\mu^{p}|\tau|^{p}
=\mu|\tau|\sum_{p\ge2}\binom{p+3}{3}w^{p-1}\le 10\,\Lambda(t)\,\mu|\tau|\,w
=10\Lambda(t)C\mu^{2}|\tau|^{2},
\]
using Lemma \ref{lem:tail}(c) with $N=1$.

(iv) The family satisfies Hypothesis \ref{hyp:H} with equality on its support, and
$\hB(\bz)=\sum_{p\ge2}C^{p-1}z_{1}^{p}=Cz_{1}^{2}/(1-Cz_{1})$ for $|z_1|<1/C$.
\end{proof}

\begin{remark}\label{rem:poleshape}
Proposition \ref{prop:borel} is the exact analytic content of Hypothesis \ref{hyp:H}: the hypothesis
is \emph{equivalent} to the statement that the coefficients of $\hB$ are dominated, term by term, by
those of the single--pole function $\sum_{p\ge2}C^{p-1}\zeta^{p}=C\zeta^{2}/(1-C\zeta)$. Hypothesis
\ref{hyp:H} says exactly: ``the Borel transform is majorised, coefficientwise, by a single simple
pole at distance $1/C$''. (This does not mean that the radial Borel transform itself has only a
simple pole: for $\beta(\bm)\equiv(|\bm|-1)!C^{|\bm|-1}$ and $\bc=(\mu,\mu,\mu,\mu)$ one finds
$\hB(\tau\bc)=\frac1C\big[(1-C\mu\tau)^{-4}-1-4C\mu\tau\big]$, a pole of order four.) It says nothing
whatsoever about what happens at or beyond that distance --- see \S\ref{sec:optimality}.
\end{remark}

\section{The summation rule}\label{sec:rule}

\begin{definition}[Gauge, cut, rule]\label{def:rule}
Let $k\in\N$, $k\ge1$, and $\vth\in(0,1)$. Define the \emph{gauge}
\[
\nu_{k}(\bc):=\Big(\sum_{i=1}^{4}c_{i}^{2k}\Big)^{1/(2k)}\qquad(\bc\in\R^{4}),
\]
the \emph{cut height} $T(\bc):=\dfrac{\vth}{C\,\nu_{k}(\bc)}$ for $\bc\ne0$ --- with the convention
$T(0):=+\infty$ and $P(p,+\infty):=1$, under which every formula below holds trivially at $\bc=0$,
both sides vanishing --- and the
\emph{cut Borel--Laplace sum}
\[
\Sm B(\bc)\;:=\;\int_{0}^{T(\bc)}e^{-\tau}\,\hB(\tau\bc)\,\frac{d\tau}{\tau}\quad(\bc\ne0),
\qquad \Sm B(0):=0 .
\]
\end{definition}

\begin{lemma}[Properties of the gauge]\label{lem:gauge}
$\nu_{k}$ is positively homogeneous of degree $1$, real--analytic on $\R^{4}\setminus\{0\}$, and
\[
\|\bc\|_{\infty}\;\le\;\nu_{k}(\bc)\;\le\;4^{1/(2k)}\|\bc\|_{\infty},
\qquad
\nu_{k}(\bc)\le\|\bc\|_{2}\le\|\bc\|_{1}.
\]
\end{lemma}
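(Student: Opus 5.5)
The plan is to verify the four assertions of Lemma \ref{lem:gauge} in turn, each by an elementary direct argument, writing $S(\bc):=\sum_{i=1}^{4}c_{i}^{2k}$ so that $\nu_{k}=S^{1/(2k)}$.

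\emph{Homogeneity.} For $s>0$ one has $S(s\bc)=s^{2k}S(\bc)$. Taking the positive $(2k)$-th root gives $\nu_{k}(s\bc)=s\,\nu_{k}(\bc)$. (It even holds for all real $s$ with $|s|$ in place of $s$, since $2k$ is even.)

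\emph{Real-analyticity.} $S$ is a polynomial and vanishes only at $\bc=0$, since each term $c_{i}^{2k}\ge0$. On $\R^{4}\setminus\{0\}$, $\nu_{k}$ is therefore the composition of the polynomial $S$, which takes values in $(0,\infty)$, with $t\mapsto t^{1/(2k)}=\exp(\log t/(2k))$, which is real-analytic on $(0,\infty)$. A composition of real-analytic maps is real-analytic; if a self-contained argument is wanted, I would appeal to the elementary facts collected in Appendix \ref{app:elementary}, or simply note that $t^{1/(2k)}$ has a convergent binomial expansion about every $t_{0}>0$.

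\emph{Comparison with $\|\cdot\|_{\infty}$.} Let $\mu=\|\bc\|_{\infty}$. The largest single term gives $S\ge\mu^{2k}$, and bounding each of the four terms by $\mu^{2k}$ gives $S\le4\mu^{2k}$. Taking $(2k)$-th roots, which is monotone, yields $\mu\le\nu_{k}(\bc)\le4^{1/(2k)}\mu$.

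\emph{Comparison with $\|\cdot\|_{2}$ and $\|\cdot\|_{1}$.} Both follow from the monotonicity of $\ell^{p}$ norms in $p$ on $\R^{4}$. Concretely, for $\bc\ne0$ put $a_{i}=c_{i}^{2}/\|\bc\|_{2}^{2}\in[0,1]$, so that $\sum_{i}a_{i}=1$. Then $\sum_{i}a_{i}^{k}\le\sum_{i}a_{i}=1$, which gives $S\le\|\bc\|_{2}^{2k}$ and hence $\nu_{k}\le\|\bc\|_{2}$. Similarly, $\|\bc\|_{1}^{2}=\sum_{i}c_{i}^{2}+2\sum_{i<j}|c_{i}c_{j}|\ge\|\bc\|_{2}^{2}$.

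No step is a real obstacle. The only point needing care is the analyticity claim. Its content is that the root is applied only where $S>0$; at the origin, $\nu_{k}$ is Lipschitz but genuinely not analytic (already for $k=1$ it is the Euclidean norm), which is why the lemma excludes $0$.
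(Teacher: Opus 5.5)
Your proposal is correct and follows the paper's proof essentially step for step. The ingredients are the same: analyticity from composing the positive polynomial $\sum_i c_i^{2k}$ with $s\mapsto s^{1/(2k)}$ on $(0,\infty)$, the sandwich $\|\bc\|_{\infty}^{2k}\le\sum_i c_i^{2k}\le4\|\bc\|_{\infty}^{2k}$, and monotonicity of $\ell^{q}$ norms. For the last of these you give an explicit normalisation argument where the paper only cites the fact.

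One small correction: Appendix \ref{app:elementary} contains no composition lemma for real--analytic maps. The cleanest self--contained justification is the one the paper itself uses in Lemma \ref{lem:cxgauge}: $\exp\big(\frac{1}{2k}\Log P_{k}(\bz)\big)$ is holomorphic near each real $\bc_{0}\neq0$, and its restriction to $\R^{4}$ is therefore real--analytic there.
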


\begin{proof}
Write $P_{k}(\bc)=\sum_{i}c_{i}^{2k}$, a polynomial with $P_{k}(\bc)>0$ for $\bc\ne0$ (all exponents
are even). Then $\nu_{k}=P_{k}^{1/(2k)}$ is the composition of a polynomial with the real--analytic
function $s\mapsto s^{1/(2k)}$ on $(0,\infty)$, hence real--analytic off the origin; homogeneity is
clear. Since $\|\bc\|_{\infty}^{2k}\le P_{k}(\bc)\le 4\|\bc\|_{\infty}^{2k}$, taking $2k$--th roots
gives the first chain. Finally $\|\bc\|_{2k}\le\|\bc\|_{2}\le\|\bc\|_{1}$ because $\ell^{q}$ norms
decrease in $q$ and $2k\ge2$.
\end{proof}

\begin{lemma}[Incomplete gamma]\label{lem:gamma}
For integers $p\ge1$ and $T>0$ put
$P(p,T):=\frac{1}{(p-1)!}\int_{0}^{T}e^{-\tau}\tau^{p-1}d\tau$ and $Q(p,T):=1-P(p,T)$. Then
\begin{enumerate}[label=\textup{(\alph*)},leftmargin=2.2\parindent]
\item $Q(p,T)=e^{-T}\sum_{j=0}^{p-1}\dfrac{T^{j}}{j!}=\Pr[\Poi(T)\le p-1]$, hence
      $P(p,T)=\Pr[\Poi(T)\ge p]\in(0,1)$;
\item $P(p,T)\le\min\{1,\,T^{p}/p!\}$;
\item if $T\ge p-1$ then $Q(p,T)\le\dfrac{p\,T^{p-1}e^{-T}}{(p-1)!}$.
\end{enumerate}
\end{lemma}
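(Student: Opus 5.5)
For (a) I will use repeated integration by parts, or equivalently differentiate the candidate closed form. Put $R_p(T):=e^{-T}\sum_{j=0}^{p-1}T^j/j!$. The derivative telescopes:
\[
\frac{d}{dT}R_p(T)=-e^{-T}\sum_{j=0}^{p-1}\frac{T^j}{j!}+e^{-T}\sum_{j=1}^{p-1}\frac{T^{j-1}}{(j-1)!}=-e^{-T}\frac{T^{p-1}}{(p-1)!},
\]
and this equals $-\frac{d}{dT}P(p,T)$. Since $R_p(0)=1=1-P(p,0)$, we get $Q(p,T)=R_p(T)$. This is exactly $\Pr[\Poi(T)\le p-1]$, so $P(p,T)=\Pr[\Poi(T)\ge p]$. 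The strict inclusion in $(0,1)$ holds because, for $T>0$, the integrand $e^{-\tau}\tau^{p-1}$ is positive on $(0,T]$, and the complementary integral over $(T,\infty)$ is also positive. Here I use $\int_0^\infty e^{-\tau}\tau^{p-1}d\tau=(p-1)!$, which I treat as the standard Gamma identity (or as the limit $T\to\infty$ of $R_p$).

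For (b), $P\le 1$ follows from (a). For the other bound I drop the factor $e^{-\tau}\le1$:
\[
P(p,T)\le\frac{1}{(p-1)!}\int_0^T\tau^{p-1}d\tau=\frac{T^p}{p!}.
\]

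For (c) I will bound the Poisson lower tail by its largest term. When $T\ge p-1$, the ratio of consecutive terms is $\frac{T^{j}/j!}{T^{j-1}/(j-1)!}=T/j\ge1$ for $1\le j\le p-1$, so the terms $T^j/j!$ are nondecreasing in $j$ on $0\le j\le p-1$. Each of the $p$ terms is therefore at most $T^{p-1}/(p-1)!$, and hence
\[
Q(p,T)\le p\,e^{-T}\frac{T^{p-1}}{(p-1)!}.
\]
The only subtlety is the boundary case $j$ with $T=j$, where the ratio equals $1$; the inequality $\ge1$ still gives monotonicity, so no real obstacle arises. The argument is entirely elementary, and the only step needing care is fixing the initial value in (a).
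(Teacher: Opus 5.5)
Your proof is correct; parts (b) and (c) are exactly the paper's arguments (drop $e^{-\tau}\le 1$, and bound each of the $p$ Poisson terms by the last one since $T^{j}/j!$ is nondecreasing for $j\le T$). For (a) the paper instead integrates the upper incomplete integral by parts and inducts on $p$, which is the same calculus identity as your derivative check. Your version has a minor advantage: it gets $Q=R_{p}$ directly from $Q:=1-P$, and since $R_{p}(T)\ge e^{-T}>0$ you do not even need the complete Gamma integral to conclude $P<1$.
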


\begin{proof}
(a) Let $\Gamma(p,T)=\int_{T}^{\infty}e^{-\tau}\tau^{p-1}d\tau$, so $Q(p,T)=\Gamma(p,T)/(p-1)!$.
Integration by parts gives $\Gamma(p+1,T)=T^{p}e^{-T}+p\,\Gamma(p,T)$, whence
$Q(p+1,T)=\frac{T^{p}e^{-T}}{p!}+Q(p,T)$; together with $Q(1,T)=\Gamma(1,T)=e^{-T}$ this yields the
displayed formula by induction, which is precisely the Poisson distribution function.

(b) $\int_{0}^{T}e^{-\tau}\tau^{p-1}d\tau\le\int_{0}^{T}\tau^{p-1}d\tau=T^{p}/p$, and divide by
$(p-1)!$; also $P\le1$ trivially.

(c) The sequence $j\mapsto T^{j}/j!$ is nondecreasing for $j\le T$; as $j\le p-1\le T$, each of the
$p$ terms of the sum in (a) is at most $T^{p-1}/(p-1)!$.
\end{proof}

\begin{proposition}[Theorem \ref{thm:main}(i)]\label{prop:rule}
Assume Hypothesis \ref{hyp:H}, let $k\ge1$, $\vth\in(0,1)$, and write $\mu=\|\bc\|_{\infty}$,
$x=C\mu$, $T=T(\bc)$. Then for every $\bc\in\R^{4}$:
\begin{enumerate}[label=\textup{(\roman*)},leftmargin=2.2\parindent]
\item $C\mu\,\tau\le\vth$ for all $\tau\in[0,T]$; the function $\tau\mapsto e^{-\tau}\hB(\tau\bc)/\tau$
extends continuously to $[0,T]$ by the value $0$ at $\tau=0$, is bounded there by
$10\Lambda(\vth)C\mu^{2}\,\tau e^{-\tau}$, and consequently $\Sm B(\bc)$ is a well defined real
number with
\[
|\Sm B(\bc)|\;\le\;10\,\Lambda(\vth)\,C\,\|\bc\|_{\infty}^{2};
\]
\item termwise integration is legitimate and
\[
\Sm B(\bc)=\sum_{p\ge2}P\big(p,T(\bc)\big)\,b_{p}(\bc)
=\mathbb{E}\Big[\sum_{p=2}^{K}b_{p}(\bc)\Big],\qquad K\sim\Poi\big(T(\bc)\big),
\]
the series converging absolutely, with
$\sum_{p\ge2}P(p,T)|b_{p}(\bc)|\le\frac{1}{C}\sum_{p\ge2}\binom{p+3}{3}\frac{\vth^{p}}{p}<\infty$;
\item for fixed $(k,\vth,C)$ the map $\beta\mapsto\Sm B$ is $\R$--linear, and $\Sm B$ is real valued
whenever $\beta$ is real valued.
\end{enumerate}
\end{proposition}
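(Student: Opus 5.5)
The plan is to take the case $\bc=0$ first and dispose of it by the conventions of Definition \ref{def:rule}: both sides vanish. For $\bc\ne0$, the cut height is controlled by Lemma \ref{lem:gauge}. Since $\nu_{k}(\bc)\ge\|\bc\|_{\infty}=\mu$, for $\tau\in[0,T]$ we get $C\mu\tau\le C\mu\,\vth/(C\nu_{k}(\bc))\le\vth<1$. The whole segment $\{\tau\bc:0\le\tau\le T\}$ therefore lies in $\Pi_{C}$, on which $\hB$ is holomorphic by Proposition \ref{prop:borel}(i), so $\tau\mapsto\hB(\tau\bc)$ is continuous on $[0,T]$.

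For the integrand bound in (i), apply the radial estimate Proposition \ref{prop:borel}(iii) with $t=\vth$. This gives $|\hB(\tau\bc)|\le10\Lambda(\vth)C\mu^{2}\tau^{2}$, and dividing by $\tau$ and multiplying by $e^{-\tau}$ yields $|e^{-\tau}\hB(\tau\bc)/\tau|\le10\Lambda(\vth)C\mu^{2}\tau e^{-\tau}$. In particular the integrand tends to $0$ as $\tau\to0^{+}$, which gives the continuous extension. It is then a continuous function on a compact interval, so the integral exists. Integrating the bound and using $\int_{0}^{\infty}\tau e^{-\tau}d\tau=1$ gives the global bound $|\Sm B(\bc)|\le10\Lambda(\vth)C\|\bc\|_{\infty}^{2}$, uniform in $\bc$.

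For (ii), I will justify exchanging sum and integral by dominated convergence, or simply Fubini/Tonelli for series. We have $\hB(\tau\bc)/\tau=\sum_{p\ge2}\tau^{p-1}b_{p}(\bc)/(p-1)!$ by homogeneity. The sum of absolute values of the terms, integrated against $e^{-\tau}$ over $[0,T]$, equals $\sum_{p}P(p,T)|b_{p}(\bc)|$. Lemma \ref{lem:bp} gives $|b_{p}(\bc)|\le(p-1)!C^{p-1}\binom{p+3}{3}\mu^{p}$, and Lemma \ref{lem:gamma}(b) gives $P(p,T)\le T^{p}/p!$. Together these give
\[
P(p,T)|b_{p}(\bc)|\le\frac{1}{C}\binom{p+3}{3}\frac{(C\mu T)^{p}}{p}\le\frac1C\binom{p+3}{3}\frac{\vth^{p}}{p},
\]
which is summable since $\vth<1$ (compare with Lemma \ref{lem:count}(b)). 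So termwise integration is legitimate, and each term contributes $\frac{b_{p}(\bc)}{(p-1)!}\int_{0}^{T}e^{-\tau}\tau^{p-1}d\tau=P(p,T)b_{p}(\bc)$. The probabilistic form then follows from Lemma \ref{lem:gamma}(a), $P(p,T)=\Pr[K\ge p]$, and a rearrangement:
\[
\mathbb{E}\Big[\sum_{p=2}^{K}b_{p}\Big]=\sum_{j}\Pr[K=j]\sum_{p=2}^{j}b_{p}=\sum_{p\ge2}b_{p}\sum_{j\ge p}\Pr[K=j].
\]
This is justified by the same absolute convergence, since $\sum_{j}\Pr[K=j]\sum_{p\le j}|b_{p}|=\sum_{p}P(p,T)|b_{p}|<\infty$.

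For (iii), linearity is immediate from the series representation: each $b_{p}$ is linear in $\beta$, the weights $P(p,T(\bc))$ do not depend on $\beta$, and absolutely convergent series add termwise. Reality follows because for real $\beta$ and real $\bc$ each $b_{p}(\bc)$ and each $P(p,T)$ is real. I expect no step to be a real obstacle. The only point needing care is the domination argument in (ii), and even there the main thing is to remember that $T$ is tied to $\nu_{k}\ge\|\cdot\|_{\infty}$, which is what gives $C\mu T\le\vth$.
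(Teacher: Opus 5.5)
Your proposal is correct and follows essentially the same route as the paper. The shared steps are the bound $C\mu\tau\le\vth$ on $[0,T]$, which comes from $\nu_{k}\ge\|\cdot\|_{\infty}$; the radial estimate of Proposition \ref{prop:borel}(iii) with $t=\vth$; and the termwise bound $P(p,T)|b_{p}(\bc)|\le\frac{1}{C}\binom{p+3}{3}\vth^{p}/p$ from Lemmas \ref{lem:gamma}(b) and \ref{lem:bp}. The only difference is cosmetic. You justify termwise integration by Tonelli/Fubini, using the fact that the integrated absolute series is exactly $\sum_{p}P(p,T)|b_{p}(\bc)|$, whereas the paper uses uniform convergence on $[0,T]$ with the majorant $\sum_{p}\mu\binom{p+3}{3}\vth^{p-1}$; both arguments are valid.
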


\begin{proof}
(i) By Lemma \ref{lem:gauge}, $C\mu\tau\le C\nu_{k}(\bc)T(\bc)=\vth$ for $0\le\tau\le T$. Hence
Proposition \ref{prop:borel}(iii) applies with $t=\vth$ and gives
$|\hB(\tau\bc)|\le10\Lambda(\vth)C\mu^{2}\tau^{2}$, so the integrand is continuous on $(0,T]$,
tends to $0$ as $\tau\downarrow0$, and is dominated by $10\Lambda(\vth)C\mu^{2}\tau e^{-\tau}$.
Therefore the integral converges absolutely and
$|\Sm B(\bc)|\le10\Lambda(\vth)C\mu^{2}\int_{0}^{\infty}\tau e^{-\tau}d\tau=10\Lambda(\vth)C\mu^{2}$.

(ii) On $[0,T]$ the series $\sum_{p\ge2}\frac{b_{p}(\bc)}{(p-1)!}\tau^{p-1}e^{-\tau}$ is dominated
termwise by the convergent numerical series $\sum_{p\ge2}\mu\binom{p+3}{3}\vth^{\,p-1}$
(Lemma \ref{lem:bp}, and $(C\mu\tau)^{p-1}\le\vth^{p-1}$), hence converges uniformly on $[0,T]$ and
may be integrated term by term, giving $\sum_{p\ge2}\frac{b_{p}(\bc)}{(p-1)!}\int_{0}^{T}
e^{-\tau}\tau^{p-1}d\tau=\sum_{p\ge2}P(p,T)b_{p}(\bc)$. Absolute convergence: by
Lemma \ref{lem:gamma}(b) and Lemma \ref{lem:bp},
\[
P(p,T)|b_{p}(\bc)|\le\frac{T^{p}}{p!}(p-1)!C^{p-1}\binom{p+3}{3}\mu^{p}
=\frac{\binom{p+3}{3}}{pC}(C\mu T)^{p}\le\frac{\binom{p+3}{3}}{pC}\vth^{p} .
\]
The probabilistic form is Lemma \ref{lem:gamma}(a): $P(p,T)=\Pr[K\ge p]$, so
$\sum_{p\ge2}P(p,T)b_{p}=\sum_{p\ge2}b_{p}\,\mathbb{E}[\mathbf 1_{K\ge p}]
=\mathbb{E}[\sum_{p=2}^{K}b_{p}]$, the interchange being justified by absolute convergence.

(iii) The weights $P(p,T(\bc))$ depend only on $(k,\vth,C,\bc)$ and not on $\beta$; linearity and
reality follow.
\end{proof}

\begin{remark}
Proposition \ref{prop:rule}(ii) identifies the rule: \emph{it truncates the divergent series at a
Poisson--distributed order of mean $T(\bc)\asymp 1/(C\|\bc\|)$ and averages}. Optimal truncation is
thereby made linear in the coefficients, smooth in $\bc$, and --- as \S\ref{sec:reg} shows ---
real--analytic away from the origin, none of which sharp truncation achieves.
\end{remark}

\begin{remark}[The price of smoothness; non--regularity]\label{rem:tradeoff}
Since $0<P(p,T)<1$ for every finite $T$, the rule is \emph{not} exact even on a single monomial: for
the admissible family $\beta(2e_{1})=C$, all other coefficients zero, the series
$B(\bc)=C\lambda^{2}$ is a polynomial, yet
\[
\Sm B(\bc)=C\lambda^{2}P\big(2,T(\bc)\big)=C\lambda^{2}\big[1-e^{-T(\bc)}(1+T(\bc))\big]\ \ne\ B(\bc).
\]
The discrepancy is $C\lambda^{2}e^{-T}(1+T)$, exponentially small in $1/(C\|\bc\|_{\infty})$ but not
zero. This is not an oversight; it is forced. Theorem \ref{thm:dichotomy} below shows that a linear
weight rule cannot be both well defined on the whole class $\mathcal{A}_{C}$ and regular, and
Theorem \ref{thm:defect} shows that the defect on convergent series cannot be removed by any
cleverer choice of weights. What \emph{is} a choice is where the price is paid: the sharply
truncated rule $w_{p}=\mathbf 1_{\{p\le N^{*}(\bc)\}}$ is exact on every polynomial of degree
$\le N^{*}(\bc)$ and is well defined on $\mathcal{A}_{C}$, but it is not even continuous in $\bc$,
hence neither smooth nor real--analytic, and it does not admit the analytic continuation used in
\S\ref{sec:reg}. The Poisson smoothing buys real--analyticity off the origin, $C^{\infty}$
smoothness at the origin and an exactly correct Taylor series there, at the cost of an error beyond
all orders on convergent input. Which price to pay depends on the application; both are exhibited
here, and Theorem \ref{thm:dichotomy} shows that some price must be paid.
\end{remark}

\section{The master error estimate}\label{sec:master}

\begin{theorem}[Theorem \ref{thm:main}(ii)]\label{thm:master}
Assume Hypothesis \ref{hyp:H}, let $k\ge1$, $\vth\in(0,1)$, and let $\bc\in\R^{4}\setminus\{0\}$.
Write $x=C\|\bc\|_{\infty}$ and $T=T(\bc)$. Then for every integer $N$ with $1\le N\le T+1$,
\[
\Big|\Sm B(\bc)-\sum_{p=2}^{N}b_{p}(\bc)\Big|
\;\le\;\underbrace{\frac{\Lambda(\vth)}{C}\binom{N+4}{3}N!\;x^{N+1}}_{\text{\rm tail}}
\;+\;\underbrace{\frac{G(\vth)}{C}\;x\,e^{-T}}_{\text{\rm head}},
\]
where the head term is absent for $N=1$ \textup{(}for which the statement reduces to
Proposition \textup{\ref{prop:rule}(i)}\textup{)}. The tail term is valid without any restriction on
$N$.
\end{theorem}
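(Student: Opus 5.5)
Starting from the absolutely convergent representation of Proposition \ref{prop:rule}(ii), $\Sm B(\bc)=\sum_{p\ge2}P(p,T)b_{p}(\bc)$, I subtract the truncated sum and split at $N$:
\[
\Sm B(\bc)-\sum_{p=2}^{N}b_{p}(\bc)\;=\;-\sum_{p=2}^{N}Q(p,T)\,b_{p}(\bc)\;+\;\sum_{p>N}P(p,T)\,b_{p}(\bc),
\]
using $P=1-Q$. The first sum is the ``head'' and is empty for $N=1$. The second sum is the ``tail'', and I will bound it without any restriction on $N$. The two pieces are estimated separately, always through Lemma \ref{lem:bp} in the form $|b_{p}(\bc)|\le(p-1)!C^{p-1}\binom{p+3}{3}\mu^{p}$ with $\mu=\|\bc\|_{\infty}$.

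\emph{Tail.} For the tail I use $P(p,T)\le 1$ (Lemma \ref{lem:gamma}(b)) only in a hybrid way. Writing $(p-1)!\,C^{p-1}\mu^{p}=\frac{1}{C}(p-1)!\,x^{p}$ alone would diverge, so I also use $P(p,T)\le T^{p}/p!$, which gives
\[
P(p,T)\,|b_{p}|\;\le\;\tfrac1C\binom{p+3}{3}\tfrac{(p-1)!}{p!}\,(xT)^{p}.
\]
Since $xT\le\vth$, this is summable, but it does not produce the factor $N!\,x^{N+1}$. The cleaner route is to split the Laplace integral itself. The tail equals
\[
\int_0^T e^{-\tau}\sum_{p>N}\frac{b_{p}(\bc)\tau^{p-1}}{(p-1)!}\,d\tau,
\]
and the integrand is bounded by
\[
\tfrac1C\,\mu C\sum_{p>N}\binom{p+3}{3}(C\mu\tau)^{p-1}\;\le\;\tfrac{x}{C}\binom{N+4}{3}\Lambda(\vth)\,(x\tau)^{N},
\]
by Lemma \ref{lem:tail}(c) with $w=x\tau\le\vth=t$. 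I then extend the integral to $[0,\infty)$ and use $\int_0^\infty e^{-\tau}\tau^{N}\,d\tau=N!$. This yields exactly
\[
\tfrac{\Lambda(\vth)}{C}\binom{N+4}{3}N!\,x^{N+1},
\]
with no condition on $N$, and termwise integration is justified as in Proposition \ref{prop:rule}(ii).

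\emph{Head.} For $2\le p\le N\le T+1$ we have $p-1\le T$, so Lemma \ref{lem:gamma}(c) gives $Q(p,T)\le p\,T^{p-1}e^{-T}/(p-1)!$. Hence
\[
Q(p,T)\,|b_{p}|\;\le\;p\,T^{p-1}e^{-T}\,C^{p-1}\binom{p+3}{3}\mu^{p}\;=\;\tfrac{x}{C}\,p\binom{p+3}{3}(xT)^{p-1}e^{-T}.
\]
Since $xT=C\mu T\le C\nu_{k}(\bc)T=\vth$ by Lemma \ref{lem:gauge}, summing over $p\ge2$ gives at most $\frac{G(\vth)}{C}\,x\,e^{-T}$ by the definition of $G$. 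Adding the two bounds proves the theorem, and for $N=1$ only the tail term remains, recovering Proposition \ref{prop:rule}(i) up to constants.

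\emph{Main obstacle.} The delicate point is the tail: a termwise bound on $P(p,T)$ loses the $N!\,x^{N+1}$ structure. The key step is therefore to resum inside the integral via Lemma \ref{lem:tail}(c) before integrating, and to check that dropping the cut ($T\to\infty$) is harmless because the integrand bound is nonnegative. Everything else is bookkeeping with the already-established lemmas.
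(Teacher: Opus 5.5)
Your proposal is correct and follows the paper's proof essentially verbatim: the same split $I-J$ via $P=1-Q$, the tail bounded by resumming inside the Laplace integral with Lemma \ref{lem:tail}(c) and then extending to $[0,\infty)$, and the head bounded termwise via Lemma \ref{lem:gamma}(c) and $C\|\bc\|_{\infty}T\le\vth$. One small correction: for $N=1$ the tail term is $\frac{\Lambda(\vth)}{C}\binom{5}{3}x^{2}=10\Lambda(\vth)C\|\bc\|_{\infty}^{2}$, which is exactly the bound of Proposition \ref{prop:rule}(i), not merely that bound ``up to constants''.
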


\begin{proof}
By Proposition \ref{prop:rule}(ii) and $P+Q=1$,
\[
\Sm B(\bc)-\sum_{p=2}^{N}b_{p}(\bc)
=\underbrace{\sum_{p>N}P(p,T)\,b_{p}(\bc)}_{=:I}\;-\;\underbrace{\sum_{p=2}^{N}Q(p,T)\,b_{p}(\bc)}_{=:J},
\]
both series being absolutely convergent (the first by Proposition \ref{prop:rule}(ii), the second
being finite).

\smallskip
\emph{The tail $I$.} Repeating the termwise integration of Proposition \ref{prop:rule}(ii) on the
tail alone,
\[
I=\int_{0}^{T}e^{-\tau}\Big(\sum_{p>N}\frac{b_{p}(\bc)}{(p-1)!}\tau^{p-1}\Big)d\tau .
\]
For $0\le\tau\le T$ set $w:=C\mu\tau\le\vth$, $\mu=\|\bc\|_{\infty}$. By Lemma \ref{lem:bp},
$\frac{|b_{p}(\bc)|}{(p-1)!}\tau^{p-1}\le\mu\binom{p+3}{3}w^{p-1}$, so Lemma \ref{lem:tail}(c) gives
\[
\Big|\sum_{p>N}\frac{b_{p}(\bc)}{(p-1)!}\tau^{p-1}\Big|
\le\mu\binom{N+4}{3}\Lambda(\vth)\,w^{N}=\mu\binom{N+4}{3}\Lambda(\vth)(C\mu)^{N}\tau^{N}.
\]
Hence, extending the integration to $[0,\infty)$,
\[
|I|\le\mu\binom{N+4}{3}\Lambda(\vth)(C\mu)^{N}\int_{0}^{\infty}e^{-\tau}\tau^{N}d\tau
=\Lambda(\vth)\binom{N+4}{3}N!\,\mu\,(C\mu)^{N}
=\frac{\Lambda(\vth)}{C}\binom{N+4}{3}N!\,x^{N+1}.
\]
No constraint on $N$ was used.

\smallskip
\emph{The head $J$.} Assume $N\ge2$. For $2\le p\le N$ we have $p-1\le N-1\le T$, so
Lemma \ref{lem:gamma}(c) applies, and with Lemma \ref{lem:bp} and $C\mu T\le\vth$,
\[
Q(p,T)|b_{p}(\bc)|\le\frac{p\,T^{p-1}e^{-T}}{(p-1)!}\cdot(p-1)!\,C^{p-1}\binom{p+3}{3}\mu^{p}
=p\binom{p+3}{3}\mu\,e^{-T}\,(C\mu T)^{p-1}
\le p\binom{p+3}{3}\mu\,e^{-T}\vth^{\,p-1}.
\]
Summing over $p\ge2$ and using the definition of $G$ in Lemma \ref{lem:tail}(a),
$|J|\le G(\vth)\,\mu\,e^{-T}=\frac{G(\vth)}{C}x\,e^{-T}$.
\end{proof}

\begin{corollary}[Gevrey--$1$ asymptotics; exactness of the one--loop term]\label{cor:gevrey}
Under the hypotheses of Theorem \ref{thm:master}:
\begin{enumerate}[label=\textup{(\alph*)},leftmargin=2.2\parindent]
\item Since $\binom{N+4}{3}\le 11\cdot2^{N}$ for all $N\ge0$, one has for $1\le N\le T+1$
\[
\Big|\Sm B(\bc)-\sum_{p=2}^{N}b_{p}(\bc)\Big|
\le 11\,\Lambda(\vth)\,N!\,(2C\|\bc\|_{\infty})^{N}\,\|\bc\|_{\infty}
+\frac{G(\vth)}{C}\,x\,e^{-T}
\qquad\Big(\text{i.e. first term}=\tfrac{11\Lambda(\vth)}{C}N!\,2^{N}x^{N+1}\Big),
\]
i.e. $\Sm B$ obeys uniform Gevrey--$1$ remainder bounds with Gevrey constant $2C$ in the
$\|\cdot\|_{\infty}$ scale.
\item For each fixed $N\ge1$ and each $\eta<\vth_{k}$,
$\big|\Sm B(\bc)-\sum_{p=2}^{N}b_{p}(\bc)\big|=O\big(\|\bc\|_{\infty}^{N+1}\big)+O\big(e^{-\eta/x}\big)
=O\big(\|\bc\|_{\infty}^{N+1}\big)$ as $\bc\to0$, uniformly in the direction $\bc/\|\bc\|_{\infty}$.
Thus $\Sm B$ is asymptotic to $B$ to all orders.
\item In particular, for every $\bc$ with $T(\bc)\ge1$ --- equivalently $\bc\in\overline{\D}$, which
is exactly the condition making $N=2$ admissible in Theorem \ref{thm:master} --- the one--loop
coefficient is reproduced exactly, in the sense that
\[
\big|\Sm B(\bc)-b_{2}(\bc)\big|\;\le\;40\,\Lambda(\vth)\,C^{2}\|\bc\|_{\infty}^{3}
+\frac{G(\vth)}{C}\,x\,e^{-T}\;=\;O(\|\bc\|_{\infty}^{3}).
\]
\end{enumerate}
\end{corollary}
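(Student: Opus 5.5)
The three parts of Corollary \ref{cor:gevrey} should all follow by specialising Theorem \ref{thm:master}, with a few elementary estimates added on top.

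\emph{Part (a).} I first establish the combinatorial inequality $\binom{N+4}{3}\le 11\cdot 2^{N}$ for $N\ge0$. Put $r_N=\binom{N+4}{3}/2^N$. The ratio
\[
\frac{r_{N+1}}{r_N}=\frac{N+5}{2(N+2)}
\]
is at most $1$ for $N\ge1$ and equals $5/4$ at $N=0$. Hence $r_N$ is maximised at $N=1$, where $r_1=10/2=5\le 11$. Substituting this into the tail term of Theorem \ref{thm:master} gives
\[
\frac{\Lambda(\vth)}{C}\binom{N+4}{3}N!\,x^{N+1}\le \frac{11\Lambda(\vth)}{C}N!\,2^{N}x^{N+1}.
\]
Since $x^{N+1}/C=C^{N}\|\bc\|_\infty^{N+1}$, this equals $11\Lambda(\vth)N!(2C\|\bc\|_\infty)^N\|\bc\|_\infty$. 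The head term is carried over unchanged.

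\emph{Part (b).} Fix $N\ge1$. The range condition $N\le T+1$ holds once $\|\bc\|_\infty$ is small, because Lemma \ref{lem:gauge} gives
\[
T=\frac{\vth}{C\nu_k(\bc)}\ge \frac{\vth}{4^{1/(2k)}C\|\bc\|_\infty}=\frac{\vth_k}{x},
\]
uniformly in direction. The tail is then $O(\|\bc\|_\infty^{N+1})$ with a constant depending only on $N$. The head is bounded by
\[
\frac{G(\vth)}{C}\,x\,e^{-\vth_k/x}\le \frac{G(\vth)}{C}\,x\,e^{-\eta/x}
\]
for every $\eta<\vth_k$. Because $e^{-\eta/x}=o(x^{M})$ for every $M$, the head is also $O(\|\bc\|_\infty^{N+1})$. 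All bounds depend on $\bc$ only through $x$, which gives the uniformity in direction.

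\emph{Part (c).} Here I take $N=2$ in Theorem \ref{thm:master}. This choice requires $2\le T+1$, i.e. $T(\bc)\ge1$, which is equivalent to $C\nu_k(\bc)\le\vth$, that is, $\bc\in\overline{\D}$. The tail term becomes
\[
\frac{\Lambda(\vth)}{C}\binom{6}{3}\,2!\,x^{3}=40\,\Lambda(\vth)\,C^{2}\|\bc\|_\infty^{3}.
\]
The head is as before, and by the argument of (b) it is $O(\|\bc\|_\infty^3)$.

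There is no real obstacle in this argument. The only points needing care are checking the constant $11$, and in (b) making sure the threshold on $\|\bc\|_\infty$ guaranteeing $N\le T+1$ is uniform in direction, which the gauge comparison of Lemma \ref{lem:gauge} provides.
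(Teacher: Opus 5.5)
Your proposal is correct and follows the paper's proof essentially step for step: you specialise Theorem \ref{thm:master}, you use $T\ge\vth_{k}/x$ from Lemma \ref{lem:gauge} for both the direction-uniform range condition and the head bound, and you take $N=2$ with $\binom{6}{3}\,2!=40$ for (c). The only variation is in the combinatorial inequality. Your ratio argument $r_{N+1}/r_{N}=\frac{N+5}{2(N+2)}$ gives the sharper bound $\binom{N+4}{3}\le 5\cdot 2^{N}$, whereas the paper bounds $\binom{N+4}{3}\le (N+4)^{3}/6$ and uses that $(N+4)^{3}2^{-(N+4)}$ is decreasing to get $\frac{32}{3}\,2^{N}$; either suffices.
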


\begin{proof}
(a) $\binom{N+4}{3}\le\frac{(N+4)^{3}}{6}$ and $\frac{(N+4)^{3}}{2^{N+4}}\le4$ for all $N\ge0$
(the sequence $(N+4)^{3}2^{-(N+4)}$ is decreasing for $N\ge0$, since
$\frac{(N+5)^{3}}{(N+4)^{3}}\le(\frac54)^{3}<2$, and equals $4$ at $N=0$), so
$\binom{N+4}{3}\le\frac{4}{6}2^{N+4}=\frac{32}{3}2^{N}\le11\cdot2^{N}$. Insert in
Theorem \ref{thm:master} and use $\binom{N+4}{3}x^{N+1}\le11\cdot2^{N}x^{N+1}=11(2x)^{N}x$, so that
the tail term is at most
$\frac{11\Lambda(\vth)}{C}N!\,(2x)^{N}x=11\Lambda(\vth)N!\,(2C\|\bc\|_{\infty})^{N}\|\bc\|_{\infty}$.

(b) $T\ge\vth_{k}/x$ by Lemma \ref{lem:gauge}, so $e^{-T}\le e^{-\vth_{k}/x}$, and for fixed $N$ the
constraint $N\le T+1$ holds for all small enough $\|\bc\|_{\infty}$; finally
$e^{-\eta/x}=o(x^{M})$ for every $M>0$.

(c) Take $N=2$, which is admissible precisely when $2\le T(\bc)+1$: $\binom{6}{3}=20$ and $2!=2$,
so the tail term is
$\frac{\Lambda}{C}\cdot 20\cdot 2\cdot x^{3}=40\Lambda C^{2}\|\bc\|_{\infty}^{3}$.
\end{proof}

\section{Optimal truncation and the domain $\D$}\label{sec:opt}

\begin{theorem}[Theorem \ref{thm:main}(iii)]\label{thm:optimal}
Assume Hypothesis \ref{hyp:H}, let $k\ge1$, $\vth\in(0,1)$, $\vth_{k}=\vth\,4^{-1/(2k)}$, and put
$x=C\|\bc\|_{\infty}$, $N^{*}(\bc)=\lfloor T(\bc)\rfloor+1$. Then for every $\bc$ with
\[
0<x\le x_{0}:=\min\{\vth_{k},\,1-\vth\}
\]
one has $T(\bc)\ge 1$, $2\le N^{*}(\bc)\le T(\bc)+1$, and
\[
\Big|\Sm B(\bc)-\sum_{p=2}^{N^{*}(\bc)}b_{p}(\bc)\Big|
\;\le\;\frac{G(\vth)+139\,\Lambda(\vth)\,x^{-5/2}}{C}\;e^{-\vth_{k}/x}.
\]
Consequently, for every $\eta<\vth_{k}$,
\[
\Big|\Sm B(\bc)-\sum_{p=2}^{N^{*}(\bc)}b_{p}(\bc)\Big|\le\frac{K}{C}e^{-\eta/x},
\qquad
K=G(\vth)+139\,\Lambda(\vth)\Big(\frac{5}{2e(\vth_{k}-\eta)}\Big)^{5/2}.
\]
\end{theorem}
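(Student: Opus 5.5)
The plan is to specialise the master estimate, Theorem \ref{thm:master}, to $N=N^{*}(\bc)$. The tail term will be controlled by Stirling's formula, using the fact that $N^{*}x$ is at most $1$. The head term is already of the required shape.

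\emph{Preliminaries.} By Lemma \ref{lem:gauge}, $\|\bc\|_\infty\le\nu_k(\bc)\le4^{1/(2k)}\|\bc\|_\infty$, hence
\[
\frac{\vth_k}{x}\le T(\bc)\le\frac{\vth}{x}.
\]
Since $x\le\vth_k$, this gives $T\ge1$. By definition of the floor, $T<N^{*}\le T+1$, so $2\le N^{*}\le T+1$ and Theorem \ref{thm:master} applies with $N=N^{*}$.

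\emph{Head term.} It is $\frac{G(\vth)}{C}\,x\,e^{-T}$. Using $x\le1$ and $e^{-T}\le e^{-\vth_k/x}$, it is at most $\frac{G(\vth)}{C}e^{-\vth_k/x}$.

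\emph{Tail term.} Write $N=N^{*}$. From $N\le T+1\le\vth/x+1$ we get $Nx\le\vth+x\le1$; this is exactly where the hypothesis $x\le1-\vth$ enters. Apply the Stirling upper bound $N!\le e\sqrt N\,(N/e)^N$. This gives
\[
N!\,x^{N}\le e\sqrt N\,(Nx/e)^N\le e\sqrt N\,e^{-N}\le e\sqrt N\,e^{-\vth_k/x},
\]
where the last step uses $N>T\ge\vth_k/x$. The tail term therefore becomes
\[
\frac{\Lambda(\vth)}{C}\,e\,\binom{N+4}{3}\sqrt N\,x\,e^{-\vth_k/x}.
\]
It remains to bound $\binom{N+4}{3}\sqrt N\,x$ by a constant multiple of $x^{-5/2}$. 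I would use $\binom{N+4}{3}\le(N+4)^3/6$ together with $N\le1/x+1\le2/x$ (the last step uses $x\le1$). This gives $N+4\le 6/x$, so
\[
\binom{N+4}{3}\sqrt N\,x\le\frac{(6/x)^3}{6}\sqrt{2/x}\;x=36\sqrt2\,x^{-5/2}\approx 50.9\,x^{-5/2}.
\]
Multiplying by $e$ gives about $138.4$, matching the constant $139$.

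\emph{Main obstacle.} The only delicate point is this constant bookkeeping. One has to make sure that $N\le2/x$ and $N+4\le6/x$ are actually available from $x\le\min\{\vth_k,1-\vth\}$; if needed, sharpen with $N\le\vth/x+1$. One also needs the Stirling form $N!\le e\sqrt N(N/e)^N$, valid for all $N\ge1$.

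\emph{Consequence for $\eta<\vth_k$.} Write
\[
x^{-5/2}e^{-\vth_k/x}=x^{-5/2}e^{-(\vth_k-\eta)/x}\,e^{-\eta/x}.
\]
With $y=1/x$, the factor $\sup_{y>0}y^{5/2}e^{-(\vth_k-\eta)y}$ is attained at $y=\frac{5}{2(\vth_k-\eta)}$ and equals $\big(\frac{5}{2e(\vth_k-\eta)}\big)^{5/2}$. Also $e^{-\vth_k/x}\le e^{-\eta/x}$ for the $G$ term. Together these yield the stated constant $K$.
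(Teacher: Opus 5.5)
Your proposal is correct and follows the paper's proof essentially step for step. You specialise Theorem \ref{thm:master} to $N=N^{*}$ and bound the head via $T\ge\vth_{k}/x$ and $x\le1$. For the tail you use $N^{*}x\le1$, the elementary Stirling bound and $N^{*}\ge\vth_{k}/x$, then $N^{*}\le 2/x$ and $N^{*}+4\le 6/x$ to reach $36\sqrt2\,e<139$, and you finish with the $\sup_{u>0}u^{5/2}e^{-bu}$ optimisation. The only cosmetic difference is that you obtain $\vth/x+1\le 2/x$ from $x\le1$, whereas the paper obtains it from $x\le\vth$; both are valid.
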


\begin{proof}
By Lemma \ref{lem:gauge}, $\nu_{k}(\bc)\le4^{1/(2k)}\|\bc\|_{\infty}$, hence
\begin{equation}\label{eq:Tlower}
T(\bc)=\frac{\vth}{C\nu_{k}(\bc)}\ \ge\ \frac{\vth\,4^{-1/(2k)}}{C\|\bc\|_{\infty}}=\frac{\vth_{k}}{x}\ \ge\ 1
\end{equation}
because $x\le\vth_{k}$. Therefore $\lfloor T\rfloor\ge1$ and $T\le N^{*}\le T+1$, so
Theorem \ref{thm:master} applies with $N=N^{*}$.

\emph{Head.} By \eqref{eq:Tlower} and $x\le1$, $\frac{G(\vth)}{C}xe^{-T}\le\frac{G(\vth)}{C}e^{-\vth_{k}/x}$.

\emph{Tail.} First, $N^{*}x\le(T+1)x\le\vth+x\le\vth+(1-\vth)=1$. By Lemma \ref{lem:stirling},
$N!\le e\,N^{N+1/2}e^{-N}$, so
\[
N^{*}!\,x^{N^{*}}\le e\sqrt{N^{*}}\Big(\frac{N^{*}x}{e}\Big)^{N^{*}}\le e\sqrt{N^{*}}\,e^{-N^{*}}
\le e\sqrt{N^{*}}\,e^{-\vth_{k}/x},
\]
using $N^{*}x\le1$ and then $N^{*}\ge T\ge\vth_{k}/x$. Next, $x\le\vth_{k}\le\vth$ gives
$N^{*}\le T+1\le\vth/x+1\le2/x$, whence $\sqrt{N^{*}}\le\sqrt{2}\,x^{-1/2}$ and, using $x\le1$,
$N^{*}+4\le 2/x+4\le6/x$, so $\binom{N^{*}+4}{3}\le\frac{(N^{*}+4)^{3}}{6}\le 36x^{-3}$. Therefore
\[
\frac{\Lambda(\vth)}{C}\binom{N^{*}+4}{3}N^{*}!\,x^{N^{*}+1}
\le\frac{\Lambda(\vth)}{C}\cdot 36x^{-3}\cdot e\sqrt2\,x^{-1/2}e^{-\vth_{k}/x}\cdot x
\le\frac{139\,\Lambda(\vth)}{C}\,x^{-5/2}e^{-\vth_{k}/x},
\]
since $36\sqrt2\,e=138.4\ldots<139$. Adding the two contributions gives the first display.

For the second, write $x^{-5/2}e^{-\vth_{k}/x}=\big(u^{5/2}e^{-(\vth_{k}-\eta)u}\big)e^{-\eta/x}$
with $u=1/x$, and use $\sup_{u>0}u^{a}e^{-bu}=(a/(eb))^{a}$ with $a=5/2$, $b=\vth_{k}-\eta>0$.
\end{proof}

\begin{proposition}[The domain]\label{prop:domain}
Let $\D:=\{\bc\in\R^{4}:C\nu_{k}(\bc)<\vth\}$. Then $\D$ is a nonempty, open, bounded, star--shaped
neighbourhood of the origin, and
\[
\Big\{\|\bc\|_{1}<\frac{\vth}{C}\Big\}\ \subseteq\ \D\ \subseteq\ \Big\{\|\bc\|_{\infty}<\frac{\vth}{C}\Big\},
\qquad
\D\setminus\{0\}=\{\bc\ne0:\,T(\bc)>1\}.
\]
On $\D$ the master estimate of Theorem \ref{thm:master} is available for $N=2$, i.e. the one--loop
term is resolved; and on $\D\cap\{x\le x_{0}\}$ the exponentially small bound of
Theorem \ref{thm:optimal} holds.
\end{proposition}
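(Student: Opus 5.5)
The plan is to derive every assertion directly from the properties of the gauge in Lemma~\ref{lem:gauge}, and to obtain the estimates simply by citing Theorem~\ref{thm:master} and Theorem~\ref{thm:optimal}.

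We begin with the topological statements. $\D$ is the strict sublevel set $\{\nu_k<\vth/C\}$ of the function $\nu_k$. This function is continuous on all of $\R^4$, since $\nu_k=P_k^{1/(2k)}$ with $P_k\ge0$ a polynomial. Hence $\D$ is open. It contains the origin because $\nu_k(0)=0<\vth/C$, so it is nonempty and a neighbourhood of $0$. Star-shapedness with respect to $0$ follows from positive homogeneity: for $s\in[0,1]$ we have $\nu_k(s\bc)=s\,\nu_k(\bc)<\vth/C$. For the two inclusions we use the chain $\|\bc\|_\infty\le\nu_k(\bc)\le\|\bc\|_2\le\|\bc\|_1$. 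If $\|\bc\|_1<\vth/C$, then $C\nu_k(\bc)\le C\|\bc\|_1<\vth$. If $\bc\in\D$, then $C\|\bc\|_\infty\le C\nu_k(\bc)<\vth$. The outer inclusion already gives boundedness.

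Next comes the identity $\D\setminus\{0\}=\{\bc\ne0:T(\bc)>1\}$. For $\bc\ne0$ we have $\nu_k(\bc)>0$, and therefore $T(\bc)=\vth/(C\nu_k(\bc))>1$ holds if and only if $C\nu_k(\bc)<\vth$. This is just a rearrangement of the definition.

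Finally we check the two estimates. On $\D\setminus\{0\}$ we have $T>1$, so $N=2\le T+1$, and Theorem~\ref{thm:master} applies with $N=2$; this is Corollary~\ref{cor:gevrey}(c). At $\bc=0$ both sides vanish by the convention in Definition~\ref{def:rule}. On $\D\cap\{0<x\le x_0\}$ the hypotheses of Theorem~\ref{thm:optimal} are met verbatim, so its bound holds; the membership in $\D$ is not even needed. None of these steps is a real obstacle. The only point requiring care is the continuity of $\nu_k$ at the origin, which is needed for openness and follows because $\nu_k\le\|\cdot\|_1$.
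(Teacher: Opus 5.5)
Your proposal is correct and follows the paper's proof essentially line for line. You get openness, star-shapedness and boundedness from the continuity and positive homogeneity of $\nu_k$, the two inclusions from $\|\bc\|_\infty\le\nu_k(\bc)\le\|\bc\|_1$ (Lemma \ref{lem:gauge}), the identity $\D\setminus\{0\}=\{\bc\ne0: T(\bc)>1\}$ by rearranging the definition, and the $N=2$ case of Theorem \ref{thm:master} from $T>1$; your side remark that the hypotheses of Theorem \ref{thm:optimal} need no membership in $\D$ is also right, since $x\le\vth_k$ already forces $T\ge1$.
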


\begin{proof}
$\nu_{k}$ is continuous, positively homogeneous of degree $1$ and positive off the origin, so
$\D=\nu_{k}^{-1}([0,\vth/C))$ is open, star--shaped and bounded. The inclusions follow from
$\|\bc\|_{\infty}\le\nu_{k}(\bc)\le\|\bc\|_{1}$ (Lemma \ref{lem:gauge}). Finally $T(\bc)>1$ is by
definition equivalent to $C\nu_{k}(\bc)<\vth$, and $N=2\le T+1$ requires exactly $T\ge1$.
\end{proof}

Thus the answer to the question ``on what domain is the summed beta function well defined'' has two
layers, and we record both. The integral defining $\Sm B$ converges at \emph{every} point of
$\R^{4}$, so the rule assigns a real number everywhere; but the number it assigns is a faithful
representative of the formal series precisely on $\D$, where the cut sits at least one unit of Borel
time above the origin, and quantitatively so on the slightly smaller set $\{x\le x_{0}\}$ of
Theorem \ref{thm:optimal}. We take $\D$ as \emph{the} domain of the summed beta function.

\section{Regularity of the summed beta function}\label{sec:reg}

Throughout this section $(k,\vth)$ is an \emph{admissible} pair (Definition \ref{def:admissible}),
$\delta_{k}:=\frac{1}{32k}$, $\kappa_{0}:=\cos(\pi/12)=0.9659\ldots$, and
$\vth^{*}=(1+\delta_{k})2^{1/(2k)}\vth<1$. For $\bc_{0}\in\R^{4}\setminus\{0\}$ we write
$t_{0}:=\|\bc_{0}\|_{\infty}$ and
\[
\Delta_{0}:=\{\bz\in\C^{4}:\ |z_{i}-c_{0i}|\le\delta_{k}t_{0},\ i=1,\dots,4\}.
\]

\begin{lemma}[Complexified gauge]\label{lem:cxgauge}
With the above notation, put $P_{k}(\bz)=\sum_{i}z_{i}^{2k}$. Then:
\begin{enumerate}[label=\textup{(\alph*)},leftmargin=2.2\parindent]
\item $|P_{k}(\bz)-P_{k}(\bc_{0})|\le\frac{27}{100}P_{k}(\bc_{0})$ for all $\bz\in\Delta_{0}$;
consequently $P_{k}(\bz)\in\C\setminus(-\infty,0]$ and $|\arg P_{k}(\bz)|\le\arcsin\frac{27}{100}<\frac{\pi}{6}$;
\item $\nu_{k}(\bz):=\exp\big(\frac{1}{2k}\Log P_{k}(\bz)\big)$ is holomorphic on a neighbourhood of
$\Delta_{0}$, agrees with $\nu_{k}$ on $\Delta_{0}\cap\R^{4}$, satisfies
$|\arg\nu_{k}(\bz)|\le\frac{\pi}{12k}\le\frac{\pi}{12}$ and
$2^{-1/(2k)}\nu_{k}(\bc_{0})\le|\nu_{k}(\bz)|\le 2^{1/(2k)}\nu_{k}(\bc_{0})$;
\item $T(\bz):=\vth/(C\nu_{k}(\bz))$ is holomorphic on a neighbourhood of $\Delta_{0}$ and satisfies
\[
\Re T(\bz)\ \ge\ \kappa_{0}\,|T(\bz)|,\qquad
2^{-1/(2k)}T(\bc_{0})\ \le\ |T(\bz)|\ \le\ 2^{1/(2k)}\,T(\bc_{0}),\qquad
C\|\bz\|_{\infty}\,|T(\bz)|\ \le\ \vth^{*}<1 .
\]
\end{enumerate}
\end{lemma}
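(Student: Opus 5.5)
The plan is to derive everything from one perturbative estimate on the polynomial $P_{k}$ near the real point $\bc_{0}$, and then push it through $\Log$, the $2k$--th root, and the reciprocal.

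\emph{Part (a).} Write $z_{i}=c_{0i}+h_{i}$ with $|h_{i}|\le\delta_{k}t_{0}$. By the binomial theorem and the triangle inequality,
\[
|z_{i}^{2k}-c_{0i}^{2k}|\le(|c_{0i}|+\delta_{k}t_{0})^{2k}-|c_{0i}|^{2k}\le t_{0}^{2k}\big[(1+\delta_{k})^{2k}-1\big],
\]
where the last step uses that $a\mapsto(a+d)^{2k}-a^{2k}$ is increasing in $a\ge0$ and $|c_{0i}|\le t_{0}$. Summing over the four indices gives
\[
|P_{k}(\bz)-P_{k}(\bc_{0})|\le4t_{0}^{2k}\big[(1+\tfrac{1}{32k})^{2k}-1\big]\le4t_{0}^{2k}(e^{1/16}-1).
\]
This is where the difficulty sits. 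Comparing against $P_{k}(\bc_{0})\ge t_{0}^{2k}$ only yields the ratio $4(e^{1/16}-1)\approx0.258$, which is below $0.27$, so the crude bound already works. The main obstacle is therefore only bookkeeping, namely checking $e^{1/16}-1\le0.0645$.

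Once the ratio $\rho:=27/100<1$ is in hand, $P_{k}(\bz)$ lies in the closed disc of radius $\rho P_{k}(\bc_{0})$ centred at the positive real number $P_{k}(\bc_{0})$. That disc misses $(-\infty,0]$, and every point of it has argument at most $\arcsin\rho$. Finally $\arcsin(0.27)<\arcsin(1/2)=\pi/6$.

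\emph{Part (b).} $\Log$ is holomorphic on $\C\setminus(-\infty,0]$. The estimate in (a) was proved for the closed polydisc $\Delta_{0}$, but the ratio $0.258$ is strictly below $0.27$, so by continuity the bound persists on a slightly larger open polydisc. There the composition $\exp(\frac{1}{2k}\Log P_{k})$ is holomorphic.

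On real points $P_{k}>0$, so $\Log$ is the real logarithm and $\nu_{k}(\bz)$ reduces to the positive real $2k$--th root. The argument bound follows from $\arg\nu_{k}=\frac{1}{2k}\arg P_{k}$ together with (a). For the modulus, $|\nu_{k}|=|P_{k}|^{1/(2k)}$ and $|P_{k}(\bz)|/P_{k}(\bc_{0})\in[0.73,1.27]\subset[\frac12,2]$; taking $2k$--th roots gives the two--sided bound.

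\emph{Part (c).} $T(\bz)$ is holomorphic because $\nu_{k}\ne0$. Its argument is $-\arg\nu_{k}$, so $|\arg T|\le\pi/12$, which gives $\Re T\ge\cos(\pi/12)|T|=\kappa_{0}|T|$. The modulus bounds are inverted from (b), using $T(\bc_{0})=\vth/(C\nu_{k}(\bc_{0}))$.

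For the last inequality, note $\|\bz\|_{\infty}\le(1+\delta_{k})t_{0}$ and $|T(\bz)|\le2^{1/(2k)}\vth/(C\nu_{k}(\bc_{0}))$. Combining these with $t_{0}\le\nu_{k}(\bc_{0})$ from Lemma \ref{lem:gauge} gives
\[
C\|\bz\|_{\infty}|T(\bz)|\le(1+\delta_{k})2^{1/(2k)}\vth=\vth^{*}<1,
\]
where the final inequality is admissibility.
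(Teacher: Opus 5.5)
Your proof is correct and follows the paper's argument essentially step for step. You use a coordinatewise perturbation bound for $P_{k}$ on $\Delta_{0}$, compared against $P_{k}(\bc_{0})\ge t_{0}^{2k}$, then pass through the principal logarithm, the $2k$--th root and the reciprocal, and obtain the final inequality from $t_{0}\le\nu_{k}(\bc_{0})$, exactly as the paper does. The only difference is in (a): your binomial-expansion-plus-monotonicity bound gives the ratio $4[(1+\delta_{k})^{2k}-1]\le 4(e^{1/16}-1)\approx 0.258$. The paper instead uses the factorisation $a^{n}-b^{n}=(a-b)\sum_{j}a^{j}b^{n-1-j}$ to get $\frac14(1+\delta_{k})^{2k-1}\le\frac14 e^{1/16}\approx 0.266$, and both lie below $27/100$.
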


\begin{proof}
(a) For $\bz\in\Delta_{0}$ one has $|z_{i}|\le|c_{0i}|+\delta_{k}t_{0}\le(1+\delta_{k})t_{0}$. From
$a^{n}-b^{n}=(a-b)\sum_{j=0}^{n-1}a^{j}b^{n-1-j}$ we get
$|z_{i}^{2k}-c_{0i}^{2k}|\le 2k\,\delta_{k}t_{0}\big((1+\delta_{k})t_{0}\big)^{2k-1}$, so, summing
over the four coordinates and using $P_{k}(\bc_{0})\ge t_{0}^{2k}$,
\[
|P_{k}(\bz)-P_{k}(\bc_{0})|\le 8k\,\delta_{k}(1+\delta_{k})^{2k-1}\,t_{0}^{2k}
\le\frac{1}{4}e^{1/16}P_{k}(\bc_{0})\le\frac{27}{100}P_{k}(\bc_{0}),
\]
because $8k\delta_{k}=\frac14$ and $(1+\delta_{k})^{2k-1}\le e^{(2k-1)/(32k)}\le e^{1/16}=1.0645\ldots$,
and $\frac14 e^{1/16}=0.2661\ldots$. Since $P_{k}(\bc_{0})>0$, the disc of radius
$\frac{27}{100}P_{k}(\bc_{0})$ about $P_{k}(\bc_{0})$ lies in the right half plane minus the origin,
and its points have argument at most $\arcsin\frac{27}{100}=0.2734\ldots<\pi/6$ in absolute value.

(b) By (a) the principal logarithm is holomorphic at $P_{k}(\bz)$; since the inequality in (a) is
strict with room to spare, it persists on an open neighbourhood of $\Delta_{0}$, on which $\nu_{k}$
is therefore holomorphic. On real points $P_{k}>0$ and $\nu_{k}$ is the positive real root. Finally
$\arg\nu_{k}=\frac{1}{2k}\arg P_{k}$, and by (a)
$\frac12P_{k}(\bc_{0})\le\frac{73}{100}P_{k}(\bc_{0})\le|P_{k}(\bz)|\le\frac{127}{100}P_{k}(\bc_{0})
\le2P_{k}(\bc_{0})$, whence
$2^{-1/(2k)}\nu_{k}(\bc_{0})\le|\nu_{k}(\bz)|=|P_{k}(\bz)|^{1/(2k)}\le2^{1/(2k)}\nu_{k}(\bc_{0})$.

(c) $\arg T=-\arg\nu_{k}$, so $|\arg T|\le\pi/12$ and $\Re T=|T|\cos(\arg T)\ge\kappa_{0}|T|$. The two--sided bound of (b) gives at once
$2^{-1/(2k)}T(\bc_{0})\le|T(\bz)|=\vth/\big(C|\nu_{k}(\bz)|\big)\le2^{1/(2k)}T(\bc_{0})$. Hence,
using $\nu_{k}(\bc_{0})\ge t_{0}$,
\[
C\|\bz\|_{\infty}|T(\bz)|\le C(1+\delta_{k})t_{0}\cdot\frac{2^{1/(2k)}\vth}{C\,\nu_{k}(\bc_{0})}
\le(1+\delta_{k})2^{1/(2k)}\vth=\vth^{*}. \qedhere
\]
\end{proof}

\begin{remark}\label{rem:notoptimal}
The admissibility condition of Definition \ref{def:admissible} is sufficient but not optimised: the
only role of $\vth^{*}$ is to keep $C\|\bz\|_{\infty}|T(\bz)|<1$ on $\Delta_{0}$, and the proof of
Lemma \ref{lem:cxgauge} in fact yields $|P_{k}(\bz)|\ge\frac{73}{100}P_{k}(\bc_{0})$, so that
$\vth^{*}$ may be replaced by $(1+\delta_{k})(100/73)^{1/(2k)}\vth$; this raises the admissible
$\vth$ from $0.6857$ to $0.8285$ at $k=1$ and the corresponding supremum of $\vth_{k}$ in
Remark \ref{rem:ratelimit} from $0.3428$ to $0.4143$. Since only the limit $\vth_{k}\to1$ as
$k\to\infty$ matters for Theorem \ref{thm:main}(vi), we have not pursued such optimisations, here or
elsewhere; no constant in this paper is claimed to be best possible.
\end{remark}

\begin{proposition}[Holomorphic extension; real--analyticity]\label{prop:analytic}
For $\bz$ in a neighbourhood of $\Delta_{0}$ define
\[
F(\bz):=\int_{[0,T(\bz)]}e^{-\tau}\,\hB(\tau\bz)\,\frac{d\tau}{\tau}
=\sum_{p\ge2}\frac{b_{p}(\bz)}{(p-1)!}\,\gamma_{p}\big(T(\bz)\big),
\qquad \gamma_{p}(w):=\int_{[0,w]}e^{-\tau}\tau^{p-1}d\tau ,
\]
the integrals being taken along the complex segment $[0,w]$. Then the series converges absolutely
and uniformly on $\Delta_{0}$, $F$ is holomorphic on the interior of $\Delta_{0}$, and
$F=\Sm B$ on $\Delta_{0}\cap\R^{4}$. Consequently $\Sm B$ is real--analytic on $\R^{4}\setminus\{0\}$.
\end{proposition}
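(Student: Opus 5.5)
The plan is to prove the proposition by termwise estimates on the series form, using Lemma~\ref{lem:cxgauge} to control the complexified cut. Three things are needed in turn: a pointwise bound on each term, uniform convergence on $\Delta_{0}$, and agreement with $\Sm B$ on real points.

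\emph{Pointwise bound on each term.} For $\bz$ in a neighbourhood of $\Delta_{0}$, $T(\bz)$ is holomorphic, and each $\gamma_{p}(w)$ is entire in $w$, being the integral of an entire function from $0$ to $w$. Hence every term $b_{p}(\bz)\gamma_{p}(T(\bz))/(p-1)!$ is holomorphic near $\Delta_{0}$. Parametrising the segment as $\tau=sT(\bz)$ with $s\in[0,1]$, we have $|e^{-\tau}|=e^{-s\Re T}\le1$ because $\Re T\ge\kappa_{0}|T|>0$. Therefore $|\gamma_{p}(T)|\le|T|^{p}/p$. Combining this with Lemma~\ref{lem:bp},
\[
\frac{|b_{p}(\bz)|}{(p-1)!}\,|\gamma_{p}(T(\bz))|\le C^{p-1}\binom{p+3}{3}\|\bz\|_{\infty}^{p}\frac{|T|^{p}}{p}
=\frac{1}{Cp}\binom{p+3}{3}\big(C\|\bz\|_{\infty}|T(\bz)|\big)^{p}\le\frac{1}{Cp}\binom{p+3}{3}(\vth^{*})^{p},
\]
where the last step uses Lemma~\ref{lem:cxgauge}(c).

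\emph{Uniform convergence and holomorphy.} Since $\vth^{*}<1$, the right-hand side is summable, so the Weierstrass M-test gives absolute and uniform convergence on $\Delta_{0}$. The bound is strict, so by continuity it persists on a slightly larger open polydisc. A locally uniform limit of holomorphic functions is holomorphic (Lemma~\ref{lem:weier}), so the series $F$ is holomorphic on the interior of $\Delta_{0}$.

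\emph{Integral equals series.} The same bound gives $C\|\bz\|_{\infty}|\tau|\le\vth^{*}$ along the segment, so $\tau\bz\in\Pi_{C}$ and $\hB(\tau\bz)$ is defined there. The series for $e^{-\tau}\hB(\tau\bz)/\tau$ is dominated uniformly on the segment by $\sum_{p}\|\bz\|_{\infty}\binom{p+3}{3}(\vth^{*})^{p-1}$. Termwise integration along the segment is therefore legitimate, which shows that the integral and series definitions of $F$ agree.

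\emph{Agreement on real points.} For real $\bz=\bc\in\Delta_{0}$, $T(\bc)$ is the real cut height and the segment is $[0,T(\bc)]$. Thus $\gamma_{p}(T)/(p-1)!=P(p,T)$, and $F(\bc)=\Sm B(\bc)$ by Proposition~\ref{prop:rule}(ii).

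\emph{Real-analyticity.} Since $\bc_{0}$ is an interior point of $\Delta_{0}$ (the polyradius $\delta_{k}t_{0}$ is positive), $\Sm B$ is the restriction to $\R^{4}$ of a holomorphic function on a complex neighbourhood of $\bc_{0}$. Its Taylor series therefore converges near $\bc_{0}$ with real coefficients, so $\Sm B$ is real-analytic at every $\bc_{0}\ne0$.

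\emph{Main obstacle.} The one delicate point is the bound $|e^{-\tau}|\le1$ on a complex segment together with the geometric decay $C\|\bz\|_{\infty}|T|\le\vth^{*}$. Both are supplied by Lemma~\ref{lem:cxgauge}, which is exactly where admissibility of $(k,\vth)$ enters; everything else is routine.
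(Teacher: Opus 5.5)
Your proposal is correct and follows the paper's proof essentially step by step. Both arguments use the same ingredients: a termwise majorant from Lemma~\ref{lem:bp} and Lemma~\ref{lem:cxgauge}(c), the Weierstrass convergence theorem, termwise integration along the complex segment, and the identity $\gamma_{p}(T(\bc))=(p-1)!\,P(p,T(\bc))$ at real points. The only difference is minor. You use $\Re T(\bz)\ge\kappa_{0}|T(\bz)|\ge0$ to get $|\gamma_{p}(T)|\le|T|^{p}/p$. The paper instead uses the cruder bound $|\gamma_{p}(w)|\le|w|^{p}e^{|w|}/p$ together with the boundedness of $|T|$ on $\Delta_{0}$, and so does not need the sector condition at this step.
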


\begin{proof}
Substituting $\tau=ws$ shows $\gamma_{p}(w)=w^{p}\int_{0}^{1}e^{-ws}s^{p-1}ds$, which is an entire
function of $w$ with $|\gamma_{p}(w)|\le|w|^{p}e^{|w|}/p$. By Lemma \ref{lem:bp} and
Lemma \ref{lem:cxgauge}(c), for $\bz\in\Delta_{0}$,
\[
\frac{|b_{p}(\bz)|}{(p-1)!}\big|\gamma_{p}(T(\bz))\big|
\le C^{p-1}\binom{p+3}{3}\|\bz\|_{\infty}^{p}\frac{|T(\bz)|^{p}e^{|T(\bz)|}}{p}
\le\frac{e^{|T(\bz)|}}{C}\binom{p+3}{3}\frac{(\vth^{*})^{p}}{p},
\]
and $|T(\bz)|\le 2^{1/(2k)}T(\bc_{0})$ is bounded on $\Delta_{0}$; since $\vth^{*}<1$ the majorant
series converges. Each summand is holomorphic on a neighbourhood of $\Delta_{0}$ (a polynomial times
the composition of the entire function $\gamma_{p}$ with the holomorphic function $T$), so by the
Weierstrass convergence theorem (Lemma \ref{lem:weier}) $F$ is holomorphic on the interior of
$\Delta_{0}$. That the series equals the integral is the same termwise integration as in
Proposition \ref{prop:rule}(ii), now along the segment $[0,T(\bz)]$, with the same uniform majorant.
For real $\bz=\bc$ the segment is $[0,T(\bc)]\subset(0,\infty)$ and
$\gamma_{p}(T(\bc))=(p-1)!P(p,T(\bc))$, so $F(\bc)=\Sm B(\bc)$ by
Proposition \ref{prop:rule}(ii). A function holomorphic on a complex neighbourhood of a real point
restricts to a real--analytic function near that point; as $\bc_{0}\in\R^{4}\setminus\{0\}$ was
arbitrary, $\Sm B$ is real--analytic on $\R^{4}\setminus\{0\}$.
\end{proof}

\begin{theorem}[Complex master estimate]\label{thm:cxmaster}
Let $(k,\vth)$ be admissible, $\bc_{0}\in\R^{4}\setminus\{0\}$, and $\bz\in\Delta_{0}$. Then for
every integer $N$ with $1\le N\le 1+\frac12\Re T(\bz)$,
\[
\Big|F(\bz)-\sum_{p=2}^{N}b_{p}(\bz)\Big|
\;\le\;\frac{\Lambda(\vth^{*})}{C}\binom{N+4}{3}N!\;\Big(\frac{C\|\bz\|_{\infty}}{\kappa_{0}}\Big)^{N+1}
\;+\;\frac{20\,\Lambda(\vth^{*})\,\vth^{*}}{\kappa_{0}}\;\|\bz\|_{\infty}\,e^{-\Re T(\bz)} .
\]
\end{theorem}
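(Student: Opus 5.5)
The plan is to repeat the proof of Theorem \ref{thm:master} with complex weights. Write $w:=T(\bz)$, $\gamma_p(w)=\int_{[0,w]}e^{-\tau}\tau^{p-1}d\tau$ and $\Gamma_p(w):=(p-1)!-\gamma_p(w)=\int_{w}^{w+\infty}e^{-\tau}\tau^{p-1}d\tau$, where the last integral runs along the horizontal ray from $w$; this is legitimate by Cauchy's theorem, since $\Re w>0$. By Proposition \ref{prop:analytic} we have the decomposition
\[
F(\bz)-\sum_{p=2}^{N}b_{p}(\bz)=\sum_{p>N}\frac{b_p(\bz)}{(p-1)!}\gamma_p(w)-\sum_{p=2}^{N}\frac{b_p(\bz)}{(p-1)!}\Gamma_p(w)=:I-J .
\]
Throughout, Lemma \ref{lem:cxgauge}(c) supplies $\Re w\ge\kappa_0|w|$ and $C\|\bz\|_\infty|w|\le\vth^*$. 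Lemma \ref{lem:bp} gives $|b_p(\bz)|/(p-1)!\le C^{p-1}\binom{p+3}{3}\|\bz\|_\infty^p$, exactly as in the real case.

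\emph{Tail $I$.} Parametrise the segment as $\tau=sw$ with $s\in[0,1]$. Then $|e^{-\tau}|=e^{-s\Re w}\le e^{-\kappa_0 s|w|}$. Summing under the integral sign,
\[
|I|\le\|\bz\|_\infty\int_0^{|w|}e^{-\kappa_0 r}\sum_{p>N}\binom{p+3}{3}(C\|\bz\|_\infty r)^{p-1}\,dr ,
\]
with $r=s|w|$. Since $C\|\bz\|_\infty r\le\vth^*$, Lemma \ref{lem:tail}(c) with $t=\vth^*$ bounds the inner sum by $\binom{N+4}{3}\Lambda(\vth^*)(C\|\bz\|_\infty r)^N$. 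We then extend the integral to $[0,\infty)$ and use $\int_0^\infty e^{-\kappa_0 r}r^N\,dr=N!\,\kappa_0^{-N-1}$. This gives $\frac{\Lambda(\vth^*)}{C}\binom{N+4}{3}N!\,(C\|\bz\|_\infty)^{N+1}\kappa_0^{-N-1}$, which is the first term. No restriction on $N$ is needed here.

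\emph{Head $J$.} This is the step I expect to be the main obstacle: we need a complex substitute for Lemma \ref{lem:gamma}(c), and the hypothesis $N\le1+\frac12\Re w$ is presumably designed for it. On the ray, $\tau=w+u$ with $u\ge0$, so $|e^{-\tau}|=e^{-\Re w-u}$ and $|\tau|^{p-1}\le(|w|+u)^{p-1}$. This gives
\[
|\Gamma_p(w)|\le e^{-\Re w}\int_0^\infty e^{-u}(|w|+u)^{p-1}\,du=e^{-\Re w}(p-1)!\sum_{j=0}^{p-1}\frac{|w|^j}{j!}.
\]
Using $p-1\le\frac12\Re w\le|w|$, the sum is dominated by its top terms, and we get $|\Gamma_p(w)|\le p\,|w|^{p-1}e^{-\Re w}$ as in Lemma \ref{lem:gamma}(c). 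Multiplying by the bound for $|b_p(\bz)|/(p-1)!$ gives
\[
p\binom{p+3}{3}\|\bz\|_\infty(\vth^*)^{p-1}e^{-\Re w}.
\]
The claimed constant $20\Lambda(\vth^*)\vth^*/\kappa_0$ suggests a sharper route. I would instead write $|w|^{j}\le(\Re w/\kappa_0)^j$ and exploit the factor $\frac12$ in the constraint, so that $(\Re w)^{j}e^{-\Re w}$ for $j\le p-1\le\frac12\Re w$ can be absorbed. The aim is to obtain summands of the form $\binom{p+3}{3}\|\bz\|_\infty(\vth^*)^{p-1}\kappa_0^{-1}e^{-\Re w}$, starting at $p=2$. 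Summing these with Lemma \ref{lem:tail}(c) at $N=1$ and $w=\vth^*$ gives $10\Lambda(\vth^*)\vth^*$, and a factor $2$ from the geometric comparison yields the stated $20$. If that bookkeeping fails, the $G(\vth^*)$-type bound from the first route is a fallback estimate of the same shape, though with a worse constant than the one claimed.

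Finally, we add the bounds for $I$ and $J$. For $N=1$ the head is empty, and the statement is just the tail bound.
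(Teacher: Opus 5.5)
\paragraph{Gap in the head estimate.} Your decomposition $I-J$ and your tail bound are correct and essentially the paper's: the same parametrisation $\tau=sw$, the same use of $\Re T\ge\kappa_0|T|$ and Lemma \ref{lem:tail}(c). The gap is in the head $J$, the only place where the hypothesis $N\le 1+\frac12\Re T(\bz)$ must be used.

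What you actually prove is $|\Gamma_p(w)|\le p\,|w|^{p-1}e^{-\Re w}$. This gives only $|J|\le G(\vth^*)\|\bz\|_\infty e^{-\Re T(\bz)}$, which does not imply the statement. As $t\uparrow1$ one has $G(t)\sim4(1-t)^{-5}$, while $20\Lambda(t)t/\kappa_0\sim120\,\kappa_0^{-1}(1-t)^{-4}$. So your constant exceeds the claimed one once $\vth^*$ is close to $1$ (numerically, above about $0.965$), and Definition \ref{def:admissible} permits such pairs. For moderate $\vth^*$ your constant is actually smaller, so ``worse constant'' is not the right description.

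Your sketched sharper route does not close the gap either. Bounding $|w|^j$ by $(\Re w/\kappa_0)^j$ and absorbing powers of $\Re w$ into $e^{-\Re w}$ attacks the wrong factor. The powers $|w|^{p-1}$ must be kept, so that they combine with $(C\|\bz\|_\infty)^{p-1}$ into $(\vth^*)^{p-1}$, and the full factor $e^{-\Re w}$ must survive. The real loss is the factor $p$, which comes from bounding all $p$ terms of $\sum_{j\le p-1}|w|^j/j!$ by the largest one.

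\paragraph{The missing step.} Your instinct about a factor $2$ from a geometric comparison is right, but it must be applied to that sum. For $1\le j\le p-1$ the ratio of consecutive terms is $\frac{|w|^{j-1}/(j-1)!}{|w|^{j}/j!}=\frac{j}{|w|}\le\frac{p-1}{|w|}\le\frac12$, because $p-1\le N-1\le\frac12\Re w\le\frac12|w|$. Hence the sum is at most twice its top term, and $|\Gamma_p(w)|\le 2|w|^{p-1}e^{-\Re w}$. Equivalently, under your integral, $(|w|+u)^{p-1}\le|w|^{p-1}e^{(p-1)u/|w|}\le|w|^{p-1}e^{u/2}$. It follows that
\[
\frac{|b_p(\bz)|}{(p-1)!}\,|\Gamma_p(w)|\le 2\binom{p+3}{3}\|\bz\|_\infty\,(C\|\bz\|_\infty|w|)^{p-1}e^{-\Re w}.
\]
Lemma \ref{lem:tail}(c) with $N=1$ and $C\|\bz\|_\infty|w|\le\vth^*$ then gives $|J|\le20\Lambda(\vth^*)\vth^*\|\bz\|_\infty e^{-\Re T(\bz)}$. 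This is the claimed bound, even without the factor $1/\kappa_0$, because on your horizontal ray $|d\tau|=du$.

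The paper does the same thing on the radial ray $\tau=Ts$, $s\ge1$. It bounds $s^{p-1}\le s^{N-1}$ and uses $\int_1^\infty s^{N-1}e^{-as}\,ds\le e^{-a}/(a-N+1)\le2e^{-a}/a$ with $a=\Re T$. The leftover factor $|T|/a\le1/\kappa_0$ is where the $\kappa_0$ in its constant comes from.
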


\begin{proof}
Write $T=T(\bz)$, $\psi=\arg T$, $|\psi|\le\pi/12<\pi/2$. For $|\psi|<\pi/2$ Cauchy's theorem
applied to the entire function $e^{-\tau}\tau^{p-1}$ on the sector between $[0,R]$ and
$[0,Re^{i\psi}]$, together with the vanishing of the circular arc as $R\to\infty$, gives
$\int_{0}^{\infty e^{i\psi}}e^{-\tau}\tau^{p-1}d\tau=(p-1)!$. Splitting this ray at $T$ we obtain
\[
F(\bz)-\sum_{p=2}^{N}b_{p}(\bz)=I-J,\qquad
I=\int_{[0,T]}e^{-\tau}\Big(\sum_{p>N}\frac{b_{p}(\bz)}{(p-1)!}\tau^{p-1}\Big)d\tau,\quad
J=\sum_{p=2}^{N}\frac{b_{p}(\bz)}{(p-1)!}\int_{\Gamma}e^{-\tau}\tau^{p-1}d\tau,
\]
where $\Gamma$ is the ray from $T$ to $\infty e^{i\psi}$.

\emph{Bound on $I$.} Parametrise $\tau=Ts$, $s\in[0,1]$, so $|e^{-\tau}|=e^{-s\Re T}$ and
$|\tau|=s|T|$; note $C\|\bz\|_{\infty}|\tau|\le\vth^{*}$ by Lemma \ref{lem:cxgauge}(c). By
Lemma \ref{lem:bp} and Lemma \ref{lem:tail}(c),
\[
\Big|\sum_{p>N}\frac{b_{p}(\bz)}{(p-1)!}\tau^{p-1}\Big|
\le\|\bz\|_{\infty}\binom{N+4}{3}\Lambda(\vth^{*})\,(C\|\bz\|_{\infty})^{N}|\tau|^{N},
\]
so, substituting $\sigma=s\Re T$ and using $\Re T\ge\kappa_{0}|T|$,
\[
|I|\le\|\bz\|_{\infty}\binom{N+4}{3}\Lambda(\vth^{*})(C\|\bz\|_{\infty})^{N}
\int_{0}^{1}(s|T|)^{N}e^{-s\Re T}|T|\,ds
\le\Lambda(\vth^{*})\binom{N+4}{3}\frac{N!}{\kappa_{0}^{N+1}}\|\bz\|_{\infty}(C\|\bz\|_{\infty})^{N},
\]
which is the first term.

\emph{Bound on $J$.} On $\Gamma$ write $\tau=Ts$ with $s\ge1$; then $|\tau^{p-1}|\le|T|^{p-1}s^{N-1}$
for $2\le p\le N$. With $a:=\Re T$ and $N-1\le a/2$,
\[
\int_{1}^{\infty}s^{N-1}e^{-as}ds=e^{-a}\int_{0}^{\infty}(1+u)^{N-1}e^{-au}du
\le e^{-a}\int_{0}^{\infty}e^{-(a-N+1)u}du=\frac{e^{-a}}{a-N+1}\le\frac{2e^{-a}}{a},
\]
using $(1+u)^{N-1}\le e^{(N-1)u}$. Hence, by Lemma \ref{lem:bp}, Lemma \ref{lem:tail}(c) with
$N=1$, and $C\|\bz\|_{\infty}|T|\le\vth^{*}$,
\[
|J|\le\frac{2e^{-a}}{a}\sum_{p\ge2}\frac{|b_{p}(\bz)|}{(p-1)!}|T|^{p}
\le\frac{2e^{-a}|T|}{a}\,\|\bz\|_{\infty}\sum_{p\ge2}\binom{p+3}{3}(C\|\bz\|_{\infty}|T|)^{p-1}
\le\frac{2}{\kappa_{0}}\,10\Lambda(\vth^{*})\vth^{*}\,\|\bz\|_{\infty}e^{-\Re T}. \qedhere
\]
\end{proof}

\begin{theorem}[Theorem \ref{thm:main}(iv)]\label{thm:smooth}
Let $(k,\vth)$ be admissible. Then $\Sm B\in C^{\infty}(\R^{4})$, it is real--analytic on
$\R^{4}\setminus\{0\}$, and
\[
\partial^{\bm}(\Sm B)(0)=\bm!\,\beta(\bm)\ \ (|\bm|\ge2),\qquad \partial^{\bm}(\Sm B)(0)=0\ \ (|\bm|\le1),
\]
i.e. the Taylor series of $\Sm B$ at the origin is exactly the formal series $B$. If $B$ has radius
of convergence $0$ \textup{(}as it does for the admissible families of \S\textup{\ref{sec:optimality}}\textup{)},
then $\Sm B$ is not real--analytic at the origin; and no function whatsoever that is asymptotic to
$B$ to all orders can be real--analytic at the origin.
\end{theorem}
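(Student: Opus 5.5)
Real--analyticity on $\R^{4}\setminus\{0\}$ is already Proposition \ref{prop:analytic}. The plan is to obtain $C^{\infty}$ smoothness at the origin, together with the Taylor coefficients, from uniform Gevrey estimates on derivatives, which I will derive from the complex master estimate (Theorem \ref{thm:cxmaster}) via Cauchy's inequalities on the polydiscs $\Delta_{0}$. Fix $N\ge1$ and set $R_{N}:=\Sm B-\sum_{p=2}^{N}b_{p}$ on $\R^{4}\setminus\{0\}$. Its holomorphic extension $F-\sum_{p\le N}b_{p}$ to $\Delta_{0}$ is controlled by Theorem \ref{thm:cxmaster}. For $\bz\in\Delta_{0}$ we have $\|\bz\|_{\infty}\le(1+\delta_{k})t_{0}$, and by Lemma \ref{lem:cxgauge}(c) we have $\Re T(\bz)\ge\kappa_{0}2^{-1/(2k)}T(\bc_{0})\ge c_{*}/t_{0}$ for an explicit $c_{*}>0$. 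Hence, once $t_{0}$ is small enough that $N\le1+\frac12\Re T(\bz)$,
\[
\sup_{\Delta_{0}}\Big|F-\sum_{p=2}^{N}b_{p}\Big|\le A_{N}\,t_{0}^{N+1}+A'\,t_{0}\,e^{-c_{*}/t_{0}}\le A''_{N}\,t_{0}^{N+1}.
\]

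Cauchy's estimates on the polydisc $\Delta_{0}$, which has polyradius $\delta_{k}t_{0}$, then give for every multi--index $\bm$
\[
|\partial^{\bm}R_{N}(\bc_{0})|\le\bm!\,(\delta_{k}t_{0})^{-|\bm|}A''_{N}t_{0}^{N+1}=O\big(\|\bc_{0}\|_{\infty}^{N+1-|\bm|}\big)\qquad(\bc_{0}\to0).
\]
Now fix $\bm$ and take $N\ge|\bm|$. Then $\partial^{\bm}\Sm B(\bc)=\partial^{\bm}\sum_{p\le N}b_{p}(\bc)+o(1)$ as $\bc\to0$, so every partial derivative of $\Sm B$ on $\R^{4}\setminus\{0\}$ extends continuously to $0$. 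The limit value is $\partial^{\bm}b_{|\bm|}(0)=\bm!\beta(\bm)$ when $|\bm|\ge2$, and $0$ when $|\bm|\le1$, because the homogeneous polynomials $b_{p}$ with $p\ne|\bm|$ have vanishing $\bm$-th derivative at $0$.

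To conclude $C^{\infty}$ with these derivatives, I will use the standard lemma: a function continuous on $\R^{4}$ and $C^{1}$ off the origin, whose gradient extends continuously to $0$, is $C^{1}$ on $\R^{4}$. This follows from the mean value theorem along segments avoiding $0$, or along the coordinate segment for the derivative at $0$, combined with continuity. Continuity of $\Sm B$ at $0$ comes from Proposition \ref{prop:rule}(i). Inducting on $|\bm|$ then gives $\Sm B\in C^{\infty}(\R^{4})$ with $\partial^{\bm}\Sm B(0)$ as claimed. The main obstacle is making the Cauchy step uniform: I must check that the constraint $N\le1+\frac12\Re T(\bz)$ holds on all of $\Delta_{0}$ for small $t_{0}$, and that the exponential head term is absorbed into $t_{0}^{N+1}$. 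Both follow from $\Re T\gtrsim1/t_{0}$.

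For the final claims, suppose $f$ is asymptotic to $B$ to all orders and real--analytic at $0$. Then $f$ is $C^{\infty}$ near $0$, its Taylor series converges near $0$, and by uniqueness of asymptotic expansions that Taylor series is $B$ itself (the degree-$p$ part is determined by the order-$p$ asymptotics along rays). This is impossible if $B$ has radius of convergence $0$, and in particular it applies to $f=\Sm B$. The one point needing care is showing that all-orders asymptotics $f-\sum_{p\le N}b_{p}=O(\|\bc\|^{N+1})$ identify the Taylor polynomial of the analytic function $f$. I will argue this by subtracting and restricting to rays $s\mapsto s\bc$, where a one-variable polynomial that is $O(s^{N+1})$ must vanish identically.
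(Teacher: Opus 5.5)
Your proposal is correct and follows essentially the same route as the paper: the complex master estimate on $\Delta_{0}$ with $\Re T\gtrsim 1/t_{0}$ and the exponential head term absorbed into $t_{0}^{N+1}$, then Cauchy inequalities, continuous extension of every derivative to the origin, a flatness lemma, and uniqueness of graded asymptotic expansions along rays. One small point: $F$ is only known to be holomorphic on the \emph{interior} of $\Delta_{0}$, so apply Cauchy's inequalities on a slightly smaller closed polydisc (the paper uses polyradius $\tfrac12\delta_{k}t_{0}$, or you can take radius $r<\delta_{k}t_{0}$ and let $r\to\delta_{k}t_{0}$); this changes only the constants.
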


\begin{proof}
Real--analyticity off the origin is Proposition \ref{prop:analytic}. Fix a multi--index $\alpha$ and
an integer $N>|\alpha|$, and set $g_{N}:=F-\sum_{p=2}^{N}b_{p}$, a holomorphic function on the
interior of $\Delta_{0}$ agreeing with $\Sm B-\sum_{p=2}^{N}b_{p}$ on the reals. On $\Delta_{0}$,
$\|\bz\|_{\infty}\le(1+\delta_{k})t_{0}\le\frac{33}{32}t_{0}$ and, by Lemma \ref{lem:cxgauge}(c)
together with the first inequality of \eqref{eq:Tlower} (which holds unconditionally),
\[
\Re T(\bz)\ \ge\ \kappa_{0}|T(\bz)|\ \ge\ \kappa_{0}\,2^{-1/(2k)}T(\bc_{0})\ \ge\
\frac{\kappa_{0}}{\sqrt2}\cdot\frac{\vth_{k}}{C t_{0}}=:\frac{\kappa_{1}}{C t_{0}},
\qquad \kappa_{1}:=\frac{\kappa_{0}\vth_{k}}{\sqrt2}>0 .
\]
Hence the hypothesis $N\le1+\frac12\Re T(\bz)$ of Theorem \ref{thm:cxmaster} holds as soon as
$t_{0}\le\frac{\kappa_{1}}{2C(N-1)}$, and then
\[
\sup_{\Delta_{0}}|g_{N}|\ \le\ K_{1}(N)\,t_{0}^{N+1}+K_{2}\,t_{0}\,e^{-\kappa_{1}/(Ct_{0})},
\]
with $K_{1}(N)=\frac{\Lambda(\vth^{*})}{C}\binom{N+4}{3}N!\big(\frac{33C}{32\kappa_{0}}\big)^{N+1}$
and $K_{2}=\frac{33}{32}\cdot\frac{20\Lambda(\vth^{*})\vth^{*}}{\kappa_{0}}$. Using
$e^{-u}\le M!\,u^{-M}$ (valid for $u>0$, $M\in\N$, since $e^{u}\ge u^{M}/M!$) with $u=\kappa_{1}/(Ct_{0})$
and $M=N+|\alpha|$, the second term is $\le K_{2}'\,t_{0}^{N+1+|\alpha|}$. Cauchy's inequalities
(Lemma \ref{lem:cauchy}), applied on the closed polydisc of polyradius
$(\tfrac12\delta_{k}t_{0},\dots,\tfrac12\delta_{k}t_{0})$ --- which is contained in the open polydisc
on which $g_{N}$ is holomorphic --- now give
\begin{equation}\label{eq:cauchyest}
\big|\partial^{\alpha}g_{N}(\bc_{0})\big|\ \le\ \frac{\alpha!\,2^{|\alpha|}}{(\delta_{k}t_{0})^{|\alpha|}}\sup_{\Delta_{0}}|g_{N}|
\ \le\ K(N,\alpha)\,t_{0}^{\,N+1-|\alpha|}
\qquad\Big(0<t_{0}\le t_{*}(N)\Big),
\end{equation}
for suitable finite constants $K(N,\alpha)$, where $t_{*}(N):=\min\big\{1,\tfrac{\kappa_{1}}{2C(N-1)}\big\}$
for $N\ge2$ and $t_{*}(1):=1$ (for $N=1$ the hypothesis $N\le1+\frac12\Re T$ is automatic).
Apply \eqref{eq:cauchyest} with $N=|\alpha|+1$. Since
$\partial^{\alpha}\big(\sum_{p=2}^{|\alpha|+1}b_{p}\big)=\alpha!\,\beta(\alpha)
+\partial^{\alpha}b_{|\alpha|+1}$, where $\beta(\alpha):=0$ if $|\alpha|\le1$, $b_{1}:=0$ (so that for $|\alpha|=0$ the sum is empty and
the assertion reduces to $\Sm B(\bc_{0})=O(t_{0}^{2})$), and $\partial^{\alpha}b_{|\alpha|+1}$ is
homogeneous of degree $1$, we conclude that
\[
\partial^{\alpha}(\Sm B)(\bc_{0})=\alpha!\,\beta(\alpha)+O(t_{0})\qquad (t_{0}\to0),
\]
so every partial derivative of $\Sm B$ extends continuously to $\R^{4}$, with value
$\alpha!\beta(\alpha)$ at the origin. As $\Sm B$ is continuous at the origin
(Proposition \ref{prop:rule}(i)) and $C^{\infty}$ off it, Lemma \ref{lem:flat} yields
$\Sm B\in C^{\infty}(\R^{4})$ and $\partial^{\alpha}(\Sm B)(0)=\alpha!\beta(\alpha)$.

Finally, if $f$ is real--analytic at $0$ and asymptotic to $B$ to all orders, then $f$ equals its
Taylor series near $0$; comparing asymptotic expansions in each direction $\bu$ and using
homogeneity (as in the uniqueness argument of Lemma \ref{lem:uniqueasy}) shows that the homogeneous
Taylor parts of $f$ are the $b_{p}$, so $B$ converges near $0$, a contradiction.
\end{proof}

\begin{remark}\label{rem:hartogs}
The restriction to a \emph{conic} complex neighbourhood in Proposition \ref{prop:analytic} is not an
artefact. If $B$ diverges, no function holomorphic on a punctured complex ball
$\{0<\|\bz\|<r\}\subset\C^{4}$ can be asymptotic to $B$ to all orders: by the Hartogs--Osgood--Brown
extension theorem~\cite{Hormander1990} (the only external result quoted in this paper, and used only in this remark; in
$n\ge2$ complex variables a function holomorphic on a punctured ball extends holomorphically to the
whole ball) such a function would extend across the origin, hence be given by a convergent power
series, whose homogeneous parts must then be the $b_{p}$ --- so $B$ would converge. In one variable
no such obstruction exists, which is why the classical Gevrey theory is sectorial. Nothing in
Theorem \ref{thm:main} depends on this remark.
\end{remark}

\section{Consistency}\label{sec:consistency}

The rule was built without any knowledge of whether the formal series has a classical sum. We now
check that whenever a classical sum does exist, the rule finds it, up to an error of exactly the
nonperturbative size established in \S\ref{sec:optimality}.

\begin{proposition}[Theorem \ref{thm:main}(v), Borel--summable case]\label{prop:borelcase}
Fix $\bc\ne0$ and suppose that $\tau\mapsto\hB(\tau\bc)$, which by Proposition \ref{prop:borel} is
holomorphic on $|\tau|<1/(C\|\bc\|_{\infty})$, extends holomorphically to a neighbourhood of
$[0,\infty)$ and satisfies $|\hB(\tau\bc)|\le K_{0}\sigma\tau e^{\sigma\tau}$ for all $\tau\ge0$ and
some $\sigma\in(0,1)$, $K_{0}>0$. Then the Borel sum
$\mathcal{L}B(\bc)=\int_{0}^{\infty}e^{-\tau}\hB(\tau\bc)\frac{d\tau}{\tau}$ exists as an absolutely
convergent integral, and
\[
\big|\Sm B(\bc)-\mathcal{L}B(\bc)\big|\;\le\;\frac{K_{0}\sigma}{1-\sigma}\,e^{-(1-\sigma)T(\bc)} .
\]
\end{proposition}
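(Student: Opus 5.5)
The plan is to write the difference as a single tail integral and bound it using the assumed growth of $\hB$ along the ray.

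First, existence of $\mathcal{L}B(\bc)$. By hypothesis the integrand $e^{-\tau}\hB(\tau\bc)/\tau$ is continuous on $(0,\infty)$. Near $\tau=0$ it is bounded by Proposition \ref{prop:borel}(iii), which gives $|\hB(\tau\bc)|/\tau=O(\tau)$; the hypothesised bound gives the same thing. On all of $[0,\infty)$ it is dominated by $K_{0}\sigma e^{-(1-\sigma)\tau}$, which is integrable because $\sigma<1$. So $\mathcal{L}B(\bc)$ converges absolutely.

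Second, identification. On $[0,T(\bc)]$ the holomorphic extension of $\tau\mapsto\hB(\tau\bc)$ agrees with the series of Definition \ref{def:borel}. Indeed, by Proposition \ref{prop:rule}(i) we have $C\|\bc\|_{\infty}\tau\le\vth<1$ there, so this segment lies inside the disc $|\tau|<1/(C\|\bc\|_{\infty})$, where the series converges. The extension is unique on the connected neighbourhood of $[0,\infty)$ by the identity theorem. Hence the integrand of $\Sm B(\bc)$ in Definition \ref{def:rule} is the restriction of the integrand of $\mathcal{L}B(\bc)$, and
\[
\mathcal{L}B(\bc)-\Sm B(\bc)=\int_{T(\bc)}^{\infty}e^{-\tau}\hB(\tau\bc)\,\frac{d\tau}{\tau}.
\]

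Third, the estimate. Inserting $|\hB(\tau\bc)|/\tau\le K_{0}\sigma e^{\sigma\tau}$ gives
\[
|\mathcal{L}B(\bc)-\Sm B(\bc)|\le K_{0}\sigma\int_{T}^{\infty}e^{-(1-\sigma)\tau}\,d\tau=\frac{K_{0}\sigma}{1-\sigma}\,e^{-(1-\sigma)T(\bc)},
\]
which is exactly the claimed bound.

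The only step needing any care is the second one: checking that the analytically continued integrand coincides with the convergent series on the cut segment. No real obstacle is expected, because the cut is placed strictly inside the guaranteed disc of holomorphy, at $\vth<1$ times its radius.
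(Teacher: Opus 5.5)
Your proposal is correct and follows the paper's proof: you dominate the integrand by $K_{0}\sigma e^{-(1-\sigma)\tau}$, write $\mathcal{L}B(\bc)-\Sm B(\bc)$ as the tail $\int_{T(\bc)}^{\infty}$, and integrate the exponential. Your extra check that the continued integrand agrees with the convergent series on $[0,T(\bc)]$ is a step the paper leaves implicit.
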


\begin{proof}
The integrand is bounded by $K_{0}\sigma e^{-(1-\sigma)\tau}$, which is integrable on $[0,\infty)$;
near $\tau=0$ it is continuous by Proposition \ref{prop:borel}(iii). Since
$\Sm B(\bc)=\int_{0}^{T(\bc)}$, the difference is $-\int_{T(\bc)}^{\infty}$, bounded by
$K_{0}\sigma\int_{T}^{\infty}e^{-(1-\sigma)\tau}d\tau$.
\end{proof}

\begin{proposition}[Theorem \ref{thm:main}(v), convergent case]\label{prop:convcase}
Fix $\bc\ne0$ and suppose $|b_{p}(\bc)|\le K_{0}\sigma^{p}$ for all $p\ge2$ and some
$\sigma\in(0,1)$ --- e.g. this holds with $\sigma=\|\bc\|_{1}/\rho$ if $|\beta(\bm)|\le K_{0}\rho^{-|\bm|}$
and $\|\bc\|_{1}<\rho$. Then $B(\bc)=\sum_{p\ge2}b_{p}(\bc)$ converges absolutely and
\[
\big|\Sm B(\bc)-B(\bc)\big|\;\le\;\frac{K_{0}\sigma}{1-\sigma}\,e^{-(1-\sigma)T(\bc)} .
\]
\end{proposition}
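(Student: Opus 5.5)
We use the Poisson representation of Proposition \ref{prop:rule}(ii), $\Sm B(\bc)=\sum_{p\ge2}P(p,T)\,b_{p}(\bc)$ with $T=T(\bc)$. Since $|b_{p}(\bc)|\le K_{0}\sigma^{p}$ and $\sigma<1$, the series $B(\bc)=\sum_{p\ge2}b_{p}(\bc)$ converges absolutely. Subtracting, and using $P+Q=1$ from Lemma \ref{lem:gamma}, we get
\[
B(\bc)-\Sm B(\bc)=\sum_{p\ge2}Q(p,T)\,b_{p}(\bc),
\]
which is absolutely convergent because $0<Q<1$. It therefore suffices to prove
\[
\sum_{p\ge2}Q(p,T)\,\sigma^{p}\le\frac{\sigma}{1-\sigma}\,e^{-(1-\sigma)T},
\]
and then multiply by $K_{0}$.

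To evaluate this sum we insert the Poisson form $Q(p,T)=e^{-T}\sum_{j=0}^{p-1}T^{j}/j!$ from Lemma \ref{lem:gamma}(a). All terms are nonnegative, so we may exchange the order of summation:
\[
\sum_{p\ge2}Q(p,T)\sigma^{p}\le\sum_{p\ge1}\sigma^{p}e^{-T}\sum_{j\le p-1}\frac{T^{j}}{j!}
=e^{-T}\sum_{j\ge0}\frac{T^{j}}{j!}\sum_{p\ge j+1}\sigma^{p}
=\frac{\sigma}{1-\sigma}\,e^{-T}\sum_{j\ge0}\frac{(\sigma T)^{j}}{j!}
=\frac{\sigma}{1-\sigma}\,e^{-(1-\sigma)T}.
\]
This is exactly the claimed bound. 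Equivalently, with $K\sim\Poi(T)$ one has $\sum_{p\ge1}\Pr[K<p]\,\sigma^{p}=\frac{\sigma}{1-\sigma}\mathbb{E}[\sigma^{K}]$ and $\mathbb{E}[\sigma^{K}]=e^{-(1-\sigma)T}$. At $\bc=0$ there is nothing to prove, and $\bc\ne0$ is assumed anyway.

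For the parenthetical example we use the triangle inequality and Lemma \ref{lem:count}(c):
\[
|b_{p}(\bc)|\le K_{0}\rho^{-p}h_{p}(|\bc|)\le K_{0}\rho^{-p}\|\bc\|_{1}^{p}=K_{0}\sigma^{p},\qquad \sigma=\|\bc\|_{1}/\rho<1.
\]

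No step is a real obstacle. The only point needing care is the interchange of summations, and it is justified by nonnegativity (Tonelli). An alternative would be an integral route analogous to Proposition \ref{prop:borelcase}, but the discrete Poisson computation is cleaner and gives the constant exactly.
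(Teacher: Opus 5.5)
Your proof is correct, and it reaches the paper's constant by a somewhat different route. Both series in $B-\Sm B=\sum_{p\ge2}Q(p,T)\,b_p(\bc)$ converge absolutely, so the subtraction is legitimate. The interchange giving $\sum_{p\ge1}Q(p,T)\sigma^p=\frac{\sigma}{1-\sigma}e^{-T}\sum_{j\ge0}\frac{(\sigma T)^j}{j!}$ is justified by Tonelli. Your treatment of the parenthetical example via Lemma \ref{lem:count}(c) is the same as the paper's.

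The paper argues on the integral side instead. It bounds the Borel transform along the ray by $|\hB(\tau\bc)|\le K_0\sum_{p\ge2}(\sigma\tau)^p/(p-1)!\le K_0\sigma\tau e^{\sigma\tau}$. It then uses Tonelli to identify the full Laplace integral $\int_0^\infty e^{-\tau}\hB(\tau\bc)\,d\tau/\tau$ with $B(\bc)$. Finally it invokes Proposition \ref{prop:borelcase}, so the error is the integral of $K_0\sigma e^{-(1-\sigma)\tau}$ over $[T,\infty)$.

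The two computations are dual. Since $Q(p,T)=\frac{1}{(p-1)!}\int_T^\infty e^{-\tau}\tau^{p-1}\,d\tau$, your majorant $K_0\sum_{p\ge1}Q(p,T)\sigma^p$ is, term by term, exactly the paper's tail integral of $K_0\sigma\tau e^{\sigma\tau}$. Both routes also throw away the same $p=1$ contribution $K_0\sigma e^{-T}$, so either bound could be sharpened by that amount.

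Your route is more self-contained. It uses only the weight representation of Proposition \ref{prop:rule}(ii) and the Poisson generating function $\mathbb{E}[\sigma^K]=e^{-(1-\sigma)T}$, and never has to form the Laplace integral out to $\infty$ or identify it with $B(\bc)$.

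The paper's route is more structural. It presents the convergent case as a special case of the Borel--summable case. Along the way it shows that the classical Borel sum exists and equals the ordinary sum for such series, which is what the consistency section is meant to establish.
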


\begin{proof}
The stated sufficient condition follows from Lemma \ref{lem:count}(c). Absolute convergence of
$\sum b_{p}(\bc)$ is clear. Moreover
$|\hB(\tau\bc)|\le K_{0}\sum_{p\ge2}\frac{(\sigma\tau)^{p}}{(p-1)!}
=K_{0}\sigma\tau\sum_{q\ge1}\frac{(\sigma\tau)^{q}}{q!}\le K_{0}\sigma\tau e^{\sigma\tau}$ for
$\tau\ge0$, and $\tau\mapsto\hB(\tau\bc)$ is entire. By Tonelli's theorem the double sum/integral is
absolutely convergent, so
$\int_{0}^{\infty}e^{-\tau}\hB(\tau\bc)\frac{d\tau}{\tau}=\sum_{p\ge2}b_{p}(\bc)=B(\bc)$, and
Proposition \ref{prop:borelcase} applies.
\end{proof}

\begin{proposition}[Independence of the parameters, modulo exponentially small terms]\label{prop:params}
Let $(k,\vth)$ and $(k',\vth')$ be two parameter pairs, with cut heights $T,T'$ and rules
$\Sm,\Sm'$, and let $\vth'_{k'}:=\vth'4^{-1/(2k')}$, $\vth_{\min}:=\min\{\vth_{k},\vth'_{k'}\}$,
$\vth_{\max}:=\max\{\vth,\vth'\}$. Then for every $\bc$ with
$0<x\le\min\{\vth_{\min},1-\vth_{\max}\}$,
\[
\big|\Sm B(\bc)-\Sm'B(\bc)\big|\;\le\;\frac{2\big(G(\vth_{\max})+139\,\Lambda(\vth_{\max})x^{-5/2}\big)}{C}\,e^{-\vth_{\min}/x}.
\]
The same statement holds if the two rules are built from two different admissible constants
$C\le C'$ in Hypothesis \ref{hyp:H}, provided $C$ and $x$ are replaced by $C'$ and
$x':=C'\|\bc\|_{\infty}$ \emph{everywhere, the hypothesis included}: for
$0<x'\le\min\{\vth_{\min},1-\vth_{\max}\}$ the same bound holds with $C',x'$ in place of $C,x$.
\end{proposition}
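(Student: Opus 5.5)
The plan is a triangle inequality through the common optimal truncation. Let $N^{*}=\lfloor T(\bc)\rfloor+1$ and $N'^{*}=\lfloor T'(\bc)\rfloor+1$ be the two optimal truncation orders. Theorem~\ref{thm:optimal} applies to each rule separately, since its hypothesis $0<x\le\min\{\vth_{k},1-\vth\}$ (resp.\ with primes) is implied by $x\le\min\{\vth_{\min},1-\vth_{\max}\}$. Because $G$ and $\Lambda$ are increasing on $[0,1)$ and $e^{-\vth_{k}/x}\le e^{-\vth_{\min}/x}$, each rule lies within
\[
\frac{G(\vth_{\max})+139\Lambda(\vth_{\max})x^{-5/2}}{C}\,e^{-\vth_{\min}/x}
\]
of its own truncated sum. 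The catch is that the two truncations $\sum_{p\le N^{*}}b_{p}$ and $\sum_{p\le N'^{*}}b_{p}$ differ when $N^{*}\ne N'^{*}$. The clean bound with factor exactly $2$ would then need a third term, so I would avoid comparing the two truncations.

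Instead I would compare both rules to the single truncation at $N:=\min\{N^{*},N'^{*}\}$ and apply the master estimate, Theorem~\ref{thm:master}, directly to each rule with this common $N$. It is admissible for both, since $N\le T+1$ and $N\le T'+1$. The head term for each rule is $\frac{G(\vth_{\max})}{C}xe^{-T}\le \frac{G(\vth_{\max})}{C}e^{-\vth_{\min}/x}$, by \eqref{eq:Tlower} and $x\le 1$. For the tail term I would redo the Stirling computation of Theorem~\ref{thm:optimal}. First, $Nx\le(T+1)x\le 1$ as there. Second, $N\ge\min(T,T')\ge\vth_{\min}/x$, again by \eqref{eq:Tlower} applied to each pair. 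Third, $N\le 2/x$. Together these give $\binom{N+4}{3}N!x^{N+1}\le 139x^{-5/2}e^{-\vth_{min}/x}$ as before, with $\Lambda(\vth)\le\Lambda(\vth_{\max})$. Subtracting the two estimates yields the factor $2$.

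For the second statement, with $C\le C'$: Hypothesis~\ref{hyp:H} with constant $C$ implies it with $C'$, because $C^{p-1}\le C'^{p-1}$. Both rules, built with $C$ and $C'$, are therefore rules for the admissible class $\mathcal{A}_{C'}$, provided the rule built with $C$ still satisfies the estimates in the $C'$-scale. I would check this by noting the rule with $C$ has cut $T_C=\vth/(C\nu_k)\ge T_{C'}$. Its master estimate in the $C$-scale bounds errors by quantities in $x=C\|\bc\|_\infty\le x'$, and those bounds are monotone in the needed direction. The one exception is the factor $e^{-T}$, and $e^{-T_C}\le e^{-T_{C'}}$ there. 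Hence the argument of the previous paragraph, run with $x'$, $C'$, $T_{C'}$ in the constants, goes through with $N=\lfloor\min(T_C,T'_{C'})\rfloor+1$. The step I expect to need most care is this one: verifying that the $C$-rule's tail bound, with prefactor $1/C$ rather than $1/C'$, still fits under $(\cdot)/C'$. Since $\frac1C x^{N+1}=\frac{1}{C'}(C'/C)x^{N+1}\le\frac{1}{C'}x'^{N+1}(x/x')^{N}\le \frac1{C'}x'^{N+1}$, this should hold, using $x\le x'$ and $N\ge1$. The head term is handled the same way via $\frac{x}{C}=\|\bc\|_\infty=\frac{x'}{C'}$.
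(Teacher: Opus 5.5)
Your proposal is correct and is essentially the paper's own proof. Your $N=\min\{N^{*},N'^{*}\}$ is exactly the paper's common order $\lfloor\min(T,T')\rfloor+1$; Theorem~\ref{thm:master} is applied to both rules with it, the results are subtracted, and the Stirling steps of Theorem~\ref{thm:optimal} are rerun using only $N\le\min(T,T')+1$, $Nx\le 1$, $N\ge\vth_{\min}/x$ and the monotonicity of $G$ and $\Lambda$. For the second statement the paper relies on the same two observations you isolate, $\frac1C x^{N+1}=\|\bc\|_{\infty}x^{N}\le\frac1{C'}(x')^{N+1}$ and $T\ge\vth_{k}/x\ge\vth_{k}/x'$.
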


\begin{proof}
Put $N:=\lfloor\min(T,T')\rfloor+1$, so that $N\le T+1$ and $N\le T'+1$, and
$N\ge\min(T,T')\ge\vth_{\min}/x\ge1$. Apply Theorem \ref{thm:master} to each rule with this common
$N$, subtract, and estimate exactly as in the proof of Theorem \ref{thm:optimal} (all the steps
there used only $N\le\min(T,T')+1$, $Nx\le\vth_{\max}+x\le1$ and $N\ge\vth_{\min}/x$), noting that
$\Lambda$ and $G$ are increasing. For the final statement observe that
$\frac1Cx^{N+1}=\|\bc\|_{\infty}x^{N}\le\frac1{C'}(x')^{N+1}$ and $T\ge\vth_{k}/x\ge\vth_{k}/x'$, so
that every step goes through with $(C,x)$ replaced by $(C',x')$.
\end{proof}

\section{Optimality: what no summation rule can do}\label{sec:optimality}

This section proves that the construction is not merely one possible answer but an essentially
optimal one, and that the residual ambiguity it carries is intrinsic to Hypothesis \ref{hyp:H}.
Throughout, we recall that $\mathcal{A}_{C}$ denotes the set of admissible coefficient families,
i.e. those satisfying Hypothesis \ref{hyp:H}, and
$\|\beta\|_{C}:=\sup_{|\bm|\ge2}|\beta(\bm)|\big/\big((|\bm|-1)!C^{|\bm|-1}\big)$, so that
$\mathcal{A}_{C}=\{\|\beta\|_{C}\le1\}$.

\subsection{The series diverges everywhere}

\begin{theorem}\label{thm:divergence}
Let $\beta^{\star}(\bm):=(|\bm|-1)!\,C^{|\bm|-1}$ for all $|\bm|\ge2$. Then
$\beta^{\star}\in\mathcal{A}_{C}$ \textup{(}with equality in Hypothesis \ref{hyp:H}\textup{)}, and
for \emph{every} $\bc\in\R^{4}\setminus\{0\}$ the terms $b^{\star}_{p}(\bc)$ are unbounded; hence
$\sum_{p\ge2}b^{\star}_{p}(\bc)$ diverges at every $\bc\ne0$. In particular no rule can be
``evaluation of a convergent series''.
\end{theorem}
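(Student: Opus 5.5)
Admissibility is immediate: by definition $|\beta^{\star}(\bm)|=(p-1)!\,C^{p-1}$ for $|\bm|=p\ge2$, so Hypothesis \ref{hyp:H} holds with equality and $\beta^{\star}\in\mathcal{A}_{C}$. The content of the theorem is the unboundedness of $b^{\star}_{p}(\bc)=(p-1)!\,C^{p-1}h_{p}(\bc)$ for every fixed $\bc\ne0$. Here $h_{p}$ is the complete homogeneous symmetric polynomial, now evaluated at a real vector with possibly mixed signs. For nonnegative $\bc$ this is trivial from Lemma \ref{lem:count}(c), since $h_{p}(\bc)\ge\|\bc\|_{\infty}^{p}$ and $(p-1)!\,(C\|\bc\|_{\infty})^{p}\to\infty$. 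Signs are the only difficulty, because cancellations in $h_{p}$ could in principle make it small.

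To handle arbitrary real $\bc$, I will use the generating function of Lemma \ref{lem:count}(b), extended to real (indeed complex) arguments by absolute convergence:
\[
\sum_{p\ge0}h_{p}(\bc)s^{p}=\prod_{i=1}^{4}(1-c_{i}s)^{-1},\qquad |s|<1/\|\bc\|_{\infty}.
\]
This is a rational function of $s$. Its poles are at $s=1/c_{i}$ for each $c_{i}\ne0$, and there is at least one such pole because $\bc\ne0$. So the power series $\sum_{p}h_{p}(\bc)s^{p}$ has radius of convergence exactly $\rho:=1/\|\bc\|_{\infty}$. Poles at the same location with the same sign combine into a higher-order pole, which cannot cancel, since the product has no zeros. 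By Cauchy--Hadamard, $\limsup_{p}|h_{p}(\bc)|^{1/p}=\|\bc\|_{\infty}>0$. Hence there are infinitely many $p$ with $|h_{p}(\bc)|\ge(\|\bc\|_{\infty}/2)^{p}$.

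Along those $p$ we get
\[
|b^{\star}_{p}(\bc)|\ge \tfrac1C\,(p-1)!\,(C\|\bc\|_{\infty}/2)^{p}\to\infty,
\]
because the factorial beats any geometric decay. So the terms are unbounded and the series diverges. The final sentence of the theorem then follows: at any $\bc\ne0$ there is an admissible family whose series fails to converge. The main obstacle is exactly the possible sign cancellation in $h_{p}$, which rules out a termwise lower bound; the radius-of-convergence argument bypasses it cleanly.

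A more explicit alternative is partial fractions. Write $h_{p}(\bc)=\sum_{j}P_{j}(p)\,a_{j}^{p}$ over the distinct nonzero values $a_{j}$ of the $c_{i}$, with polynomials $P_{j}$ that are not identically zero. Two of the $a_{j}$ can share the maximal modulus only as $\pm\|\bc\|_{\infty}$, and then parity arguments still give infinitely many large $p$. I would keep the Cauchy--Hadamard route, since it is shorter and covers all cases uniformly.
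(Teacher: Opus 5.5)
Your proposal is correct and follows the paper's proof essentially step for step: the generating function $\prod_{i}(1-c_{i}s)^{-1}$, whose constant numerator rules out pole cancellation so that the radius of convergence is exactly $1/\|\bc\|_{\infty}$; Cauchy--Hadamard to get infinitely many $p$ with $|h_{p}(\bc)|$ geometrically large; and the factorial to make $b^{\star}_{p}(\bc)$ unbounded. The only differences are cosmetic. The paper writes $\tilde h_{p}$ for the signed polynomial, since $h_{p}$ was defined only on $[0,\infty)^{4}$, and it uses $(\|\bc\|_{\infty}-\eps)^{p}$ where you use $(\|\bc\|_{\infty}/2)^{p}$.
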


\begin{proof}
Here $b^{\star}_{p}(\bc)=(p-1)!C^{p-1}\,\tilde h_{p}(\bc)$ with
$\tilde h_{p}(\bc):=\sum_{|\bm|=p}\bc^{\bm}$ the (signed) complete homogeneous polynomial. For
$|s|<1/\|\bc\|_{\infty}$ we have, by absolute convergence and Lemma \ref{lem:count}(b) applied to
$|\bc|$,
\[
\sum_{p\ge0}\tilde h_{p}(\bc)s^{p}=\prod_{i=1}^{4}\frac{1}{1-c_{i}s}=:R(s).
\]
$R$ is a rational function whose numerator is the constant $1$; hence its poles are exactly the
points $s=1/c_{i}$ with $c_{i}\ne0$, no cancellation being possible, and the pole nearest the origin
is at distance $1/\|\bc\|_{\infty}$. The Taylor series of a rational function at $0$ has radius of
convergence equal to the distance to the nearest pole, so
$\limsup_{p}|\tilde h_{p}(\bc)|^{1/p}=\|\bc\|_{\infty}$. Fix $0<\eps<\|\bc\|_{\infty}$; then
$|\tilde h_{p}(\bc)|\ge(\|\bc\|_{\infty}-\eps)^{p}$ for infinitely many $p$, and for those
\[
|b^{\star}_{p}(\bc)|\ge (p-1)!\,C^{p-1}(\|\bc\|_{\infty}-\eps)^{p}\longrightarrow\infty .
\]
A series whose terms are unbounded diverges.
\end{proof}

\subsection{The Borel transform can have a natural boundary}

\begin{theorem}\label{thm:natural}
Let $g(w):=\sum_{j\ge0}w^{2^{j}}$, holomorphic on the unit disc. Then $g$ admits no holomorphic
extension to any open set meeting the unit circle: the circle $|w|=1$ is a natural boundary.
Consequently, fix $i\in\{1,\dots,4\}$ and define
\[
\beta^{\mathrm{lac}}(\bm):=\begin{cases}(p-1)!\,C^{p-1}, & \bm=p\,e_{i},\ p-1\in\{2^{j}:j\ge0\},\\
0,&\text{otherwise.}\end{cases}
\]
Then $\beta^{\mathrm{lac}}\in\mathcal{A}_{C}$, and for every $\bc\in\R^{4}$ with $c_{i}\ne0$,
\[
\hB(\tau\bc)=\frac{w\,g(w)}{C},\qquad w=C c_{i}\tau ,
\]
so that $\tau\mapsto\hB(\tau\bc)$ has radius of convergence exactly $1/(C|c_{i}|)$ and admits
\emph{no} analytic continuation to any open set meeting that circle. Choosing $i$ with
$|c_{i}|=\|\bc\|_{\infty}$, we conclude: for every $\bc\ne0$ there is an admissible coefficient
family for which the Borel transform along $\bc$ cannot be continued past
$|\tau|=1/(C\|\bc\|_{\infty})$ in any direction.
\end{theorem}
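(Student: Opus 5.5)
The proof has two independent parts: the natural-boundary property of the lacunary series $g$, and a direct computation of $\hB$ for $\beta^{\mathrm{lac}}$.

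\emph{The natural boundary.} I would use the classical self-similarity argument rather than quote Hadamard's gap theorem, so that the paper stays self-contained. The key identity is $g(w)=w+g(w^{2})$, and iterating it gives
\[
g(w)=\sum_{j<n}w^{2^{j}}+g\big(w^{2^{n}}\big).
\]
Let $\zeta=e^{2\pi i a/2^{n}}$ be a dyadic root of unity, and let $w=r\zeta$ with $r\uparrow1$. Then $w^{2^{m}}=r^{2^{m}}$ for all $m\ge n$. Hence $g(r\zeta)$ is a polynomial in $r\zeta$, which stays bounded, plus $\sum_{m\ge n}r^{2^{m}}$.

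The tail $\sum_{m\ge n}r^{2^{m}}$ tends to $+\infty$ as $r\uparrow1$, by monotone convergence: each term tends to $1$ and there are infinitely many terms. So $|g(r\zeta)|\to\infty$ along every dyadic radius.

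Now suppose $g$ extended holomorphically to an open set $U$ meeting the circle. Shrinking $U$ to a small disc about a point of the circle, $U$ contains some dyadic root $\zeta$, because these are dense. The extension agrees with $g$ on $U\cap\{|w|<1\}$ by the identity theorem, after taking $U$ connected. It is then continuous, hence bounded, near $\zeta$. This contradicts the blow-up along the radius to $\zeta$. The main point needing care is exactly this agreement and boundedness step, and connectedness of a small disc handles it.

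\emph{The computation of $\hB$.} First, admissibility holds trivially: on its support $\beta^{\mathrm{lac}}$ attains equality in Hypothesis \ref{hyp:H}, and it is zero elsewhere.

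Next, $b_{p}(\bc)=(p-1)!\,C^{p-1}c_{i}^{p}$ when $p-1=2^{j}$ for some $j\ge0$, and $b_{p}(\bc)=0$ otherwise. The condition $p\ge2$ is automatic, since $2^{j}\ge1$. Therefore
\[
\hB(\tau\bc)=\sum_{j\ge0}C^{2^{j}}c_{i}^{2^{j}+1}\tau^{2^{j}+1}=\frac1C\sum_{j}w^{2^{j}+1}=\frac{w\,g(w)}{C},\qquad w=Cc_{i}\tau.
\]

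The radius of convergence in $\tau$ is $1/(C|c_{i}|)$, since the coefficients of $g$ are $0$ or $1$ and infinitely many equal $1$.

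For the continuation claim, suppose $\tau\mapsto\hB(\tau\bc)$ continued to an open set meeting the circle $|\tau|=1/(C|c_{i}|)$. The map $\tau\mapsto w$ is a linear biholomorphism, so $wg(w)$ would continue to an open set $V$ meeting $|w|=1$. Shrinking $V$ away from $0$, dividing by $w$ then gives a continuation of $g$, which contradicts the first part.

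Finally, choosing $i$ with $|c_{i}|=\|\bc\|_{\infty}$ gives the concluding statement for every $\bc\ne0$.
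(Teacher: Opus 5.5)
Your proposal is correct and follows essentially the same route as the paper: $g$ blows up along every radius ending at a dyadic root of unity, these roots are dense, and this contradicts boundedness of a continuous extension; then the direct computation $\hB(\tau\bc)=w\,g(w)/C$ and division by the nonvanishing factor $w/C$ transfer the obstruction to the Borel transform. Two small remarks: the functional equation $g(w)=w+g(w^{2})$ is not actually used beyond the observation $\zeta^{2^{m}}=1$ for $m\ge n$, and the connectedness needed for the identity theorem is that of $U\cap\{|w|<1\}$, which holds automatically for a small disc centred on the circle.
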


\begin{proof}
\emph{Natural boundary.} Let $\omega=e^{2\pi i a/2^{n}}$ be a dyadic root of unity ($a,n$ integers,
$n\ge0$). For $j\ge n$ we have $\omega^{2^{j}}=1$, so for $0<r<1$,
\[
g(r\omega)=\sum_{j<n}(r\omega)^{2^{j}}+\sum_{j\ge n}r^{2^{j}},
\qquad\text{hence}\qquad
|g(r\omega)|\ \ge\ \sum_{j\ge n}r^{2^{j}}-n .
\]
For any $K\in\N$, $\sum_{j=n}^{n+K}r^{2^{j}}\to K+1$ as $r\uparrow1$, so
$\liminf_{r\uparrow1}\sum_{j\ge n}r^{2^{j}}\ge K+1$ for every $K$; thus
$|g(r\omega)|\to\infty$ as $r\uparrow1$. Suppose now $g$ extended holomorphically to an open set $U$
with $U\cap\{|w|=1\}\ne\emptyset$; pick $\zeta_{0}\in U$ with $|\zeta_{0}|=1$ and $\eps>0$ with the
disc $D(\zeta_{0},2\eps)\subset U$. Dyadic roots of unity are dense in the unit circle (the numbers
$a/2^{n}$ are dense in $[0,1]$), so choose such an $\omega$ with $|\omega-\zeta_{0}|<\eps$. For
$1-\eps<r\le1$ we get $|r\omega-\zeta_{0}|\le(1-r)+|\omega-\zeta_{0}|<2\eps$, so the compact segment
$\{r\omega:1-\eps/2\le r\le1\}$ lies in $U$; the extension is continuous, hence bounded there,
contradicting $|g(r\omega)|\to\infty$.

\emph{The coefficient family.} Hypothesis \ref{hyp:H} holds with equality on the support. Only the
multi--indices $\bm=p\,e_{i}$ contribute, so $b_{p}(\bc)=\beta^{\mathrm{lac}}(p\,e_{i})c_{i}^{p}$ and
\[
\hB(\tau\bc)=\sum_{p:\,p-1\in\{2^{j}\}}\frac{(p-1)!C^{p-1}}{(p-1)!}c_{i}^{p}\tau^{p}
=\frac{1}{C}\sum_{j\ge0}(Cc_{i}\tau)^{2^{j}+1}=\frac{w\,g(w)}{C},\qquad w=Cc_{i}\tau .
\]
If $\tau\mapsto\hB(\tau\bc)$ extended holomorphically to an open set meeting
$|\tau|=1/(C|c_{i}|)$, then, dividing by the nonvanishing holomorphic function $w/C$, so would $g$
across a point of the unit circle --- impossible.
\end{proof}

\begin{corollary}\label{cor:noborel}
For each $\bc\ne0$ there is an admissible coefficient family for which all of the following are
\emph{undefined}; consequently none of them is defined as a functional on the class
$\mathcal{A}_{C}$: the Borel sum
$\int_{0}^{\infty}e^{-\tau}\hB(\tau\bc)d\tau/\tau$; the lateral Borel sums; their median or any
weighted average; principal values; accelero--summation; and any procedure requiring the analytic
continuation of $\tau\mapsto\hB(\tau\bc)$ beyond $|\tau|=1/(C\|\bc\|_{\infty})$, or an exponential
bound on such a continuation. Even in the mildest case --- the family of Proposition
\ref{prop:borel}(iv), for which the continuation exists --- the Borel integral diverges whenever
$c_{1}>0$, since the integrand has a simple pole at $\tau=1/(Cc_{1})$ on the contour.
\end{corollary}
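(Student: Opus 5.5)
The plan is to take, for a given $\bc\ne0$, the lacunary family $\beta^{\mathrm{lac}}$ of Theorem \ref{thm:natural}, with the index $i$ chosen so that $|c_{i}|=\|\bc\|_{\infty}$. That theorem already gives $\beta^{\mathrm{lac}}\in\mathcal{A}_{C}$. It also identifies $\tau\mapsto\hB(\tau\bc)$ with $w\,g(w)/C$, $w=Cc_{i}\tau$, and shows this function admits no holomorphic continuation to any open set meeting $|\tau|=1/(C\|\bc\|_{\infty})$.

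The corollary is then checked procedure by procedure. Each listed procedure needs, as part of its definition, the function $\tau\mapsto\hB(\tau\bc)$ at points of the ray $[0,\infty)$, or of rays $e^{\pm i0}[0,\infty)$ and nearby sectors, lying beyond the circle $|\tau|=1/(C\|\bc\|_{\infty})$:
\begin{itemize}
\item The Borel integral needs $\hB(\tau\bc)$ on the whole ray.
\item Lateral sums need a continuation to a neighbourhood of the rays just above and below $[0,\infty)$.
\item Medians, weighted averages and principal values are built from lateral sums or from such continuations.
\item Accelero-summation needs the continuation of the Borel transform, or of its accelerated transform, along a sector reaching past the circle, with exponential-type bounds.
\end{itemize}
Every such set of points, or neighbourhood of a ray, is an open set meeting the circle, or at least contains such a set. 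So the needed continuation does not exist, and the object is undefined.

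One point needs care: the power series $\sum b_p/(p-1)!\,\tau^{p-1}$ itself diverges for $|\tau|>1/(C|c_i|)$. To see this, note that its radius of convergence is exactly $1/(C|c_i|)$, because $g$ has radius $1$. So no "series definition" beyond the circle is available either, and there is no way to make sense of the integrand there. Since one member of $\mathcal{A}_{C}$ makes each procedure undefined, none of them defines a functional on the whole class $\mathcal{A}_{C}$.

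For the final sentence, use Proposition \ref{prop:borel}(iv): there $\hB(\tau\bc)=C c_{1}^{2}\tau^{2}/(1-Cc_{1}\tau)$. When $c_{1}>0$, the integrand $e^{-\tau}\hB(\tau\bc)/\tau$ has a simple pole at $\tau_{0}=1/(Cc_{1})\in(0,\infty)$ with nonzero residue $-e^{-\tau_{0}}\,C c_{1}^{2}\tau_{0}/(Cc_{1})\ne0$. A nonzero simple pole makes $\int|\,\cdot\,|$ behave like $\int d\tau/|\tau-\tau_{0}|$, so the integral diverges as a Lebesgue integral. The main obstacle is only one of formulation: the statement is about a heterogeneous list of procedures, so I would fix, for each procedure, the minimal analytic datum it requires and note that this datum involves values of the continuation on an open set meeting the circle. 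After that, everything reduces to Theorem \ref{thm:natural}.
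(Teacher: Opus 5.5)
Your proposal is correct and follows essentially the same route as the paper. The paper's proof is the one-line combination of Theorem~\ref{thm:natural}, applied to the lacunary family with $|c_{i}|=\|\bc\|_{\infty}$ so that no function exists to integrate beyond the disc, and Proposition~\ref{prop:borel}(iv), which gives a nonzero residue $-e^{-\tau_{0}}/C$ at $\tau_{0}=1/(Cc_{1})\in(0,\infty)$. The paper states the final divergence as the logarithmic blow-up of $\int_{0}^{\tau_{0}-\eps}$; your Lebesgue-integrability argument gives this as well, because the real integrand has constant sign on each side of $\tau_{0}$.
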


\begin{proof}
Immediate from Theorem \ref{thm:natural} and Proposition \ref{prop:borel}(iv): in the first case
there is no function to integrate beyond the disc; in the second the integrand
$e^{-\tau}\frac{Cc_{1}^{2}\tau}{1-Cc_{1}\tau}\cdot\frac1\tau$ has a nonzero residue at a point of
$(0,\infty)$, so $\int_{0}^{1/(Cc_{1})-\eps}$ diverges logarithmically as $\eps\downarrow0$.
\end{proof}

\begin{remark}
Corollary \ref{cor:noborel} is precisely why a \emph{cut} appears in Definition \ref{def:rule}: it is
not a technical convenience but the only device compatible with the information contained in
Hypothesis \ref{hyp:H}. It also explains why criteria of Watson or Nevanlinna--Sokal type cannot be
invoked here: their hypotheses include holomorphy of the Borel transform on a neighbourhood of
$[0,\infty)$ together with an exponential bound, and Theorem \ref{thm:natural} shows that no such
input is available from coefficient bounds alone.
\end{remark}

\subsection{No exact, coefficient--continuous rule exists}

\begin{theorem}\label{thm:nocanonical}
Let $\Omega\subseteq\R^{4}\setminus\{0\}$ be nonempty and let $V$ be the linear span of
$\mathcal{A}_{C}$. There is no linear map $S:V\to\R^{\Omega}$ such that
\begin{enumerate}[label=\textup{(\alph*)},leftmargin=2.2\parindent]
\item $(S\beta)(\bc)=\sum_{p\ge2}b_{p}(\bc)$ for every finitely supported $\beta$ \textup{(}i.e. $S$
is exact on polynomials\textup{)}, and
\item for some $\bc\in\Omega$ the functional $\beta\mapsto(S\beta)(\bc)$ is bounded on
$(V,\|\cdot\|_{C})$, or is sequentially continuous along coefficientwise convergent sequences in
$\mathcal{A}_{C}$.
\end{enumerate}
\end{theorem}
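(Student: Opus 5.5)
The plan is a truncation argument: exactness on polynomials (a) ties the value of $S$ on finite truncations of a worst-case family to its partial sums, and either form of (b) forces those partial sums to stay bounded. Theorem \ref{thm:divergence} says they do not. Fix the point $\bc\in\Omega$ provided by (b); in particular $\bc\ne0$.

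\emph{The family and its truncations.} Take $\beta^{\star}(\bm)=(|\bm|-1)!\,C^{|\bm|-1}$ from Theorem \ref{thm:divergence}. For $N\ge2$ let $\beta_{N}$ agree with $\beta^{\star}$ on $\{|\bm|\le N\}$ and vanish elsewhere. Each $\beta_{N}$ is finitely supported. Setting coefficients to zero preserves Hypothesis \ref{hyp:H}, so $\beta_{N}\in\mathcal{A}_{C}$, and hence $\|\beta_{N}\|_{C}\le1$. Also $\beta^{\star}\in\mathcal{A}_{C}\subseteq V$, and $\beta_{N}\to\beta^{\star}$ coefficientwise, because every fixed $\bm$ satisfies $|\bm|\le N$ once $N$ is large.

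\emph{Unbounded partial sums.} By (a), $(S\beta_{N})(\bc)=\sum_{p=2}^{N}b^{\star}_{p}(\bc)=:s_{N}$. Theorem \ref{thm:divergence} gives that the terms $b^{\star}_{p}(\bc)=s_{p}-s_{p-1}$ are unbounded in $p$ (with $s_{1}:=0$). Since $|s_{p}-s_{p-1}|\le|s_{p}|+|s_{p-1}|$, the sequence $(s_{N})$ must itself be unbounded. This is the only analytic input needed.

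\emph{Case 1: boundedness.} Suppose $|(S\beta)(\bc)|\le M\|\beta\|_{C}$ on $V$. Then $|s_{N}|=|(S\beta_{N})(\bc)|\le M\|\beta_{N}\|_{C}\le M$ for every $N$, which contradicts the unboundedness of $(s_{N})$.

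\emph{Case 2: sequential continuity.} Suppose $\beta\mapsto(S\beta)(\bc)$ is sequentially continuous along coefficientwise convergent sequences in $\mathcal{A}_{C}$. The sequence $(\beta_{N})$ lies in $\mathcal{A}_{C}$ and converges coefficientwise to $\beta^{\star}\in\mathcal{A}_{C}$. Hence $s_{N}=(S\beta_{N})(\bc)\to(S\beta^{\star})(\bc)$, a finite real number, so $(s_{N})$ is convergent and therefore bounded. This again contradicts the unboundedness of $(s_{N})$.

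The only step with real content is the passage from unbounded terms to unbounded partial sums, and it is elementary. Otherwise I must check two points: that truncation keeps each $\beta_{N}$ inside $\mathcal{A}_{C}$, so that $\|\beta_{N}\|_{C}\le1$ is available in Case 1, and that the limit $\beta^{\star}$ lies in $\mathcal{A}_{C}$, so that continuity applies in Case 2. Both hold trivially here. Linearity of $S$ is not even used beyond its being a well-defined map on $V$.
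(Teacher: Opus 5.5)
Your proof is correct and is essentially the paper's argument: truncate a worst-case admissible family, use (a) to identify $(S\beta_{N})(\bc)$ with the partial sums, and contradict both forms of (b) through the unboundedness of those partial sums. The only difference is the choice of family. The paper inserts signs $s_{\bm}$ with $s_{\bm}\bc^{\bm}=|\bc^{\bm}|$, so its partial sums consist of nonnegative terms at least $(p-1)!\,C^{p-1}\|\bc\|_{\infty}^{p}$ and tend to $+\infty$ directly by Lemma \ref{lem:count}(c); you instead use the unsigned $\beta^{\star}$ and import unboundedness of the terms from Theorem \ref{thm:divergence}, which is equally valid.
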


\begin{proof}
Fix $\bc\in\Omega$ and choose signs $s_{\bm}\in\{\pm1\}$ with $s_{\bm}\bc^{\bm}=|\bc^{\bm}|$. Let
$\beta^{\sharp}(\bm):=(|\bm|-1)!C^{|\bm|-1}s_{\bm}$ and let $\beta^{(k)}$ be its truncation to
$|\bm|\le k$. Then $\beta^{(k)}$ is finitely supported, $\|\beta^{(k)}\|_{C}=1$,
$\beta^{(k)}\to\beta^{\sharp}\in\mathcal{A}_{C}$ coefficientwise, and by (a)
\[
(S\beta^{(k)})(\bc)=\sum_{p=2}^{k}(p-1)!C^{p-1}h_{p}(|\bc|)\ \ge\ \sum_{p=2}^{k}(p-1)!C^{p-1}\|\bc\|_{\infty}^{p}
\ \xrightarrow[k\to\infty]{}\ \infty
\]
by Lemma \ref{lem:count}(c). This contradicts boundedness (the $\beta^{(k)}$ lie in the unit ball)
and also sequential continuity along $\beta^{(k)}\to\beta^{\sharp}$.
\end{proof}

\subsection{Regularity is incompatible with well--definedness on the class}

Classical summability theory calls a method \emph{regular} if it sums every convergent series to
its ordinary sum. Borel summation is regular; the rule of Definition \ref{def:rule} is not
(Remark \ref{rem:tradeoff})~\cite{Hardy1949}. The following dichotomy shows that this is not a defect of the
construction but a structural feature of the problem: on the class $\mathcal{A}_{C}$, regularity and
well--definedness exclude each other.

\begin{theorem}[Regularity dichotomy]\label{thm:dichotomy}
Fix $\bc\in\R^{4}\setminus\{0\}$, put $\mu=\|\bc\|_{\infty}$, let $(w_{p})_{p\ge2}$ be real numbers
and consider the linear rule $\Sm_{w}B(\bc):=\sum_{p\ge2}w_{p}\,b_{p}(\bc)$. Then:
\begin{enumerate}[label=\textup{(\roman*)},leftmargin=2.2\parindent]
\item \textup{(Well--definedness forces factorial decay.)} If the series converges absolutely for
every $\beta\in\mathcal{A}_{C}$, then
\[
M:=\sum_{p\ge2}|w_{p}|\,(p-1)!\,C^{p-1}\mu^{p}\ <\ \infty,
\qquad\text{so}\qquad
|w_{p}|\le\frac{M}{(p-1)!\,C^{p-1}\mu^{p}}\ \xrightarrow[p\to\infty]{}\ 0 .
\]
\item \textup{(Regularity forces all weights to be $1$.)} If $\Sm_{w}B(\bc)=B(\bc)$ for every
$\beta\in\mathcal{A}_{C}$ whose series converges absolutely at $\bc$, then $w_{p}=1$ for all $p\ge2$.
\item \textup{(Dichotomy.)} \textup{(i)} and \textup{(ii)} are mutually exclusive. In particular no
linear weight rule is at once well defined on $\mathcal{A}_{C}$ and regular. Classical Borel
summation is the case $w_{p}\equiv1$: it is regular, and by Theorem \ref{thm:divergence} its defining
series diverges for an admissible family, while by Theorem \ref{thm:natural} its integral form is
undefined on the class.
\end{enumerate}
\end{theorem}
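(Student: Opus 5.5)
For (i) I will test the rule on the sign-matched family $\beta^{\sharp}(\bm):=(|\bm|-1)!\,C^{|\bm|-1}s_{\bm}$ from the proof of Theorem \ref{thm:nocanonical}, with one change: the signs $s_{\bm}\in\{\pm1\}$ are now chosen so that $s_{\bm}\bc^{\bm}=|\bc^{\bm}|$ (any sign where $\bc^{\bm}=0$). This family lies in $\mathcal{A}_{C}$. Then
\[
b^{\sharp}_{p}(\bc)=(p-1)!\,C^{p-1}h_{p}(|\bc|)\ge(p-1)!\,C^{p-1}\mu^{p}
\]
by Lemma \ref{lem:count}(c), where $\mu=\|\bc\|_{\infty}$. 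Absolute convergence of $\sum_{p}w_{p}b^{\sharp}_{p}(\bc)$ then dominates $\sum_{p}|w_{p}|(p-1)!C^{p-1}\mu^{p}$, so $M<\infty$. Each term of a convergent series is at most its sum, which gives the pointwise bound on $|w_{p}|$. The right-hand side tends to $0$ because $(p-1)!\,(C\mu)^{p}\to\infty$.

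For (ii) I will use monomials, which are finitely supported and so trivially convergent. Fix $p\ge2$ and pick $i$ with $|c_{i}|=\mu>0$. Set $\beta(p\,e_{i}):=(p-1)!\,C^{p-1}$ and all other coefficients $0$. This family is admissible, $b_{p}(\bc)=(p-1)!\,C^{p-1}c_{i}^{p}\ne0$, and every other $b_{q}$ vanishes. Regularity gives $w_{p}b_{p}(\bc)=b_{p}(\bc)$, so $w_{p}=1$. The only subtlety is that the test family must be in $\mathcal{A}_{C}$ and nonzero at $\bc$, and this choice of $i$ secures both.

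For (iii), suppose both hypotheses held. Then (ii) gives $w_{p}\equiv1$, and (i) gives
\[
M=\sum_{p\ge2}(p-1)!\,C^{p-1}\mu^{p}<\infty,
\]
which is false since its terms tend to infinity. Equivalently, (i) forces $w_{p}\to0$, contradicting $w_{p}=1$. The remarks about Borel summation then follow by citing Theorem \ref{thm:divergence}: with $w\equiv1$ the series for $\beta^{\star}$ diverges at every $\bc\ne0$. Theorem \ref{thm:natural} supplies the failure of the integral form.

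I expect no real obstacle. The only point needing care is the sign choice in (i), which is what makes every term nonnegative, so that absolute convergence on the whole class genuinely constrains $\sum|w_{p}|(p-1)!C^{p-1}\mu^{p}$ rather than just $\sum|w_{p}b_{p}|$ for some unlucky cancellation.
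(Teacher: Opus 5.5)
Your proof is correct, and parts (ii) and (iii) follow the paper's argument exactly: single-monomial families force $w_{p}=1$, and then $M=\sum_{p}(p-1)!\,C^{p-1}\mu^{p}=\infty$.

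The one real difference is the test family in (i).

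\emph{The paper's family.} It is supported on the single axis $\{p\,e_{i_{0}}\}$ with $|c_{i_{0}}|=\mu$, namely $\beta(p\,e_{i_{0}})=s_{p}(p-1)!\,C^{p-1}$. For this family $|w_{p}b_{p}(\bc)|=|w_{p}|(p-1)!\,C^{p-1}\mu^{p}$ holds with equality, so Lemma \ref{lem:count}(c) is not needed. Because the signs $s_{p}$ are chosen to make every term $w_{p}b_{p}(\bc)$ nonnegative, the argument would still work if only plain (not absolute) convergence were assumed.

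\emph{Your family.} You use the full sign-matched family $\beta^{\sharp}$. Here the signs $s_{\bm}$ are essential, because they prevent cancellation inside each homogeneous part $b_{p}$, and the lower bound $h_{p}(|\bc|)\ge\mu^{p}$ supplies the needed inequality. Your version does rely on absolute convergence, since the terms $w_{p}b^{\sharp}_{p}(\bc)$ take the signs of the $w_{p}$. Multiplying $s_{\bm}$ by the sign of $w_{|\bm|}$ would remove that dependence.

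Both routes establish $M<\infty$; the paper's one-axis family is slightly more economical.
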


\begin{proof}
(i) Choose $i_{0}$ with $|c_{i_{0}}|=\mu$ and signs $s_{p}\in\{\pm1\}$ with
$s_{p}\,w_{p}\,c_{i_{0}}^{p}=|w_{p}|\,\mu^{p}$. The family $\beta(p\,e_{i_{0}}):=s_{p}(p-1)!C^{p-1}$,
all other coefficients zero, lies in $\mathcal{A}_{C}$ and has
$b_{p}(\bc)=s_{p}(p-1)!C^{p-1}c_{i_{0}}^{p}$, so
$\sum_{p}|w_{p}b_{p}(\bc)|=\sum_{p}|w_{p}|(p-1)!C^{p-1}\mu^{p}$, which is finite by hypothesis.

(ii) Fix $q\ge2$ and apply the hypothesis to the family supported on the single multi--index
$q\,e_{i_{0}}$ with value $(q-1)!C^{q-1}$: its series is the single term
$b_{q}(\bc)=(q-1)!C^{q-1}c_{i_{0}}^{q}\ne0$, trivially convergent, and
$\Sm_{w}B(\bc)=w_{q}b_{q}(\bc)$; hence $w_{q}=1$.

(iii) If both held, (ii) would give $w_{p}\equiv1$ and (i) would give
$\sum_{p\ge2}(p-1)!C^{p-1}\mu^{p}<\infty$, which is false.
\end{proof}

\begin{theorem}[The defect on convergent series cannot be removed]\label{thm:defect}
Let $\bc\ne0$, $\mu=\|\bc\|_{\infty}$, and let $(w_{p})$ satisfy Theorem \ref{thm:dichotomy}(i)
with constant $M$. Put $\sigma:=\tfrac12\mu\min\{1,C\}$ and assume $\sigma<1$, and let $p_{*}\ge2$ be
the least integer such that $(p-1)!\,C^{p-1}\mu^{p}\ge 2M$ for all $p\ge p_{*}$. Then there is a
family $\beta\in\mathcal{A}_{C}$ whose series converges absolutely at $\bc$ and for which
\[
\big|\Sm_{w}B(\bc)-B(\bc)\big|\ \ge\ \frac{\sigma^{p_{*}}}{2(1-\sigma)}\ >\ 0,
\qquad\text{with}\qquad
p_{*}\ \le\ 2+\frac{e^{2}}{C\mu}+\log\frac{2M}{\mu}\ \ \big(\text{if } 2M\ge\mu\big).
\]
Thus every rule that is well defined on $\mathcal{A}_{C}$ misses the ordinary sum of some convergent
admissible series by a definite, strictly positive amount; the amount is beyond all orders in $\mu$,
and by Proposition \ref{prop:convcase} the rule of Definition \ref{def:rule} attains that order.
\end{theorem}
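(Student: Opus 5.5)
The plan is to build a convergent admissible family concentrated on a single coordinate axis. Its coefficients beyond $p_{*}$ are chosen so that every weight defect $1-w_{p}$ contributes with the same sign; this is a one-sided version of the test families used in Theorem \ref{thm:dichotomy}(i),(ii).

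\emph{Weights are far from $1$ beyond $p_{*}$.} Choose $i_{0}$ with $|c_{i_{0}}|=\mu$. By Theorem \ref{thm:dichotomy}(i) we have $|w_{p}|\le M/\big((p-1)!C^{p-1}\mu^{p}\big)$. By the definition of $p_{*}$ this is $\le\frac12$ for every $p\ge p_{*}$, hence $|1-w_{p}|\ge\frac12$ there.

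\emph{The family.} Let $s_{p}:=\operatorname{sign}(1-w_{p})\in\{\pm1\}$ for $p\ge p_{*}$. Define $\beta(p\,e_{i_{0}}):=s_{p}\,\sigma^{p}/c_{i_{0}}^{p}$ for $p\ge p_{*}$, and set all other coefficients to zero. Then $b_{p}(\bc)=s_{p}\sigma^{p}$ for $p\ge p_{*}$ and $b_{p}(\bc)=0$ otherwise. Since $\sigma<1$, $\sum_{p}b_{p}(\bc)$ converges absolutely. Admissibility requires $(\sigma/\mu)^{p}=(\min\{1,C\}/2)^{p}\le(p-1)!\,C^{p-1}$. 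If $C\ge1$ this reads $2^{-p}\le(p-1)!C^{p-1}$, which is clear. If $C<1$ it reads $C^{p}2^{-p}\le C^{p-1}$, which is also clear.

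\emph{The lower bound.} $\Sm_{w}B(\bc)$ converges absolutely by Theorem \ref{thm:dichotomy}(i), as does $B(\bc)$. Therefore
\[
B(\bc)-\Sm_{w}B(\bc)=\sum_{p\ge p_{*}}(1-w_{p})s_{p}\sigma^{p}=\sum_{p\ge p_{*}}|1-w_{p}|\,\sigma^{p}\ \ge\ \tfrac12\sum_{p\ge p_{*}}\sigma^{p}=\frac{\sigma^{p_{*}}}{2(1-\sigma)}.
\]

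\emph{The bound on $p_{*}$.} This is the only mildly delicate step, because the defining property must hold for \emph{all} $p\ge p_{*}$, not just at one value. Let $a_{p}:=(p-1)!C^{p-1}\mu^{p}$. Its ratio is $a_{p+1}/a_{p}=pC\mu$, so $a_{p}$ is nondecreasing once $p\ge1/(C\mu)$. Use $(p-1)!\ge((p-1)/e)^{p-1}$. If $p-1\ge e^{2}/(C\mu)$, then $(p-1)C\mu/e\ge e$, so $a_{p}\ge\mu e^{p-1}$. If in addition $p-1\ge\log(2M/\mu)$, which is nonnegative when $2M\ge\mu$, then $a_{p}\ge2M$. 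Both conditions are preserved as $p$ increases. Hence every integer $p\ge1+e^{2}/(C\mu)+\log(2M/\mu)$ satisfies the requirement, and the least such $p_{*}$, which is also $\ge2$, is at most $2+e^{2}/(C\mu)+\log(2M/\mu)$.

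The final remark about Definition \ref{def:rule} attaining this order is a direct comparison with Proposition \ref{prop:convcase}.
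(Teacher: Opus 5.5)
Your proof is correct and follows the paper's argument. It uses the same single-axis family $\beta(p\,e_{i_0})=(\sigma/c_{i_0})^{p}$ for $p\ge p_*$, the same admissibility check, the bound $|w_p|\le\frac12$ beyond $p_*$ from Theorem~\ref{thm:dichotomy}(i), and the same Stirling estimate for $p_*$; you also make the monotonicity in $p$ explicit, which the paper leaves implicit. Your sign choice $s_p=\operatorname{sign}(1-w_p)$ is redundant: $|w_p|\le\frac12$ already forces $1-w_p\ge\frac12>0$, so $s_p\equiv 1$, and your family is exactly the paper's.
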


\begin{proof}
Choose $i_{0}$ with $|c_{i_{0}}|=\mu$ and set $\beta(p\,e_{i_{0}}):=(\sigma/c_{i_{0}})^{p}$ for
$p\ge p_{*}$ and $\beta:=0$ otherwise. Admissibility: $|\beta(p\,e_{i_{0}})|=(\sigma/\mu)^{p}
=(\tfrac12\min\{1,C\})^{p}\le C^{p-1}\le(p-1)!\,C^{p-1}$, since for $C\ge1$ the left side is
$\le2^{-p}\le1$ and for $C<1$ it equals $C^{p}2^{-p}\le C^{p-1}$. The series
$B(\bc)=\sum_{p\ge p_{*}}\sigma^{p}$ converges absolutely. By Theorem \ref{thm:dichotomy}(i) and the
definition of $p_{*}$ we have $|w_{p}|\le\frac12$ for $p\ge p_{*}$, so every term of
$\sum_{p\ge p_{*}}(1-w_{p})\sigma^{p}$ is $\ge\frac12\sigma^{p}>0$ and
\[
\big|\Sm_{w}B(\bc)-B(\bc)\big|=\sum_{p\ge p_{*}}(1-w_{p})\sigma^{p}\ \ge\ \frac12\sum_{p\ge p_{*}}\sigma^{p}
=\frac{\sigma^{p_{*}}}{2(1-\sigma)} .
\]
For the bound on $p_{*}$: by Lemma \ref{lem:stirling}, $(p-1)!C^{p-1}\mu^{p}\ge\mu\big(\frac{(p-1)C\mu}{e}\big)^{p-1}
\ge\mu\,e^{p-1}$ as soon as $(p-1)C\mu\ge e^{2}$, and $\mu e^{p-1}\ge2M$ as soon as
$p\ge1+\log(2M/\mu)$; taking $p$ beyond both thresholds gives the stated estimate.
\end{proof}

\begin{remark}
Theorems \ref{thm:dichotomy} and \ref{thm:defect} delimit precisely what the word ``summation'' can
mean under Hypothesis \ref{hyp:H}. A regular method exists (Borel) but is undefined on the class; a
method defined on the class exists (Definition \ref{def:rule}) but is not regular. What the rule of
this paper delivers instead of regularity is: the exact perturbative expansion to all orders
(Theorem \ref{thm:smooth}: the Taylor series at the origin \emph{is} $B$), an error beyond all
orders on any convergent input, agreement with the Borel sum whenever the latter exists, and
optimality of the exponential rate (Theorem \ref{thm:ambiguity}). Readers who require exactness on
polynomials of a prescribed degree may use the sharply truncated variant of
Remark \ref{rem:tradeoff}, at the cost of all regularity in $\bc$.
\end{remark}

\subsection{The intrinsic ambiguity, and optimality of the exponent}

\begin{lemma}[Rigidity on a ball]\label{lem:rigid}
Let $r,A,M>0$ and let $f,g$ be real functions on $\{0<\|\bc\|_{\infty}<r\}$ such that
\[
\Big|f(\bc)-\sum_{p=2}^{N}b_{p}(\bc)\Big|\le A\,N!\,(M\|\bc\|_{\infty})^{N+1}
\quad\text{and}\quad
\Big|g(\bc)-\sum_{p=2}^{N}b_{p}(\bc)\Big|\le A\,N!\,(M\|\bc\|_{\infty})^{N+1}
\]
for all $N\ge1$.
Then, writing $y=M\|\bc\|_{\infty}$, for all $\bc$ with $y\le1$,
\[
|f(\bc)-g(\bc)|\;\le\;2A\min_{N\ge1}N!\,y^{N+1}\;\le\;2A\,e^{2}\,y^{1/2}\,e^{-1/y}.
\]
\end{lemma}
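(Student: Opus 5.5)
The first inequality is a triangle inequality; the second is a Stirling-type optimisation. Fix $\bc$ with $0<\|\bc\|_{\infty}<r$ and $y=M\|\bc\|_{\infty}\le1$. For every $N\ge1$,
\[
|f(\bc)-g(\bc)|\le\Big|f(\bc)-\sum_{p=2}^{N}b_{p}(\bc)\Big|+\Big|g(\bc)-\sum_{p=2}^{N}b_{p}(\bc)\Big|\le 2A\,N!\,y^{N+1}.
\]
The left side does not depend on $N$, so we may take the infimum over $N\ge1$. This infimum is a minimum, because $N!\,y^{N+1}\to\infty$ for fixed $y>0$. That proves the first inequality, and nothing about the $b_{p}$ is used beyond their cancellation.

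For the second inequality we choose $N:=\lfloor1/y\rfloor$, which is $\ge1$ because $y\le1$. By Lemma \ref{lem:stirling}, $N!\le e\,N^{N+1/2}e^{-N}$, so
\[
N!\,y^{N+1}\le e\,\sqrt{N}\,(Ny)^{N}e^{-N}\,y .
\]
Since $Ny\le1$, we get $(Ny)^{N}\le1$. Since $N\le1/y$, we get $\sqrt N\,y\le y^{1/2}$. Finally $N>1/y-1$ gives $e^{-N}<e\cdot e^{-1/y}$. Multiplying these bounds yields $N!\,y^{N+1}\le e^{2}\,y^{1/2}e^{-1/y}$, and therefore $2A\min_{N\ge1}N!\,y^{N+1}\le 2Ae^{2}y^{1/2}e^{-1/y}$.

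The only delicate step is the bookkeeping in the choice $N=\lfloor1/y\rfloor$. Using the floor rather than $\lceil1/y\rceil$ keeps $Ny\le1$, so the power $(Ny)^{N}$ is harmless. The price is the single factor $e$ from $e^{-N}\le e^{1-1/y}$, and together with the $e$ from Stirling this gives the constant $e^{2}$. One should also check that Lemma \ref{lem:stirling} is stated in the form $N!\le eN^{N+1/2}e^{-N}$ for all $N\ge1$, as it was used in the proof of Theorem \ref{thm:optimal}.
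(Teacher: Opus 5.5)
Your proof is correct and follows the paper's argument. The triangle inequality gives $|f-g|\le 2A\,N!\,y^{N+1}$ for every $N\ge1$, and your choice $N=\lfloor 1/y\rfloor$ combined with Lemma \ref{lem:stirling} is the same computation the paper packages as Lemma \ref{lem:leastterm}, which it applies to $y\min_{N\ge1}N!\,y^{N}$.
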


\begin{proof}
Subtracting the two hypotheses removes the partial sums and gives
$|f-g|\le 2AN!y^{N+1}$ for every $N\ge1$; minimise over $N$ and apply Lemma \ref{lem:leastterm}:
$\min_{N\ge1}N!y^{N+1}=y\min_{N\ge1}N!y^{N}\le y\cdot e^{2}y^{-1/2}e^{-1/y}$.
\end{proof}

\begin{remark}
Lemma \ref{lem:rigid} is the general principle behind the exponentially small ambiguity: two
functions carrying the same Gevrey--$1$ data can differ only beyond all orders. It is stated for
completeness and is not used below, because its hypothesis requires the remainder bound for
\emph{all} $N$, whereas Theorem \ref{thm:master} supplies it only in the range $N\le T(\bc)+1$ and
with an additional head term; the comparison of two members of the constructed family was therefore
carried out directly, in Proposition \ref{prop:params}.
\end{remark}

\begin{theorem}[Sharp ambiguity; Theorem \ref{thm:main}(vi)]\label{thm:ambiguity}
Let $\beta$ be the admissible family of Proposition \ref{prop:borel}(iv)
\textup{(}$\beta(p\,e_{1})=(p-1)!C^{p-1}$, all other coefficients zero\textup{)} and let
$\bc=(c_{1},0,0,0)$ with $c_{1}>0$; write $x=Cc_{1}=C\|\bc\|_{\infty}$. Then there are two real
functions $g_{0},g_{1}$ of $c_{1}>0$ with
\[
g_{1}-g_{0}=\frac{2\pi}{C}\,e^{-1/x}
\]
such that for $\lambda\in\{0,1\}$ and every integer $N\ge1$,
\[
\Big|g_{\lambda}(c_{1})-\sum_{p=2}^{N}b_{p}(\bc)\Big|\;\le\;\frac{e+2\pi}{C}\,(N+1)!\;x^{N+1}.
\]
Consequently:
\begin{enumerate}[label=\textup{(\alph*)},leftmargin=2.2\parindent]
\item every summation rule $\Sm'$ whatsoever satisfies
$\max_{\lambda\in\{0,1\}}|\Sm'B(\bc)-g_{\lambda}(c_{1})|\ge\frac{\pi}{C}e^{-1/(C\|\bc\|_{\infty})}$:
the value of the sum is undetermined, by the data of Hypothesis \ref{hyp:H}, to that precision;
\item consequently no rule can lie within $o\big(e^{-1/(C\|\bc\|_{\infty})}\big)$ of every real
function that is Gevrey--$1$ asymptotic to $B$ with the above constants; in this precise sense the
exponent $1$ in the scale $C\|\bc\|_{\infty}$ cannot be passed;
\item by Theorem \ref{thm:optimal} and Remark \ref{rem:ratelimit} the rules $\Sm_{k,\vth}$ realise
every rate $\eta<1$. Hence the threshold exponent is exactly $\eta=1$ in the scale
$C\|\bc\|_{\infty}$, and the construction is rate--optimal.
\end{enumerate}
\end{theorem}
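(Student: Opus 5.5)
The plan is to realise $g_{0},g_{1}$ as a principal-value Borel--Laplace integral and that integral shifted by the full residue term. For this family $\hB(\tau\bc)=Cc_{1}^{2}\tau^{2}/(1-x\tau)$ and $b_{p}(\bc)=\frac1C(p-1)!\,x^{p}$. The Laplace integrand $e^{-\tau}\hB(\tau\bc)/\tau=e^{-\tau}Cc_{1}^{2}\tau/(1-x\tau)$ therefore has a simple pole at $\tau=1/x$. Its residue there is $-\frac1C e^{-1/x}$, so the two lateral sums (rays just above and below $[0,\infty)$) differ by $\frac{2\pi i}{C}e^{-1/x}$.

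I would set
\[
g_{0}(c_{1}):=\mathrm{PV}\!\int_{0}^{\infty}e^{-\tau}\frac{Cc_{1}^{2}\tau}{1-x\tau}\,d\tau,
\qquad g_{1}:=g_{0}+\frac{2\pi}{C}e^{-1/x}.
\]
The function $g_{0}$ is real, being the real part of either lateral sum, and $g_{1}$ is real by construction. The identity $g_{1}-g_{0}=\frac{2\pi}{C}e^{-1/x}$ then holds by definition, and the residue computation explains why this is the natural size of the ambiguity. The $g_{1}$ bound reduces to the $g_{0}$ bound via $e^{-1/x}\le (N+1)!\,x^{N+1}$, which follows from $e^{u}\ge u^{N+1}/(N+1)!$ with $u=1/x$. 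This accounts for the summand $2\pi/C$ in the constant $(e+2\pi)/C$. So the real work is to show $|g_{0}-\sum_{p=2}^{N}b_{p}|\le\frac eC(N+1)!\,x^{N+1}$.

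For $g_{0}$ I would expand $\frac1{1-u}=\sum_{q=0}^{N-2}u^{q}+\frac{u^{N-1}}{1-u}$ with $u=x\tau$. The polynomial part integrates exactly, since $\int_{0}^{\infty}e^{-\tau}\tau^{q+1}d\tau=(q+1)!$ converges and needs no principal value. This reproduces $\sum_{p=2}^{N}b_{p}$. The remainder is
\[
R_{N}=Cc_{1}^{2}x^{N-1}\,\mathrm{PV}\!\int_{0}^{\infty}\frac{\psi(\tau)}{1-x\tau}\,d\tau,\qquad \psi(\tau)=\tau^{N}e^{-\tau}.
\]
I would split this at $\tau=2/x$. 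On $[2/x,\infty)$ we have $|1-x\tau|\ge1$, so this piece is at most $\int\psi\le N!$. On $[0,2/x]$ the interval is symmetric about the pole, so the principal value of $1/(1-x\tau)$ vanishes there. Hence the piece equals the absolutely convergent $\int_{0}^{2/x}\frac{\psi(\tau)-\psi(1/x)}{1-x\tau}d\tau$. Its integrand is bounded by $\frac1x\sup|\psi'|$ pointwise, or better by $\frac1x\int|\psi'|$ in an averaged form, and either route gives a bound of order $N!$ or $N\cdot(N-1)!$ times harmless factors.

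Collecting, $|R_{N}|\lesssim Cc_{1}^{2}x^{N-1}(N+1)!=\frac1C(N+1)!\,x^{N+1}$, and the remaining task is to check that the constant can be taken to be $e$. This constant-chasing is the step I expect to be the main obstacle. I would first try the bound $|\psi(\tau)-\psi(1/x)|\le|\tau-1/x|\int_{0}^{\infty}|\psi'|$ together with $\int_{0}^{\infty}|\psi'|=2\psi(N)=2N^{N}e^{-N}\le 2N!$ (Lemma \ref{lem:stirling}). If the resulting constant overshoots $e$, I would use the slack in $(N+1)!$ versus $N!$, and treat small $N$ separately, for instance when $2/x<N$ or $x\ge1$, where the crude bound $|R_{N}|$ against $(N+1)!\,x^{N+1}$ is easy.

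Parts (a)--(c) are then formal. For (a), by the triangle inequality, $|\Sm'B-g_{0}|+|\Sm'B-g_{1}|\ge|g_{1}-g_{0}|=\frac{2\pi}{C}e^{-1/x}$, so the larger of the two is at least $\frac{\pi}{C}e^{-1/x}$. Since both $g_{\lambda}$ lie in $\mathcal{F}$, (b) follows at once. For (c), Theorem \ref{thm:optimal} supplies rate $e^{-\eta/x}$ for every $\eta<\vth_{k}$, and Remark \ref{rem:ratelimit} shows that $\vth_{k}\to1$ as $k\to\infty$.
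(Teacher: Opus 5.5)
Your architecture matches the paper's. You take $g_{0}$ to be the principal value, which is the common real part $\Re F_{+}=\Re F_{-}$ of the two lateral Laplace integrals, and $g_{1}$ to be $g_{0}$ shifted by the full discontinuity $\frac{2\pi}{C}e^{-1/x}$. The polynomial part of the geometric expansion reproduces $\sum_{p=2}^{N}b_{p}(\bc)$, and (a)--(c) follow from the same triangle-inequality and citation arguments. Your reduction of the $g_{1}$ bound via $e^{1/x}\ge x^{-N-1}/(N+1)!$ is correct and avoids the paper's case split at $x=1$.

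\textbf{The gap.} It lies in the one step with real content: the bound $\big|\mathrm{PV}\!\int_{0}^{\infty}\frac{\tau^{N}e^{-\tau}}{1-x\tau}\,d\tau\big|\le e\,(N+1)!$, uniformly in $x>0$. After the symmetric subtraction on $[0,2/x]$ you bound the integrand pointwise by $\frac1x\sup|\psi'|$ (or $\frac1x\int_{0}^{\infty}|\psi'|$). You then integrate over an interval of length $2/x$, which gives $\frac{2}{x^{2}}\sup|\psi'|$ (resp.\ $4N^{N}e^{-N}x^{-2}$).

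That factor $x^{-2}$ is not harmless. For fixed $N$ and $x\to0$ it blows up, so you only obtain $|R_{N}|\lesssim\frac1C\big(N!\,x^{N+1}+N^{N}e^{-N}x^{N-1}\big)$. This loses two powers of $x$ exactly in the small-coupling regime the statement is about. The cause is that you apply the mean-value bound on the whole window, including $\tau\approx N\ll1/x$, where the mass of $\psi$ sits and $|1-x\tau|\approx1$. Your proposed special cases ($2/x<N$ or $x\ge1$) sit at the opposite end and do not repair this.

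Two further points. First, the ``averaged'' inequality $|\psi(\tau)-\psi(1/x)|\le|\tau-1/x|\int_{0}^{\infty}|\psi'|$ is false for $N=1$, since it would force $|\psi'(0)|=1\le 2/e$. Second, what remains is not mere constant-chasing. The supremum over $x$ of this principal-value integral is of order $\sqrt{N}\,N!$, attained when $1/x$ lies within $O(\sqrt N)$ of the peak of $\tau^{N}e^{-\tau}$. So no uniform bound of order $N!$ exists, and the factor $(N+1)!$ is genuinely used to absorb the $\sqrt N$.

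\textbf{How the paper handles it.} It gets uniformity in $x$ by leaving the real axis rather than working with the principal value. It writes $F_{+}=\sum_{p=2}^{N}b_{p}(\bc)+R_{N}$ with $R_{N}=\frac{x^{N+1}}{C}\int_{\Gamma_{+}}\frac{\tau^{N}e^{-\tau}}{1-x\tau}\,d\tau$. Since the pole lies below $\Gamma_{+}$, the contour can be rotated to the ray $\{se^{i\eps}:s\ge0\}$, where $|1-x\tau|\ge\sin\eps$ for every $x$. This gives $|R_{N}|\le\frac{x^{N+1}N!}{C\sin\eps\,(\cos\eps)^{N+1}}$. The choice $(\cos\eps)^{N+1}=e^{-1}$ gives $\sin\eps\ge(N+1)^{-1/2}$, hence $|R_{N}|\le\frac{e}{C}\sqrt{N+1}\,N!\,x^{N+1}\le\frac{e}{C}(N+1)!\,x^{N+1}$. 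Because the partial sums are real, $|g_{0}-\sum_{p=2}^{N}b_{p}(\bc)|=|\Re R_{N}|\le|R_{N}|$.

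If you want to stay on the real axis, you would have to confine the subtraction to a neighbourhood of the pole whose size adapts to both $x$ and $N$, and bound $|1-x\tau|$ from below outside it. Even then, reaching the constant $e$ would take genuine work; the rotation delivers uniformity in $x$ and the sharp $\sqrt{N+1}$ at once.
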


\begin{proof}
Here $\hB(\tau\bc)=\frac{(x\tau)^{2}}{C(1-x\tau)}$ (Proposition \ref{prop:borel}(iv)), so the
Borel--Laplace integrand is
\[
\Phi(\tau):=e^{-\tau}\,\hB(\tau\bc)\,\frac1\tau=\frac{x^{2}\,\tau\,e^{-\tau}}{C\,(1-x\tau)},
\]
meromorphic on $\C$ with a single, simple pole at $\tau_{0}=1/x>0$ and
\[
\Res_{\tau=\tau_{0}}\Phi=\frac{x^{2}\tau_{0}e^{-\tau_{0}}}{C\cdot(-x)}=-\frac{e^{-1/x}}{C}.
\]
Let $\Gamma_{\pm}$ be the contour from $0$ to $+\infty$ along $\R$, indented by a small semicircle
above ($+$) resp.\ below ($-$) $\tau_{0}$, and put $F_{\pm}=\int_{\Gamma_{\pm}}\Phi(\tau)d\tau$; both
converge absolutely, since $\Phi(\tau)=O(e^{-\tau})$ as $\tau\to+\infty$ and $\Phi(\tau)=O(\tau)$ as
$\tau\to0$. The difference $\Gamma_{+}-\Gamma_{-}$ is a circle around $\tau_{0}$ described
clockwise, so
\[
F_{+}-F_{-}=-2\pi i\Res_{\tau_{0}}\Phi=\frac{2\pi i}{C}e^{-1/x} .
\]
Since $\Phi$ is real on $\R\setminus\{\tau_{0}\}$ and $\Gamma_{-}$ is the complex conjugate contour
of $\Gamma_{+}$, we have $F_{-}=\overline{F_{+}}$; hence $\Re F_{+}=\Re F_{-}$ and
$\Im F_{\pm}=\pm\frac{\pi}{C}e^{-1/x}$. Define
\[
g_{0}:=\Re F_{+},\qquad g_{1}:=g_{0}+\frac{2\pi}{C}e^{-1/x},
\]
so that $g_{1}-g_{0}=\frac{2\pi}{C}e^{-1/x}$ as claimed.

\emph{Gevrey--$1$ bounds.} For $N\ge1$ write
$\frac{1}{1-x\tau}=\sum_{q=0}^{N-2}(x\tau)^{q}+\frac{(x\tau)^{N-1}}{1-x\tau}$ (an empty sum if
$N=1$), so that
\[
\Phi(\tau)=\frac{e^{-\tau}}{C}\sum_{q=0}^{N-2}x^{q+2}\tau^{q+1}
+\frac{x^{N+1}}{C}\cdot\frac{\tau^{N}e^{-\tau}}{1-x\tau}.
\]
The polynomial part is entire; deforming $\Gamma_{+}$ to $[0,\infty)$ for those terms and using
$\int_{0}^{\infty}e^{-\tau}\tau^{q+1}d\tau=(q+1)!$ gives, with $p=q+2$,
$\sum_{q=0}^{N-2}\frac{x^{p}}{C}(p-1)!=\sum_{p=2}^{N}(p-1)!C^{p-1}c_{1}^{p}=\sum_{p=2}^{N}b_{p}(\bc)$.
Hence $F_{+}=\sum_{p=2}^{N}b_{p}(\bc)+R_{N}$ with
$R_{N}=\frac{x^{N+1}}{C}\int_{\Gamma_{+}}\frac{\tau^{N}e^{-\tau}}{1-x\tau}\,d\tau$. The integrand of
$R_{N}$ is meromorphic with its only pole at $\tau_{0}$, which lies \emph{below} $\Gamma_{+}$; so
for $0<\eps<\pi/2$ Cauchy's theorem (the arcs at infinity vanishing because $\Re\tau>0$ on the
sector) allows us to replace $\Gamma_{+}$ by the ray $\{se^{i\eps}:s\ge0\}$. On that ray
$|1-x\tau|=|1-xse^{i\eps}|\ge\sin\eps$ (the distance from $1$ to the line through the origin of
argument $\eps$), so
\[
|R_{N}|\le\frac{x^{N+1}}{C\sin\eps}\int_{0}^{\infty}s^{N}e^{-s\cos\eps}\,ds
=\frac{x^{N+1}N!}{C\,\sin\eps\,(\cos\eps)^{N+1}} .
\]
Choose $\eps=\arccos\big(e^{-1/(N+1)}\big)\in(0,\pi/2)$, so $(\cos\eps)^{N+1}=e^{-1}$ and, using
$e^{-t}\le1-t+t^{2}/2$ with $t=2/(N+1)$,
$\sin^{2}\eps=1-e^{-2/(N+1)}\ge\frac{2}{N+1}-\frac{2}{(N+1)^{2}}\ge\frac{1}{N+1}$ for $N\ge1$.
Hence $|R_{N}|\le\frac{e\sqrt{N+1}\,N!}{C}x^{N+1}\le\frac{e}{C}(N+1)!\,x^{N+1}$. Since the partial
sums are real, $|g_{0}-\sum_{p=2}^{N}b_{p}|=|\Re R_{N}|\le|R_{N}|$; and $e^{-1/x}\le(N+1)!\,x^{N+1}$ for every $N\ge1$ --- by Lemma \ref{lem:leastterm} when $x\le1$,
and trivially when $x>1$ since then $(N+1)!x^{N+1}\ge2>1>e^{-1/x}$ --- so
$|g_{1}-\sum_{p=2}^{N}b_{p}|\le\frac{e+2\pi}{C}(N+1)!x^{N+1}$ as well.

\emph{Consequences.} (a) Both $g_{0}$ and $g_{1}$ obey the same Gevrey--$1$ bounds and hence are
equally legitimate values of the sum as far as Hypothesis \ref{hyp:H} can tell; any single real
number $\Sm'B(\bc)$ is at distance at least $\frac12|g_{1}-g_{0}|=\frac{\pi}{C}e^{-1/x}$ from one of
them. (b) restates (a): a rule lying within $o(e^{-1/x})$ of every function consistent with the
data --- in particular of both $g_{0}$ and $g_{1}$ --- would contradict (a), since
$|g_{1}-g_{0}|=(2\pi/C)e^{-1/x}$. (c) is Theorem \ref{thm:optimal} together with
Remark \ref{rem:ratelimit}.
\end{proof}

\begin{remark}
In Theorem \ref{thm:ambiguity}, $g_{0}=\Re F_{+}=\Re F_{-}$ is the common real part --- the
principal value --- of the two lateral Borel sums, and $g_{1}=g_{0}+|F_{+}-F_{-}|$; the shift
$(2\pi/C)e^{-1/x}=2|\Im F_{\pm}|$ is the standard ``renormalon'' ambiguity of a simple Borel--plane
pole~\cite{Beneke1999} at distance $1/C$. Hypothesis \ref{hyp:H} bounds the Borel transform coefficientwise by
exactly such a pole (Remark \ref{rem:poleshape}), and Theorem \ref{thm:ambiguity} shows that the
corresponding ambiguity is attained inside the admissible class. The rules $\Sm_{k,\vth}$ therefore
approach the intrinsic nonperturbative resolution of the beta function --- each with guaranteed
accuracy $e^{-\vth_{k}/x}$, $\vth_{k}<1$, and with $\vth_{k}$ arbitrarily close to $1$ across the
family --- while by Theorem \ref{thm:ambiguity}(b) no rule can do better than the threshold
$e^{-1/x}$.
\end{remark}

\section{The summed beta function as a vector field: the renormalisation--group flow}\label{sec:flow}

The beta function of a theory with four running couplings is a vector field on coupling space. Let
therefore $\bB=(B_{1},\dots,B_{4})$, where each $B_{i}$ is a formal series of the form
\eqref{eq:Bdef} with coefficients $\beta_{i}$ satisfying Hypothesis \ref{hyp:H} with a common
constant $C$ (otherwise replace $C$ by the largest of the four constants), and let
$\Sm\bB:=(\Sm B_{1},\dots,\Sm B_{4})$, the rule being applied componentwise with the same
$(k,\vth)$. All results of \S\S\ref{sec:rule}--\ref{sec:consistency} hold for each component. We
write $b_{2,i}$ for the one--loop part of $B_{i}$ and $\bb_{2}=(b_{2,1},\dots,b_{2,4})$.

\begin{proposition}\label{prop:field}
Let $(k,\vth)$ be admissible. Then $\Sm\bB\in C^{\infty}(\R^{4};\R^{4})$ is real--analytic on
$\R^{4}\setminus\{0\}$, $\Sm\bB(0)=0$, $D(\Sm\bB)(0)=0$, and there are explicit constants
$a=10\Lambda(\vth)$ and $L=L(k,\vth)$ such that, for all $\bc$,
\[
\|\Sm\bB(\bc)\|_{\infty}\le a\,C\|\bc\|_{\infty}^{2},
\qquad
\|D(\Sm\bB)(\bc)\|_{\infty\to\infty}\le L\,C\,\|\bc\|_{\infty}.
\]
One may take $L=256\,k\,K_{4}$ with
$K_{4}=10\Lambda(\vth^{*})\big(\tfrac{33}{32\kappa_{0}}\big)^{2}
+\tfrac{33}{32}\cdot\tfrac{20\Lambda(\vth^{*})\vth^{*}}{\kappa_{0}\kappa_{1}}$,
$\kappa_{1}=\kappa_{0}\vth_{k}/\sqrt2$.
\end{proposition}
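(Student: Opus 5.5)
The qualitative statements come directly from the componentwise application of earlier results. By Theorem \ref{thm:smooth}, applied to each $B_{i}$ with the same admissible $(k,\vth)$, every $\Sm B_{i}$ is $C^{\infty}$ on $\R^{4}$ and real--analytic off the origin. Its Taylor coefficients at $0$ vanish for $|\bm|\le1$, which gives $\Sm\bB(0)=0$ and $D(\Sm\bB)(0)=0$. The bound $\|\Sm\bB(\bc)\|_{\infty}\le a\,C\|\bc\|_{\infty}^{2}$ with $a=10\Lambda(\vth)$ is exactly Proposition \ref{prop:rule}(i) applied to each component. The only real work is the derivative bound.

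For the derivative at a point $\bc_{0}\ne0$ (at $\bc_{0}=0$ it vanishes), write $t_{0}=\|\bc_{0}\|_{\infty}$ and take $N=1$ in Theorem \ref{thm:cxmaster}. The empty partial sum gives $g_{1}=F$, and the hypothesis $1\le 1+\frac12\Re T$ holds automatically. For $\bz\in\Delta_{0}$ we have $\|\bz\|_{\infty}\le\frac{33}{32}t_{0}$, so
\[
\sup_{\Delta_{0}}|F|\le 10\Lambda(\vth^{*})C\Big(\tfrac{33}{32\kappa_{0}}\Big)^{2}t_{0}^{2}+\tfrac{33}{32}\cdot\tfrac{20\Lambda(\vth^{*})\vth^{*}}{\kappa_{0}}\,t_{0}\,e^{-\Re T(\bz)} .
\]
Here $\binom{5}{3}\cdot1!=10$, which produces the first term of $K_{4}$. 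For the exponential term, the proof of Theorem \ref{thm:smooth} gives $\Re T(\bz)\ge\kappa_{1}/(Ct_{0})$. Using $e^{-u}\le 1/u$ we get $t_{0}e^{-\Re T}\le Ct_{0}^{2}/\kappa_{1}$, and this is precisely the second term of $K_{4}$. Hence $\sup_{\Delta_{0}}|F|\le K_{4}\,C\,t_{0}^{2}$, valid for every $t_{0}>0$ with no smallness condition, because $N=1$ needs none.

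Next apply Cauchy's inequality (Lemma \ref{lem:cauchy}) for first derivatives on the polydisc of polyradius $\frac12\delta_{k}t_{0}$ centred at $\bc_{0}$, which lies inside $\Delta_{0}$:
\[
|\partial_{j}\Sm B_{i}(\bc_{0})|\le \frac{2}{\delta_{k}t_{0}}\,K_{4}Ct_{0}^{2}=64k\,K_{4}\,C\,t_{0} .
\]
The $\infty\to\infty$ operator norm is the maximal row sum, which gives at most $4\cdot64k\,K_{4}Ct_{0}=256kK_{4}C\|\bc_{0}\|_{\infty}$. This matches $L=256kK_{4}$ and confirms the choice of the polyradius $\frac12\delta_{k}t_{0}$ together with the four-term row sum.

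The main obstacle is making sure the complex master estimate really applies uniformly on all of $\Delta_{0}$ with $N=1$ and no restriction on $t_{0}$. This includes large $t_{0}$, where $T$ is small and the exponential term does not decay, so it must be absorbed via $e^{-u}\le1/u$. One must also confirm that the derivative of the real function $\Sm B_{i}$ equals that of its holomorphic extension $F$ (Proposition \ref{prop:analytic}).
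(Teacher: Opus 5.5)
Your proposal is correct and follows the paper's proof essentially step for step. The steps are: componentwise use of Theorem~\ref{thm:smooth} and Proposition~\ref{prop:rule}(i); Theorem~\ref{thm:cxmaster} with $N=1$; the unconditional bound $\Re T(\bz)\ge\kappa_{1}/(Ct_{0})$ combined with $e^{-u}\le 1/u$ to get $\sup_{\Delta_{0}}|F_{i}|\le K_{4}Ct_{0}^{2}$; Cauchy's inequality on the polydisc of polyradius $\tfrac12\delta_{k}t_{0}$; and the factor $4$ from the maximal row sum. As a small aside, for $N=1$ the term $J$ in the proof of Theorem~\ref{thm:cxmaster} is an empty sum, so the second summand of $K_{4}$ is not actually needed, though keeping it (as the paper does) is harmless.
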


\begin{proof}
Regularity is Theorem \ref{thm:smooth} componentwise, and the first bound is
Proposition \ref{prop:rule}(i). For the second, apply Theorem \ref{thm:cxmaster} with $N=1$ on the
polydisc $\Delta_{0}$ around a real $\bc_{0}\ne0$, $t_{0}=\|\bc_{0}\|_{\infty}$: using
$\|\bz\|_{\infty}\le\frac{33}{32}t_{0}$, $\binom{5}{3}=10$, the lower bound
$\Re T(\bz)\ge\kappa_{1}/(Ct_{0})$ established in the proof of Theorem \ref{thm:smooth}, and
$e^{-u}\le 1/u$,
\[
\sup_{\Delta_{0}}|F_{i}|\ \le\ 10\Lambda(\vth^{*})\Big(\tfrac{33}{32\kappa_{0}}\Big)^{2}C t_{0}^{2}
+\tfrac{20\Lambda(\vth^{*})\vth^{*}}{\kappa_{0}}\cdot\tfrac{33}{32}t_{0}\cdot\tfrac{Ct_{0}}{\kappa_{1}}
= K_{4}\,C\,t_{0}^{2}.
\]
Cauchy's inequality (Lemma \ref{lem:cauchy}) on the closed polydisc of polyradius
$\tfrac12\delta_{k}t_{0}$ gives
$|\partial_{j}\Sm B_{i}(\bc_{0})|\le K_{4}Ct_{0}^{2}/(\tfrac12\delta_{k}t_{0})=64kK_{4}Ct_{0}$, and the
$\ell^{\infty}\to\ell^{\infty}$ operator norm is at most $4$ times the largest entry. Finally
$D(\Sm\bB)(0)=0$ by Theorem \ref{thm:smooth} since $B$ has no linear part.
\end{proof}

\begin{corollary}[Well--posedness of the summed flow]\label{cor:flow}
For every $\bc^{0}\in\R^{4}$ the initial value problem
$\dot\bc(t)=\Sm\bB(\bc(t))$, $\bc(0)=\bc^{0}$, has a unique maximal solution, which is $C^{\infty}$
in $t$ and depends smoothly on $\bc^{0}$; the constant solution $\bc\equiv0$ is the unique solution
through the origin.
\end{corollary}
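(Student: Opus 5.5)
The natural route is the Picard--Lindelöf theorem, fed by the regularity already established in Proposition \ref{prop:field}. That proposition shows $\Sm\bB\in C^{\infty}(\R^{4};\R^{4})$. In particular $\Sm\bB$ is $C^{1}$, hence locally Lipschitz, and the gradient bound $\|D(\Sm\bB)(\bc)\|_{\infty\to\infty}\le LC\|\bc\|_{\infty}$ makes the local Lipschitz constant explicit on every ball. The steps, in order, are as follows.

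\begin{enumerate}[label=(\arabic*),leftmargin=2.2\parindent]
\item Picard--Lindelöf gives, for every $\bc^{0}$, a unique local solution.
\item The usual continuation argument (Zorn, or gluing of local solutions, with uniqueness ensuring consistency) produces a unique maximal solution on an open interval $(t_{-},t_{+})\ni0$.
\item Smoothness in $t$ follows by bootstrapping. $\bc$ is $C^{1}$ and $\dot\bc=\Sm\bB(\bc)$ with $\Sm\bB\in C^{\infty}$, so if $\bc\in C^{m}$ then $\dot\bc\in C^{m}$ and $\bc\in C^{m+1}$.
\item Smooth dependence on $\bc^{0}$ is the standard differentiable-dependence theorem for $C^{\infty}$ vector fields. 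The flow map $\Phi(t,\bc^{0})$ is $C^{\infty}$ on its open domain of definition, since the variational equations $\dot Y=D(\Sm\bB)(\bc(t))Y$ have smooth coefficients. I would cite this as standard, or sketch it via the implicit function theorem applied to the integral equation $\bc(t)=\bc^{0}+\int_{0}^{t}\Sm\bB(\bc(s))\,ds$ in $C([-\delta,\delta];\R^{4})$.
\item Since $\Sm\bB(0)=0$ (Proposition \ref{prop:field}), the constant function $\bc\equiv0$ is a global solution. Uniqueness from step (1) then shows it is the only solution through the origin.
\end{enumerate}

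A useful supplementary point concerns the maximal interval. I do not expect to need global existence, and I would not claim it. The bound $\|\Sm\bB(\bc)\|_{\infty}\le aC\|\bc\|_{\infty}^{2}$ gives, via the comparison $\frac{d}{dt}\|\bc\|_{\infty}\le aC\|\bc\|_{\infty}^{2}$ (taken in the sense of upper Dini derivatives), a lower bound $t_{+}\ge 1/(aC\|\bc^{0}\|_{\infty})$. This is worth recording as the one-loop escape-time statement, but it is not part of the claim.

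The only step with any content beyond textbook ODE theory is regularity at the origin. Local Lipschitz continuity there requires $\Sm\bB$ to be $C^{1}$ at $0$, not merely real-analytic on $\R^{4}\setminus\{0\}$. This is exactly where Theorem \ref{thm:smooth} (via Lemma \ref{lem:flat}) is used, and it is what guarantees uniqueness through the origin. Since that theorem is already proved, I expect no real obstacle. The main care point is to invoke the $C^{\infty}$ statement on all of $\R^{4}$, rather than the analyticity off the origin, when asserting smooth dependence for trajectories passing near $0$.
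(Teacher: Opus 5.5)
Your proposal is correct and follows the paper's proof essentially verbatim: $C^{\infty}$ regularity of $\Sm\bB$ on all of $\R^{4}$ (Theorem \ref{thm:smooth} via Proposition \ref{prop:field}) gives local Lipschitz continuity, after which Picard--Lindelöf, the standard smooth-dependence theorem and $\Sm\bB(0)=0$ finish the argument. One small quibble: Lipschitz continuity at the origin does not strictly \emph{require} $C^{1}$ there, since the explicit bound $\|D(\Sm\bB)(\bc)\|_{\infty\to\infty}\le LC\|\bc\|_{\infty}$ off the origin, together with continuity at $0$, already suffices; this is the sense in which the paper says Proposition \ref{prop:field} makes it explicit.
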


\begin{proof}
$\Sm\bB$ is $C^{\infty}$, hence locally Lipschitz; apply the Picard--Lindelöf theorem and the
standard smooth--dependence theorem. Uniqueness through $0$ also follows from local Lipschitz
continuity at $0$, which Proposition \ref{prop:field} makes explicit.
\end{proof}

\begin{corollary}[One--loop confinement; perturbative renormalisation--group time]\label{cor:confine}
Let $\bc(\cdot)$ solve $\dot\bc=\Sm\bB(\bc)$ with $\mu_{0}:=\|\bc^{0}\|_{\infty}>0$ and put
$a=10\Lambda(\vth)$. Then, as long as the solution exists,
\[
\frac{\mu_{0}}{1+aC\mu_{0}|t|}\ \le\ \|\bc(t)\|_{\infty}\ \le\ \frac{\mu_{0}}{1-aC\mu_{0}|t|}
\qquad\Big(|t|<\frac{1}{aC\mu_{0}}\Big),
\]
so the flow cannot leave the perturbative region faster than the one--loop (Landau--pole) rate in
either direction; and if $\mu_{0}<\vth_{k}/C$ the trajectory remains in the domain
$\{\|\bc\|_{\infty}<\vth_{k}/C\}\subseteq\D$ at least for
$|t|<\frac{1}{aC}\big(\frac{1}{\mu_{0}}-\frac{C}{\vth_{k}}\big)$, a time that diverges as
$\mu_{0}\to0$.
\end{corollary}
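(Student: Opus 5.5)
We compare $m(t):=\|\bc(t)\|_{\infty}$ with the solutions of the scalar one--loop equations $\dot y=\pm aCy^{2}$. The only input is the componentwise bound $\|\Sm\bB(\bc)\|_{\infty}\le aC\|\bc\|_{\infty}^{2}$ from Proposition \ref{prop:field} (equivalently Proposition \ref{prop:rule}(i)).

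First, $m$ is Lipschitz on compact time intervals, being a maximum of four $C^{1}$ functions $|c_{i}(t)|$. So its upper Dini derivative can be controlled coordinatewise. At any time $t$, pick an index $i$ with $|c_{i}(t)|=m(t)$. Then
\[
D^{+}m(t)\le\max_{i:\,|c_i|=m}\frac{d}{dt}|c_i|\le\|\dot\bc(t)\|_{\infty}\le aC\,m(t)^{2}.
\]
The same bound holds for the lower Dini derivative, so $|D^{\pm}m|\le aCm^{2}$.

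Second, by Corollary \ref{cor:flow} the origin is an equilibrium and trajectories are unique. Since $\mu_{0}>0$, we get $m(t)>0$ for all $t$ in the maximal interval. We may therefore set $u:=1/m$, which is again locally Lipschitz and satisfies $|D^{\pm}u|\le aC$ almost everywhere. Being Lipschitz, $u$ is absolutely continuous, which gives
\[
\big|1/m(t)-1/\mu_{0}\big|\le aC|t|.
\]
Inverting, and requiring $1/\mu_{0}-aC|t|>0$ for the upper bound, yields exactly the two displayed inequalities on $|t|<1/(aC\mu_{0})$. Both signs of $t$ are covered at once because the estimate is symmetric.

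Third, for confinement, the upper bound gives $m(t)<\vth_{k}/C$ as soon as $1/\mu_{0}-aC|t|>C/\vth_{k}$. This is precisely the stated time bound, and it diverges as $\mu_{0}\to0$. It remains to check that $\{\|\bc\|_{\infty}<\vth_{k}/C\}\subseteq\D$. This follows from Lemma \ref{lem:gauge}: $C\nu_{k}(\bc)\le C\,4^{1/(2k)}\|\bc\|_{\infty}<4^{1/(2k)}\vth_{k}=\vth$.

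Finally, the upper bound keeps $\bc(t)$ bounded on the whole stated interval. Hence the maximal solution cannot blow up there, so the solution in fact exists on that interval and the qualifier ``as long as the solution exists'' is harmless.

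The only delicate point is the nonsmoothness of $\|\cdot\|_{\infty}$. I would handle it with the Dini-derivative argument above, or alternatively by applying the same comparison to the smooth gauge $\nu_{k}$ and letting $k\to\infty$; the Dini route is cleaner.
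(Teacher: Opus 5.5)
Your argument is correct and is essentially the paper's proof. It has the same ingredients: positivity of $\|\bc(t)\|_{\infty}$ from uniqueness through the origin (Corollary \ref{cor:flow}), local Lipschitz continuity of $\mu$ and $\mu^{-1}$, the bound $|\dot\mu|\le\|\dot\bc\|_{\infty}\le aC\mu^{2}$ from Proposition \ref{prop:field}, integration of $\frac{d}{dt}\mu^{-1}$, and Lemma \ref{lem:gauge} for the inclusion in $\D$; you only phrase the derivative bound with Dini derivatives instead of almost--everywhere derivatives, and you correctly add that the a priori bound forces the solution to exist on all of $|t|<1/(aC\mu_{0})$, which the paper does not state. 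One small wording fix: $|c_{i}(t)|$ need not be $C^{1}$ where $c_{i}$ changes sign, but it is Lipschitz on compact intervals and is differentiable at every active index, since there $|c_{i}|=\mu>0$.
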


\begin{proof}
By Corollary \ref{cor:flow} the constant solution is the unique one through the origin, so a
trajectory with $\mu_{0}>0$ satisfies $\mu(t):=\|\bc(t)\|_{\infty}>0$ throughout its maximal
interval. Hence $\mu$ and $\mu^{-1}$ are locally Lipschitz, therefore absolutely continuous on
compact subintervals, and almost everywhere
$|\dot\mu|\le\|\dot\bc\|_{\infty}\le aC\mu^{2}$ by Proposition \ref{prop:field}, i.e.
$\big|\frac{d}{dt}\mu^{-1}\big|\le aC$ a.e.; integrating gives $|\mu(t)^{-1}-\mu_{0}^{-1}|\le aC|t|$,
which is the displayed two--sided bound. The inclusion $\{\|\bc\|_{\infty}<\vth_{k}/C\}\subseteq\D$ holds
because $\nu_{k}\le4^{1/(2k)}\|\bc\|_{\infty}$.
\end{proof}

\begin{corollary}[No fixed points where one loop dominates]\label{cor:nozero}
Let $\mathcal{K}\subseteq\{\bu\in\R^{4}:\|\bu\|_{\infty}=1\}$ and
$\kappa:=\inf_{\bu\in\mathcal{K}}\|\bb_{2}(\bu)\|_{\infty}>0$. Put
$K_{5}:=40\Lambda(\vth)+2G(\vth)/\vth_{k}^{2}$. Then for every $\bc$ in the open cone over
$\mathcal{K}$ with
\[
0<\|\bc\|_{\infty}\le\min\Big\{\frac{\kappa}{2K_{5}C^{2}},\ \frac{\vth_{k}}{C}\Big\}
\]
one has $\|\Sm\bB(\bc)\|_{\infty}\ge\frac{\kappa}{2}\|\bc\|_{\infty}^{2}>0$. In particular the summed
flow has no fixed point other than the origin in that region, and the same conclusion holds for
\emph{any} rule satisfying the estimate of Corollary \ref{cor:gevrey}(c), the exponentially small
ambiguity of Theorem \ref{thm:ambiguity} being far smaller than $\kappa\|\bc\|_{\infty}^{2}/2$.
\end{corollary}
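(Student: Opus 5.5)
Apply Corollary \ref{cor:gevrey}(c) componentwise and use the one--loop lower bound on the cone. Fix $\bc$ in the open cone over $\mathcal{K}$, so $\bc=\mu\bu$ with $\mu=\|\bc\|_{\infty}$ and $\bu\in\mathcal{K}$; by homogeneity $\bb_{2}(\bc)=\mu^{2}\bb_{2}(\bu)$, hence $\|\bb_{2}(\bc)\|_{\infty}\ge\kappa\mu^{2}$. The condition $\mu\le\vth_{k}/C$ gives $T(\bc)\ge\vth_{k}/x\ge1$ by \eqref{eq:Tlower}, so $N=2$ is admissible in Theorem \ref{thm:master}. Corollary \ref{cor:gevrey}(c) then yields, for each component $i$,
\[
|\Sm B_{i}(\bc)-b_{2,i}(\bc)|\le 40\Lambda(\vth)C^{2}\mu^{3}+\frac{G(\vth)}{C}\,x\,e^{-T(\bc)} .
\]

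Next I convert the head term into a cubic bound. Since $T\ge\vth_{k}/x$, we have $e^{-T}\le e^{-\vth_{k}/x}\le 2!\,(x/\vth_{k})^{2}$, using $e^{-u}\le M!u^{-M}$ with $M=2$. So $\frac{G}{C}xe^{-T}\le \frac{2G}{\vth_{k}^{2}}\frac{x^{3}}{C}=\frac{2G(\vth)}{\vth_{k}^{2}}C^{2}\mu^{3}$. Adding the two pieces gives $|\Sm B_{i}(\bc)-b_{2,i}(\bc)|\le K_{5}C^{2}\mu^{3}$, with exactly the constant $K_{5}$ of the statement. This constant-matching is the only delicate point, and this choice reproduces it.

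For the conclusion, the reverse triangle inequality in $\ell^{\infty}$ gives $\|\Sm\bB(\bc)\|_{\infty}\ge\kappa\mu^{2}-K_{5}C^{2}\mu^{3}$. The condition $\mu\le\kappa/(2K_{5}C^{2})$ makes this at least $\frac{\kappa}{2}\mu^{2}>0$, so there is no zero, and hence no fixed point, in the region. The claim for any other rule follows because only the estimate of Corollary \ref{cor:gevrey}(c) was used. The ambiguity $(2\pi/C)e^{-1/x}$ is likewise $O(C^{2}\mu^{3})$ by the same $e^{-u}\le 2u^{-2}$ bound, so it is dominated by $\kappa\mu^{2}/2$ for small $\mu$.
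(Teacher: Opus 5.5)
Your proposal is correct and follows the paper's own proof essentially line by line. Both arguments use homogeneity to get $\|\bb_{2}(\bc)\|_{\infty}\ge\kappa\|\bc\|_{\infty}^{2}$, apply Corollary~\ref{cor:gevrey}(c) componentwise with $e^{-u}\le 2u^{-2}$ to turn the head term into $\frac{2G(\vth)}{\vth_{k}^{2}}C^{2}\|\bc\|_{\infty}^{3}$, and finish with the reverse triangle inequality. You additionally make explicit that $\|\bc\|_{\infty}\le\vth_{k}/C$ guarantees $T(\bc)\ge1$, so that $N=2$ is allowed; the paper leaves this check implicit.
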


\begin{proof}
By homogeneity $\|\bb_{2}(\bc)\|_{\infty}=\|\bc\|_{\infty}^{2}\|\bb_{2}(\bu)\|_{\infty}\ge\kappa\|\bc\|_{\infty}^{2}$
with $\bu=\bc/\|\bc\|_{\infty}$. By Corollary \ref{cor:gevrey}(c) applied componentwise, and using
$e^{-u}\le 2u^{-2}$ with $u=T\ge\vth_{k}/x$,
\[
\|\Sm\bB(\bc)-\bb_{2}(\bc)\|_{\infty}\le 40\Lambda(\vth)C^{2}\|\bc\|_{\infty}^{3}
+\frac{G(\vth)}{C}x\cdot\frac{2x^{2}}{\vth_{k}^{2}}=K_{5}\,C^{2}\|\bc\|_{\infty}^{3}.
\]
Subtract, and use $K_{5}C^{2}\|\bc\|_{\infty}^{3}\le\frac\kappa2\|\bc\|_{\infty}^{2}$.
\end{proof}

\begin{corollary}[The nonperturbative ambiguity does not amplify]\label{cor:amplify}
Let $\Sm,\Sm'$ be two rules as in Proposition \ref{prop:params}, so that
$\sup_{\|\bc\|_{\infty}\le r}\|\Sm\bB-\Sm'\bB\|_{\infty}\le\Delta$ with $\Delta$ exponentially small
in $1/(Cr)$. If $\bc(\cdot),\bc'(\cdot)$ solve the two flows from the same initial datum and both
remain in $\{\|\bc\|_{\infty}\le r\}$ up to time $t$, then
\[
\|\bc(t)-\bc'(t)\|_{\infty}\ \le\ \frac{\Delta}{LCr}\Big(e^{LCr|t|}-1\Big).
\]
Over the perturbative renormalisation--group times $|t|\le\frac{1}{aCr}$ of
Corollary \ref{cor:confine} the amplification factor $e^{LCr|t|}\le e^{L/a}$ is bounded by a
constant depending only on $(k,\vth)$; the choice of summation rule therefore changes the trajectory
by an amount that is still beyond all orders in the couplings.
\end{corollary}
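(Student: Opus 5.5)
This is a Grönwall argument built on the Lipschitz bound of Proposition \ref{prop:field}. Write $\bm{e}(s):=\bc(s)-\bc'(s)$ for $s$ between $0$ and $t$; it suffices to treat $t\ge0$, since the case $t<0$ follows by time reversal. Subtracting the two integral equations gives
\[
\bm{e}(s)=\int_{0}^{s}\big[\Sm\bB(\bc(u))-\Sm\bB(\bc'(u))\big]\,du+\int_{0}^{s}\big[\Sm\bB(\bc'(u))-\Sm'\bB(\bc'(u))\big]\,du ,
\]
and the first integrand splits off the rule-independent Lipschitz part while the second carries the discrepancy between the two rules.

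\emph{Second term.} Because $\|\bc'(u)\|_{\infty}\le r$, the hypothesis bounds this integrand by $\Delta$ pointwise.

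\emph{First term.} The segment joining $\bc(u)$ and $\bc'(u)$ stays in the $\ell^{\infty}$ ball of radius $r$, by convexity of that ball. On this segment Proposition \ref{prop:field} gives $\|D(\Sm\bB)\|_{\infty\to\infty}\le LC\|\cdot\|_{\infty}\le LCr$. The mean value inequality in integral form (valid since $\Sm\bB\in C^{\infty}$) then gives
\[
\|\Sm\bB(\bc(u))-\Sm\bB(\bc'(u))\|_{\infty}\le LCr\,\|\bm{e}(u)\|_{\infty}.
\]

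Combining the two bounds, $\phi(s):=\|\bm{e}(s)\|_{\infty}$ is continuous with $\phi(0)=0$ and satisfies
\[
\phi(s)\le\Delta s+LCr\int_{0}^{s}\phi(u)\,du .
\]
Grönwall's inequality in integral form with forcing term $\Delta s$ gives $\phi(s)\le\frac{\Delta}{LCr}\big(e^{LCrs}-1\big)$. To see this quickly, note that $\psi(s):=\Delta s+LCr\int_{0}^{s}\phi$ satisfies $\psi'\le\Delta+LCr\,\psi$, so $(e^{-LCrs}\psi)'\le\Delta e^{-LCrs}$; integrating from $0$ yields the bound on $\psi\ge\phi$. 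Setting $s=t$ gives the claim.

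The final assertion is arithmetic: for $|t|\le 1/(aCr)$ we have $LCr|t|\le L/a$, so the amplification factor is at most $e^{L/a}$, which depends only on $(k,\vth)$. Hence $\|\bc(t)-\bc'(t)\|_{\infty}\le\Delta\,(e^{L/a}-1)/(LCr)$. Proposition \ref{prop:params} makes $\Delta$ of size $x^{-5/2}e^{-\vth_{\min}/x}$ with $x=Cr$; the extra factor $1/(Cr)$ only degrades the polynomial prefactor, and the result remains beyond all orders.

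The only point needing care is that Proposition \ref{prop:field} bounds the derivative in terms of $\|\bc\|_{\infty}$ at the evaluation point. That is why the convexity of the ball is needed to keep the whole segment, and not merely its endpoints, inside $\{\|\bc\|_{\infty}\le r\}$. Nothing in the argument is a serious obstacle.
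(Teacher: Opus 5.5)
Your proof is correct and is essentially the paper's argument: Grönwall's inequality for $\|\bc(t)-\bc'(t)\|_{\infty}$, with Lipschitz constant $LCr$ taken from Proposition \ref{prop:field} and forcing term $\Delta$. You use the integral form and state explicitly that the $\ell^{\infty}$ ball is convex, which the mean value inequality needs; the paper's one-line differential version ($\dot u\le LCr\,u+\Delta$ a.e.) leaves both points implicit.
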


\begin{proof}
Grönwall's inequality applied to $u(t)=\|\bc(t)-\bc'(t)\|_{\infty}$, which satisfies
$u(0)=0$ and $\dot u\le LCr\,u+\Delta$ a.e. by Proposition \ref{prop:field}.
\end{proof}

\section{Summary of the answer}\label{sec:summary}

We have constructed, for the formal beta--function series \eqref{eq:Bdef} of four--dimensional
nonplanar Euclidean scalar $\varphi^{4}$ theory under the factorial bounds of Hypothesis \ref{hyp:H},
the summation rule
\[
\boxed{\ \Sm B(\bc)=\int_{0}^{T(\bc)}e^{-\tau}\,\hB(\tau\bc)\,\frac{d\tau}{\tau}
=\sum_{p\ge2}\Pr\big[\Poi(T(\bc))\ge p\big]\,b_{p}(\bc),\qquad
\hB(\bz)=\sum_{p\ge2}\frac{b_{p}(\bz)}{(p-1)!},\ \
T(\bc)=\frac{\vth}{C\,\nu_{k}(\bc)},\ }
\]
with the real--analytic gauge $\nu_{k}(\bc)=(\sum_{i}c_{i}^{2k})^{1/(2k)}$ and admissible parameters
$(k,\vth)$, and the domain
\[
\boxed{\ \D=\{\bc\in\R^{4}:\ C\,\nu_{k}(\bc)<\vth\}\ \supseteq\ \{\bc:\|\bc\|_{1}<\vth/C\},\ }
\]
a nonempty open star--shaped neighbourhood of the origin. On $\D$ (indeed on all of $\R^{4}$) the
rule produces a finite real number; it is linear in the coefficient family and real; the summed beta
function is real--analytic away from the origin and $C^{\infty}$ everywhere with Taylor expansion
exactly $B$; it reproduces the perturbative series to all orders with the explicit Gevrey--$1$
remainder bound of Theorem \ref{thm:master}, and its Taylor expansion at the origin is exactly $B$,
so every perturbative coefficient --- the one--loop coefficient in particular --- is recovered
exactly;
at optimal truncation its remainder is $O\big(e^{-\vth_{k}/(C\|\bc\|_{\infty})}\big)$, where the
exponent $\vth_{k}<1$ can be brought arbitrarily close to $1$ by choosing the admissible pair
suitably; and it agrees with the classical Borel sum, and with the ordinary sum of a convergent
series, up to errors that are again exponentially small in $1/(C\|\bc\|_{\infty})$ --- not exactly,
which by Theorem \ref{thm:dichotomy} is impossible for any rule defined on the whole class. Finally the construction is optimal in a precise sense: under Hypothesis
\ref{hyp:H} the series diverges at every nonzero point, its Borel transform may have a natural
boundary on $|\tau|=1/(C\|\bc\|_{\infty})$ so that no continuation--based summation exists on the
class, no exact coefficient--continuous linear rule exists, no rule defined on the whole class can be
regular, and the value of the sum is intrinsically undetermined at the level $(\pi/C)e^{-1/(C\|\bc\|_{\infty})}$. No rule can resolve the
beta function beyond that threshold, and the constructed family approaches it: every exponent
$\eta<1$ is realised, and $\eta=1$ is the exact barrier.

\appendix
\section{Elementary lemmas}\label{app:elementary}

\begin{lemma}[Elementary Stirling bounds]\label{lem:stirling}
For every integer $n\ge1$: $\ (n/e)^{n}\le n!\le e\,n^{n+1/2}e^{-n}$.
\end{lemma}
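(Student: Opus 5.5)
The lower bound comes from the logarithm of the product together with a comparison of $\sum_{j=1}^{n}\log j$ with $\int_{0}^{n}\log t\,dt$. The upper bound comes from comparing the same sum with a trapezoid-type integral, using concavity of $\log$. Both are elementary and use nothing from the body of the paper.

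\emph{Lower bound.} The cleanest route avoids integrals: from the exponential series, $e^{n}=\sum_{j\ge0}n^{j}/j!\ge n^{n}/n!$, since every term is nonnegative. Rearranging gives $n!\ge n^{n}e^{-n}=(n/e)^{n}$ directly for every $n\ge1$.

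\emph{Upper bound.} Since $\log$ is concave, on each interval $[j,j+1]$ the chord lies below the graph. Hence
\[
\tfrac12\big(\log j+\log(j+1)\big)\le\int_{j}^{j+1}\log t\,dt .
\]
Summing over $j=1,\dots,n-1$ gives
\[
\sum_{j=1}^{n}\log j-\tfrac12\log n\le\int_{1}^{n}\log t\,dt=n\log n-n+1 ,
\]
using $\log1=0$. Exponentiating yields $n!\le n^{n+1/2}e^{-n}\cdot e$, which is the claim. For $n=1$ the sum over $j$ is empty and the inequality reads $1\le e\cdot e^{-1}=1$, which holds with equality, so no separate case is needed.

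There is no real obstacle. The only point needing care is the trapezoid step: one must check that the chord inequality, integrated over each unit interval, gives exactly the $\tfrac12\log n$ correction that produces the exponent $n+\tfrac12$, with the boundary term at $j=1$ vanishing because $\log1=0$.
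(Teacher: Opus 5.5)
Your proof is correct. The lower bound is exactly the paper's argument: keep the single term $n^{n}/n!$ of the exponential series for $e^{n}$. Your opening sentence announces an integral comparison for the lower bound, which you then do not use; this is harmless.

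For the upper bound the justification differs. The paper sets $a_{n}:=n!\,e^{n}n^{-n-1/2}$ and computes $a_{n}/a_{n+1}=e^{-1}(1+\tfrac1n)^{n+1/2}$. It shows this ratio exceeds $1$ via the calculus inequality $\ln(1+x)>\tfrac{2x}{2+x}$, checked by differentiating, so $a_{n}\le a_{1}=e$. You instead sum the trapezoid inequality that concavity of $\log$ provides.

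The two arguments are closer than they look. The trapezoid defect on $[j,j+1]$ is
\[
\int_{j}^{j+1}\log t\,dt-\tfrac12\big(\log j+\log(j+1)\big)=\big(j+\tfrac12\big)\log\big(1+\tfrac1j\big)-1=\log\frac{a_{j}}{a_{j+1}} .
\]
So your summed inequality is precisely the telescoped form of the paper's monotonicity, $\log a_{n}=1-\sum_{j<n}\log(a_{j}/a_{j+1})$. What differs is only how the one-step inequality is justified. Concavity of $\log$ (chord below graph) is more conceptual and needs no auxiliary function. The paper's derivative computation is self-contained algebra and gives strict decrease of $a_{n}$ directly. Strictness is not needed for the lemma, and in your version it would follow from strict concavity anyway.
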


\begin{proof}
The lower bound follows from $e^{n}=\sum_{j\ge0}n^{j}/j!\ge n^{n}/n!$. For the upper bound put
$a_{n}:=n!\,e^{n}n^{-n-1/2}$; then
\[
\frac{a_{n}}{a_{n+1}}=\frac{1}{e}\Big(1+\frac1n\Big)^{n+1/2}.
\]
For $x>0$ let $\phi(x)=\ln(1+x)-\frac{2x}{2+x}$; then $\phi(0)=0$ and
$\phi'(x)=\frac{1}{1+x}-\frac{4}{(2+x)^{2}}=\frac{x^{2}}{(1+x)(2+x)^{2}}>0$, so
$\ln(1+x)>\frac{2x}{2+x}$ for $x>0$. With $x=1/n$ this gives
$(n+\frac12)\ln(1+\frac1n)>(n+\frac12)\frac{2/n}{2+1/n}=1$, i.e. $(1+\frac1n)^{n+1/2}>e$ and
$a_{n}>a_{n+1}$. Hence $a_{n}\le a_{1}=e$ for all $n\ge1$.
\end{proof}

\begin{lemma}[Least--term lemma]\label{lem:leastterm}
For $0<y\le1$,
\[
e^{-1/y}\ \le\ \min_{N\ge1}N!\,y^{N}\ \le\ e^{2}\,y^{-1/2}\,e^{-1/y}.
\]
\end{lemma}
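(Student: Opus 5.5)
The lower bound is immediate from the elementary Stirling bound of Lemma \ref{lem:stirling}. For every $N\ge1$ we have $N!\ge(N/e)^{N}$, so
\[
N!\,y^{N}\ge\exp\big(N\log(Ny/e)\big).
\]
It therefore suffices to show $N\log(Ny/e)\ge-1/y$ for all real $N>0$. Writing $u=Ny$, this becomes
\[
u\log(u/e)=u\log u-u\ \ge\ -1 ,
\]
after dividing by $1/y>0$. The function $u\mapsto u\log u-u$ has derivative $\log u$, so its minimum on $(0,\infty)$ is attained at $u=1$ and equals $-1$. This gives $N!\,y^{N}\ge e^{-1/y}$ for every $N\ge1$, and hence for the minimum.

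For the upper bound I would exhibit one good $N$ and use the upper Stirling bound $N!\le e\,N^{N+1/2}e^{-N}$ of Lemma \ref{lem:stirling}. The choice is $N:=\lfloor 1/y\rfloor$, which satisfies $N\ge1$ because $y\le1$, and $Ny\le1$. The Stirling bound gives
\[
N!\,y^{N}\le e\sqrt N\,(Ny)^{N}e^{-N}\le e\sqrt N\,e^{-N},
\]
where the second step uses $(Ny)^{N}\le1$. Since $N>1/y-1$, we get $e^{-N}<e\cdot e^{-1/y}$. Since $N\le1/y$, we get $\sqrt N\le y^{-1/2}$. Combining the three bounds yields $N!\,y^{N}\le e^{2}y^{-1/2}e^{-1/y}$, which is the claimed inequality.

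There is no real obstacle. The only point needing care is the choice of $N$: taking the floor guarantees $Ny\le1$, so the factor $(Ny)^{N}$ can be discarded, while the resulting loss $e^{-N}$ versus $e^{-1/y}$ costs only a factor $e$. This is what produces the constant $e^{2}$.
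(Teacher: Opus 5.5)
Your proof is correct and follows essentially the same route as the paper: both use the lower Stirling bound followed by minimising $\nu\mapsto\nu\ln(\nu y)-\nu$ over real $\nu$ (minimum $-1/y$ at $\nu=1/y$), and both choose $N=\lfloor 1/y\rfloor$ with the upper Stirling bound, $Ny\le1$, $\sqrt N\le y^{-1/2}$ and $e^{-N}<e^{1-1/y}$.
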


\begin{proof}
Lower bound: by Lemma \ref{lem:stirling}, $N!y^{N}\ge(Ny/e)^{N}=e^{\phi(N)}$ with
$\phi(\nu)=\nu\ln(\nu y)-\nu$; $\phi'(\nu)=\ln(\nu y)$ vanishes only at $\nu=1/y$, where $\phi$
attains its minimum $-1/y$. Upper bound: take $N=\lfloor1/y\rfloor\ge1$, so $Ny\le1$ and
$N>1/y-1$. By Lemma \ref{lem:stirling},
$N!y^{N}\le e\sqrt N\,(Ny/e)^{N}\le e\sqrt N e^{-N}\le e\cdot y^{-1/2}\cdot e^{\,1-1/y}$.
\end{proof}

\begin{lemma}[Weierstrass convergence theorem in $\C^{n}$]\label{lem:weier}
Let $U\subseteq\C^{n}$ be open and let $f_{j}$ be holomorphic on $U$ with $f_{j}\to f$ uniformly on
compact subsets of $U$. Then $f$ is holomorphic on $U$.
\end{lemma}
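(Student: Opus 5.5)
The approach is to reduce the several--variable statement to one complex variable at a time, via Cauchy's integral formula on polydiscs. Holomorphy of $f$ on $U$ will be taken in the sense that is equivalent, for continuous functions, to the Cauchy formula: $f$ is continuous and satisfies the iterated Cauchy integral formula on every closed polydisc in $U$. Alternatively, it suffices that $f$ be continuous and holomorphic in each variable separately; Osgood's lemma turns this into the Cauchy formula by iterating the one--variable formula. I will prove that the Cauchy formula passes to the limit.

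First, $f$ is continuous on $U$. Each $f_{j}$ is continuous, and uniform convergence on compact subsets gives continuity of the limit on every compact neighbourhood of each point.

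Next, fix a closed polydisc $\overline{\Delta}=\prod_{i}\{|z_{i}-a_{i}|\le r_{i}\}\subset U$, with distinguished boundary $\partial_{0}\Delta=\prod_{i}\{|\zeta_{i}-a_{i}|=r_{i}\}$. For each $j$ and each $\bz$ in the open polydisc $\Delta$,
\[
f_{j}(\bz)=\frac{1}{(2\pi i)^{n}}\int_{\partial_{0}\Delta}\frac{f_{j}(\zeta)}{\prod_{i}(\zeta_{i}-z_{i})}\,d\zeta_{1}\cdots d\zeta_{n}.
\]
The torus $\partial_{0}\Delta$ is compact and contained in $U$, so $f_{j}\to f$ uniformly on it. For fixed $\bz\in\Delta$ the kernel is bounded on the torus by $\prod_{i}(r_{i}-|z_{i}-a_{i}|)^{-1}$. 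Hence we may pass to the limit under the integral sign, and $f$ satisfies the same Cauchy formula on $\Delta$.

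It remains to read off holomorphy from this formula. Differentiating under the integral sign, which is legitimate because the kernel is smooth in $\bz$ and the integrand is uniformly bounded for $\bz$ in a smaller polydisc, shows that $f$ is $C^{1}$ there. Moreover $\partial f/\partial\bar z_{i}=0$, since the kernel is holomorphic in each $z_{i}$. Equivalently, expanding $\prod_{i}(\zeta_{i}-z_{i})^{-1}$ in a geometric series that converges uniformly on the torus exhibits $f$ as a convergent power series near $a$. Either way $f$ is holomorphic on a neighbourhood of every point of $U$, hence on $U$.

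The only real obstacle is foundational: fixing which definition of holomorphy in $\C^{n}$ is being used. I will state explicitly that the Cauchy formula or power--series characterization is the definition, or cite its equivalence with separate holomorphy plus continuity. After that, everything is uniform convergence on the compact torus $\partial_{0}\Delta$.
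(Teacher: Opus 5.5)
Your proposal is correct and follows essentially the same route as the paper: pass the iterated Cauchy formula on the distinguished boundary $\partial_{0}\Delta$ to the limit using uniform convergence on that compact torus, then expand the kernel $\prod_{i}(\zeta_{i}-z_{i})^{-1}$ in a multiple geometric series to exhibit $f$ as a convergent power series on $\Delta$. Your alternative finish, differentiating under the integral sign and checking $\partial f/\partial\bar z_{i}=0$, is equally valid.
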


\begin{proof}
$f$ is continuous. Let $\overline{\Delta}=\overline{\Delta}(a,r)\subset U$ be a closed polydisc with
distinguished boundary $\partial_{0}\Delta=\{|\zeta_{i}-a_{i}|=r_{i}\ \forall i\}$. Each $f_{j}$
satisfies the iterated Cauchy formula
$f_{j}(\bz)=(2\pi i)^{-n}\int_{\partial_{0}\Delta}f_{j}(\zeta)\prod_{i}(\zeta_{i}-z_{i})^{-1}d\zeta$
for $\bz\in\Delta$; letting $j\to\infty$, using uniform convergence on the compact set
$\partial_{0}\Delta$, the same formula holds for $f$. Expanding
$\prod_{i}(\zeta_{i}-z_{i})^{-1}$ in a multiple geometric series, uniformly convergent for $\bz$ in
a compact subset of $\Delta$ and $\zeta\in\partial_{0}\Delta$, and integrating term by term exhibits
$f$ as a convergent power series on $\Delta$.
\end{proof}

\begin{lemma}[Cauchy inequalities]\label{lem:cauchy}
Let $f$ be holomorphic on a neighbourhood of the closed polydisc $\overline{\Delta}(a,(r,\dots,r))
\subset\C^{n}$. Then for every multi--index $\alpha$,
$|\partial^{\alpha}f(a)|\le\alpha!\,r^{-|\alpha|}\sup_{\overline{\Delta}}|f|$.
\end{lemma}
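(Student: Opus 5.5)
The plan is to reduce to the one-variable Cauchy formula on each coordinate circle, via the iterated Cauchy integral on the distinguished boundary used in the proof of Lemma \ref{lem:weier}. Write $\Delta=\Delta(a,(r,\dots,r))$ and $\partial_{0}\Delta=\{|\zeta_{i}-a_{i}|=r\ \forall i\}$. Since $f$ is holomorphic on a neighbourhood of $\overline{\Delta}$, we may fix $r'>r$ such that $f$ is holomorphic on $\overline{\Delta}(a,(r',\dots,r'))$. For $\bz\in\Delta$, applying the one-variable Cauchy formula successively in $z_{1},\dots,z_{n}$ (Fubini is legitimate since the integrand is continuous on the compact torus) gives
\[
f(\bz)=(2\pi i)^{-n}\int_{\partial_{0}\Delta}f(\zeta)\prod_{i=1}^{n}(\zeta_{i}-z_{i})^{-1}\,d\zeta .
\]

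Next we differentiate under the integral sign at points $\bz$ near $a$. The integrand is smooth in $\bz$ for $\bz$ in a compact subset of $\Delta$, uniformly in $\zeta\in\partial_{0}\Delta$, so the differentiation is justified. Using $\partial_{z_{i}}^{\alpha_{i}}(\zeta_{i}-z_{i})^{-1}=\alpha_{i}!(\zeta_{i}-z_{i})^{-1-\alpha_{i}}$ and setting $\bz=a$ yields
\[
\partial^{\alpha}f(a)=\frac{\alpha!}{(2\pi i)^{n}}\int_{\partial_{0}\Delta}\frac{f(\zeta)}{\prod_{i}(\zeta_{i}-a_{i})^{1+\alpha_{i}}}\,d\zeta .
\]

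Finally, we bound this integral. Parametrize $\zeta_{i}=a_{i}+re^{i\theta_{i}}$, so that $|d\zeta|=r^{n}d\theta$ and $|\zeta_{i}-a_{i}|^{1+\alpha_{i}}=r^{1+\alpha_{i}}$. The modulus is then at most
\[
\alpha!\,(2\pi)^{-n}(2\pi)^{n}\,r^{n}\,r^{-n-|\alpha|}\sup_{\partial_{0}\Delta}|f|\le\alpha!\,r^{-|\alpha|}\sup_{\overline{\Delta}}|f| .
\]

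None of these steps is a serious obstacle. The only point needing care is justifying the iterated Cauchy formula and the differentiation under the integral, both of which follow from the continuity of $f$ on a neighbourhood of the compact torus.
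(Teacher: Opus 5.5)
Your proof is correct and essentially the same as the paper's. Both write $\partial^{\alpha}f(a)/\alpha!$ as the iterated Cauchy integral $(2\pi i)^{-n}\int_{\partial_{0}\Delta}f(\zeta)\prod_{i}(\zeta_{i}-a_{i})^{-\alpha_{i}-1}\,d\zeta$ over the distinguished boundary and then bound it trivially on the torus; the only difference is that you reach this formula by differentiating under the integral sign, whereas the paper reads it off as the Taylor coefficient from the geometric-series expansion of the kernel, as in Lemma \ref{lem:weier}.
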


\begin{proof}
The Taylor coefficients $c_{\alpha}=\partial^{\alpha}f(a)/\alpha!$ satisfy
$c_{\alpha}=(2\pi i)^{-n}\int_{\partial_{0}\Delta}f(\zeta)\prod_{i}(\zeta_{i}-a_{i})^{-\alpha_{i}-1}d\zeta$,
whence $|c_{\alpha}|\le r^{-|\alpha|}\sup_{\partial_{0}\Delta}|f|$.
\end{proof}

\begin{lemma}[Flatness lemma]\label{lem:flat}
Let $B_{r}\subseteq\R^{n}$ be an open ball centred at $0$ and let $g\in C(B_{r})\cap C^{\infty}(B_{r}\setminus\{0\})$
be such that every partial derivative $\partial^{\alpha}g$ ($\alpha\ne0$) extends continuously to
$B_{r}$; call the extension $g_{\alpha}$. Then $g\in C^{\infty}(B_{r})$ and
$\partial^{\alpha}g=g_{\alpha}$ throughout $B_{r}$.
\end{lemma}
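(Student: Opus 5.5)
The plan is to prove the statement one derivative order at a time by induction on $|\alpha|$. The basic tool is the one--variable fundamental theorem of calculus along coordinate segments. Throughout, "$g_{0}:=g$", and we may take $n\ge 2$; the case $n=1$ is identical but simpler, as noted at the end. The inductive claim is that for each multi--index $\alpha$, $\partial^{\alpha}g$ exists on all of $B_{r}$, equals $g_{\alpha}$, and is continuous. Once this holds for every $\alpha$, $g\in C^{\infty}(B_{r})$ follows, because a function whose partial derivatives of all orders exist and are continuous is $C^{\infty}$.

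For the inductive step, fix $\alpha$ for which the claim holds, and fix a coordinate direction $e_{j}$. Away from the origin $g$ is smooth, so $\partial_{j}\partial^{\alpha}g=\partial^{\alpha+e_{j}}g$ there, and this extends continuously as $g_{\alpha+e_{j}}$. It remains to show that $\partial_{j}g_{\alpha}(0)$ exists and equals $g_{\alpha+e_{j}}(0)$. Consider the function $\phi(s):=g_{\alpha}(se_{j})$ on $(-r,r)$. It is continuous, and it is differentiable for $s\ne0$ with $\phi'(s)=g_{\alpha+e_{j}}(se_{j})$. Since $\phi'$ extends continuously to $s=0$, the mean value theorem gives
\[
\frac{\phi(s)-\phi(0)}{s}=\phi'(\xi_{s})\to g_{\alpha+e_{j}}(0)
\]
for some $\xi_{s}$ between $0$ and $s$. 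This is exactly $\partial_{j}g_{\alpha}(0)=g_{\alpha+e_{j}}(0)$.

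There is one subtle point, and it is where I expect the only real care to be needed. At points $\bc\ne0$ whose $e_{j}$--line passes through the origin, the segment argument is never needed, because $g$ is already smooth near every nonzero point and derivatives are local. So the origin is the only point requiring the mean value argument. Continuity of $\partial^{\alpha+e_{j}}g=g_{\alpha+e_{j}}$ on $B_{r}$ is part of the hypothesis. In this way, existence and continuity of all partials of order $|\alpha|+1$ follow from those of order $|\alpha|$, and induction starting from the continuity of $g$ completes the argument. The equality of mixed partials in different orders needs no separate check, since each is identified with the single continuous function $g_{\alpha}$ off the origin and hence, by continuity, at the origin too. For $n=1$ the same mean value argument applies verbatim.
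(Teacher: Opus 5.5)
Your proof is correct, but it reaches differentiability at the origin by a different route from the paper.

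The paper integrates the gradient of $g$ along the radial segment $\{s\bc:\eps\le s\le1\}$. Letting $\eps\downarrow0$ gives $g(\bc)-g(0)=\int_{0}^{1}\sum_{j}c_{j}\,g_{e_{j}}(s\bc)\,ds$, from which it reads off \emph{total} (Fr\'echet) differentiability at $0$ with an explicit $o(\|\bc\|)$ remainder. Higher orders then follow by applying this $C^{1}$ statement to each $g_{e_{j}}$.

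You work only on the coordinate axes. The one--variable mean value theorem applied to $s\mapsto g_{\alpha}(se_{j})$ shows that $\partial_{j}g_{\alpha}(0)$ exists and equals $g_{\alpha+e_{j}}(0)$. You then appeal to the textbook criterion that a function whose partial derivatives of all orders exist and are continuous on an open set is $C^{\infty}$. That criterion is legitimately available here: you have shown the partials exist at $0$ as well, so the staircase mean--value proof of ``continuous partials $\Rightarrow$ differentiable'' still works when a path passes through the origin.

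Your version is slightly more economical at the origin. It needs no chain rule along an oblique path, and your induction over iterated partials handles the bookkeeping of mixed partials cleanly. The paper's radial argument instead proves differentiability at $0$ directly rather than importing the criterion, which fits its stated aim of proving everything from first principles.
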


\begin{proof}
It suffices to prove $g\in C^{1}(B_{r})$ with $\partial_{j}g=g_{e_{j}}$; the general case follows by
induction, applying that statement to $g_{e_{j}}$ (which is continuous on $B_{r}$, smooth off the
origin, and all of whose derivatives extend continuously, being $g_{\alpha+e_{j}}$). Fix
$\bc\in B_{r}\setminus\{0\}$. For $0<\eps\le1$ the segment $\{s\bc:\eps\le s\le1\}$ avoids the
origin, so
$g(\bc)-g(\eps\bc)=\int_{\eps}^{1}\sum_{j}c_{j}\,\partial_{j}g(s\bc)\,ds$. Letting $\eps\downarrow0$
and using the continuity of $g$ at $0$ and of $s\mapsto\sum_{j}c_{j}g_{e_{j}}(s\bc)$ on $[0,1]$,
\[
g(\bc)-g(0)=\int_{0}^{1}\sum_{j}c_{j}\,g_{e_{j}}(s\bc)\,ds .
\]
Hence $\big|g(\bc)-g(0)-\sum_{j}c_{j}g_{e_{j}}(0)\big|\le\|\bc\|_{1}\max_{j}\sup_{0\le s\le1}
|g_{e_{j}}(s\bc)-g_{e_{j}}(0)|=o(\|\bc\|)$ as $\bc\to0$, by continuity of the $g_{e_{j}}$ at $0$.
So $g$ is differentiable at $0$ with $\partial_{j}g(0)=g_{e_{j}}(0)$, and $\partial_{j}g=g_{e_{j}}$
is continuous on $B_{r}$.
\end{proof}

\begin{lemma}[Uniqueness of graded asymptotic expansions]\label{lem:uniqueasy}
Let $f$ be a function defined on a punctured neighbourhood of $0$ in $\R^{n}$ and let
$(b_{p})_{p\ge2}$, $(b_{p}')_{p\ge2}$ be two families of homogeneous polynomials, $\deg b_{p}=\deg b_{p}'=p$,
such that for every $N$,
$f(\bc)-\sum_{p=2}^{N}b_{p}(\bc)=o(\|\bc\|^{N})$ and $f(\bc)-\sum_{p=2}^{N}b_{p}'(\bc)=o(\|\bc\|^{N})$
as $\bc\to0$. Then $b_{p}=b_{p}'$ for all $p$. Consequently, if $f$ is real--analytic at $0$ and
admits $(b_{p})$ as its asymptotic expansion to all orders, then $\sum_{p}b_{p}$ is the Taylor series
of $f$ at $0$ and in particular converges near $0$.
\end{lemma}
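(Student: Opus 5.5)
The plan is to reduce uniqueness to the one-variable case by restricting to rays through the origin and exploiting homogeneity. First, subtracting the two hypotheses at order $N$ gives
\[
\sum_{p=2}^{N}\big(b_{p}(\bc)-b_{p}'(\bc)\big)=o(\|\bc\|^{N})\qquad(\bc\to0)
\]
for every $N$. Put $d_{p}:=b_{p}-b_{p}'$, a homogeneous polynomial of degree $p$. We will show by induction on $p\ge2$ that $d_{p}\equiv0$.

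Suppose $d_{2}=\dots=d_{p-1}=0$ have been established. Take $N=p$ in the display; it now reads $d_{p}(\bc)=o(\|\bc\|^{p})$. Fix a unit vector $\bu$ and set $\bc=s\bu$ with $s\downarrow0$. By homogeneity $d_{p}(s\bu)=s^{p}d_{p}(\bu)$, so
\[
s^{p}\,|d_{p}(\bu)|=o(s^{p}).
\]
This forces $d_{p}(\bu)=0$. As $\bu$ was an arbitrary unit vector, $d_{p}$ vanishes on the unit sphere, and by homogeneity on all of $\R^{n}$. A polynomial vanishing identically on $\R^{n}$ has all coefficients zero, which completes the induction. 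The base case $p=2$ is the same argument with an empty list of lower terms.

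For the consequence, suppose $f$ is real-analytic at $0$. Then near $0$ it equals its convergent Taylor series, and grouping that series by degree gives homogeneous parts $t_{p}$ with $\sum_{p}t_{p}$ convergent. Since $f$ admits $(b_{p})$ as an asymptotic expansion with no terms of degree $\le1$, we get $f(\bc)=o(\|\bc\|)$. Applying the uniqueness argument from degree $0$ upward (the same induction, started at $p=0$) gives $t_{0}=t_{1}=0$.

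Next we need the analytic remainder estimate $f-\sum_{p\le N}t_{p}=O(\|\bc\|^{N+1})$, which holds on a small ball by absolute convergence of the Taylor series there. This shows that $(t_{p})$ is also an asymptotic expansion of $f$ to all orders. The uniqueness just proved then yields $t_{p}=b_{p}$ for every $p$, so $\sum_{p}b_{p}$ is the Taylor series of $f$ and converges near $0$.

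The only point needing care is exactly this remainder estimate, i.e. that a convergent multivariable power series grouped by total degree gives a valid $O(\|\bc\|^{N+1})$ tail. On a polydisc of radius $2r$ where the series converges absolutely, Cauchy's inequalities (Lemma \ref{lem:cauchy}) give $|t_{p}(\bc)|\le M'(\|\bc\|_{\infty}/r)^{p}$ times a polynomial factor in $p$. Summing this geometric-type tail bound over $p>N$ for $\|\bc\|_{\infty}\le r/2$ gives the required estimate.
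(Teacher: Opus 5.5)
Your proposal is correct and follows the paper's own proof: induction on $p$, restriction to rays $\bc=s\bu$ with homogeneity to force $b_{p}=b_{p}'$ on the unit sphere, then grouping the convergent Taylor series by total degree to obtain a competing asymptotic expansion. You also spell out two points the paper only asserts, namely that the degree-$0$ and degree-$1$ parts of the Taylor series vanish because $f=o(\|\bc\|)$, and the $O(\|\bc\|^{N+1})$ tail estimate obtained from coefficient bounds on a polydisc.
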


\begin{proof}
Induction on $p$. If $b_{q}=b_{q}'$ for $2\le q<p$, then subtracting the two hypotheses with $N=p$
gives $b_{p}(\bc)-b_{p}'(\bc)=o(\|\bc\|^{p})$; putting $\bc=t\bu$ with $\|\bu\|=1$ and $t\downarrow0$
gives $t^{p}(b_{p}(\bu)-b_{p}'(\bu))=o(t^{p})$, so $b_{p}=b_{p}'$ on the unit sphere and hence, by
homogeneity, everywhere. For the last statement, a function real--analytic at $0$ is the sum of an
absolutely convergent power series on some polydisc; grouping its monomials by total degree produces
homogeneous polynomials $f_{p}$ with $f(\bc)-\sum_{p\le N}f_{p}(\bc)=O(\|\bc\|^{N+1})$, so
$(f_{p})$ is an asymptotic expansion of $f$ and the first part gives $f_{p}=b_{p}$.
\end{proof}

\begin{remark}
Every result of this paper has been proved from first principles, with the single exception of the
Hartogs--Osgood--Brown extension theorem quoted in Remark \ref{rem:hartogs}, which is used only in
that remark and is not needed for Theorem \ref{thm:main}. The classical existence and uniqueness
theorems for ordinary differential equations and Grönwall's inequality are used only in
\S\ref{sec:flow}.
\end{remark}

\bibliographystyle{unsrt}
\bibliography{references}

\end{document}